\documentclass[11pt,a4]{article}
\usepackage[margin=1.5cm]{geometry}
\usepackage{amsmath,amssymb,amsthm}
\usepackage{bm}                       
\usepackage{graphicx}
\usepackage[dvipsnames]{xcolor}
\usepackage{booktabs}                 
\usepackage{microtype}
\usepackage[numbers,sort&compress]{natbib}
\usepackage[colorlinks=true,linkcolor=blue!50!black,citecolor=green!40!black,
            urlcolor=blue!50!black]{hyperref}

\newcommand{\R}{\mathbb{R}}
\newcommand{\Z}{\mathbb{Z}}
\newcommand{\E}{\mathbb{E}}
\newcommand{\QS}{\mathrm{QS}}
\newcommand{\Nrm}{\mathcal{N}}
\newcommand{\dd}{\,\mathrm{d}}
\newcommand{\Ind}[1]{\mathbf{1}\!\left[#1\right]}
\newcommand{\norm}[1]{\left\lVert #1 \right\rVert}
\newcommand{\inner}[2]{\left\langle #1,\, #2 \right\rangle}
\newcommand{\cj}[1]{\overline{#1}}
\providecommand{\tr}{\operatorname{tr}}

\newcommand{\ket}[1]{\lvert #1\rangle}
\newcommand{\bra}[1]{\langle #1\rvert}
\let\Gauge\undefined                        
\newcommand{\Gauge}{\mathcal{G}}              
\newcommand{\Proj}{P}                         
\newcommand{\Treg}[1]{\mathcal{T}_{#1}}       
\newcommand{\Zset}{\mathcal{Z}}               
\newcommand{\Lat}{\Lambda}                    
\newcommand{\dq}{d_{\Gauge}}                  
\newcommand{\cstar}{c_{\star}}                
\newcommand{\bstar}{b_{\star}}                
\newcommand{\ptgt}{p_{\mathrm{tgt}}}          
\newcommand{\aup}{\alpha_{\uparrow}}          
\newcommand{\adown}{\alpha_{\downarrow}}      

\newcommand{\vdelta}{\bm{\delta}}             
\newcommand{\vx}{\bm{x}}
\newcommand{\vy}{\bm{y}}
\newcommand{\va}{\bm{a}}                      
\newcommand{\vc}{\bm{c}}
\newcommand{\vu}{\bm{u}}                      
\newcommand{\vz}{\bm{z}}                      
\newcommand{\vw}{\bm{w}}                      
\newcommand{\ve}{\bm{e}}                      
\newcommand{\vA}{\bm{A}}                      
\newcommand{\vg}{\bm{g}}                      
\newcommand{\vv}{\bm{v}}                      
\newcommand{\vr}{\bm{r}}                      
\newcommand{\vn}{\bm{n}}                      
\newcommand{\vmu}{\bm{\mu}}                   
\newcommand{\vxi}{\bm{\xi}}
\newcommand{\phasev}{\bm{\varphi}}            
\newcommand{\noisev}{\bm{\delta}}             
\newcommand{\dispv}{\bm{\varphi}_{\mathrm{disp}}}  
\newcommand{\zerov}{0}                        

\newtheorem{theorem}{Theorem}
\newtheorem{corollary}{Corollary}
\newtheorem{proposition}{Proposition}
\newtheorem{lemma}{Lemma}

\theoremstyle{definition}
\newtheorem{definition}{Definition}

\theoremstyle{remark}
\newtheorem{remark}{Remark}
\theoremstyle{plain}
\newcommand{\restatedname}{}
\newtheorem*{restatedthm}{\restatedname}
\newenvironment{restated}[1]{\renewcommand{\restatedname}{#1}\begin{restatedthm}}{\end{restatedthm}}

\title{The Quantum Sphere: A Physically Realizable Optimization Benchmark with Provable Linear Convergence in White- and Black-Box Settings}
\author{Ofer M. Shir\\
Tel-Hai University, Israel\\
\texttt{ofersh@telhai.ac.il}}
\date{}

\begin{document}
\maketitle

\begin{abstract}
We revisit an established Quantum Control objective function, characterize its exact local metric geometry, and reposition it as a rigorous benchmark for Search and Optimization. While the trap-free topology of the landscape, established two decades ago, guarantees the existence of unhindered optimal pathways, global topology alone does not govern convergence speed. Although the so-called \emph{Quantum Sphere} is defined by the fourth power of the control field, we show that its yield gap admits an exact representation as a nonnegative linear combination of $\sin^2$ terms. Using this single identity, we establish a two-sided quadratic sandwich with ratio $4/\pi^2$ on a certified, worst-case quarter-wave region, proving the optimal peak nondegenerate modulo its affine gauge and well-conditioned.
For \emph{white-box} optimization, on a halved (eighth-wave) region, the objective satisfies a Polyak--\L{}ojasiewicz inequality with an explicit constant, granting Gradient Descent a linear rate of $1 - 16/(\pi^4\kappa_r)$, where $\kappa_r$ is the condition number of the relevant Hessian block. For \emph{black-box} optimization, we prove that the elitist single-parent Evolution Strategy converges linearly---almost surely and in expected hitting time---from a certified sublevel set, despite the landscape's continuous gauge symmetry and its lattice of periodic optima; the certified level is explicit whenever the landscape is trap-free, and near the optimum the strategy provably sees only the quadratic tangent of the peak. We corroborate the theoretical results by systematic numerical simulations. Finally, we show that additive input (actuator) noise reweights each term of the representation by a Debye--Waller factor without disrupting the landscape's structure, and outline open challenges regarding output (signal) noise, far-field plateau escape, angular-synchronization certificates, and another physically scalable benchmark (SHG with \emph{Glass}).

\end{abstract}

\section{Introduction}\label{sec:introduction}
In Search and Optimization, \emph{benchmarking} is the systematic empirical evaluation and comparison of competing algorithms to rigorously map their performance profiles. For \textbf{practitioners}, it provides a data-driven framework for algorithm selection and fine-tuning; for \textbf{theoreticians}, it serves as a vital bridge that transforms abstract concepts into practical techniques while surfacing unexpected empirical behavior. Yet, traditional benchmarks pull in two opposing directions. On one side stand \emph{synthetic} suites---e.g., quadratic forms and Rastrigin-type composites---whose analytic transparency makes them the substrate of convergence theory, but whose difficulty is an artifact of construction rather than a property of systems one would actually wish to optimize. On the other side stand \emph{real-world} objectives---simulators, learned surrogates, and physical assays \cite{SeqExpEA_TutorialGECCO2022}---that are meaningful precisely because nature or engineering defines them, yet remain opaque to analysis: lacking analytical representations, certified geometry, or provable rates. Consequently, realizable benchmark functions that are simultaneously analyzable are pressingly needed. 

Modern optimization theory provides powerful tools across both white-box and black-box paradigms. In first-order optimization, the Polyak--\L{}ojasiewicz (PL) condition \cite{lojasiewicz1963propriete,polyak1963gradient} is celebrated for ensuring linear convergence without requiring strict convexity \cite{karimi2016linear}, extending the classical paradigm of smooth, strongly convex objectives \cite{Boyd} to structured non-convex landscapes. 
Parallel to gradient-based theory, randomized search heuristics \cite{Auger2011_TRS} navigate black-box environments by relying exclusively on derivative-free objective function evaluations and their values' \textit{ranking}. 
The rigorous analysis of these methods originated with scale-invariant progress rates of Evolution Strategies (ESs) \cite{Baeck2013contemporary,ESchapter2025} on the Sphere function \cite{Rechenberg1973,Beyer2001} and general convex quadratic forms \cite{Jaegerskuepper2006}. Modern treatments have matured into sophisticated drift analyses, providing linear convergence proofs for classical mechanisms such as the 1/5$^{\text{th}}$ success rule \cite{MorinagaAkimoto2019,AAGM2022}, while Information-Geometric Optimization provides a unified theoretical framework emphasizing robustness through monotonic invariance \cite{IGO}.

In the domain of Quantum Control (QC) \cite{Koch2022}, optimization exhibits a profound geometric regularity encapsulated by the renowned Trap-Freedom results \cite{RHR2004}. 
While the conditional scope of this paradigm has sparked theoretical debate \cite{PechenTannor2011}, a sheer volume of empirical evidence in experimental systems consistently demonstrate highly efficient algorithmic trajectories \cite{Hersch92,Roslund2009} and landscape learning \cite{Shir-FOCAL}, driving inquiries into exactly why optimal controls are so ``easy to find'' \cite{HoRabitz2006}. 
Nevertheless, trap-free topologies do not imply explicit optimization rates.

To bridge this gap, this paper explores an objective function residing squarely at this intersection: physically realizable and theoretically analyzable -- the \emph{Quantum Sphere}. Instantiated via \emph{Second Harmonic Generation} (SHG)---a non-linear optical process wherein an intense electric field interacts with a non-linear medium to generate output photons possessing the combined energy of two fundamental input photons \cite{ShirBaeck2007}---this objective represents a ubiquitous calibration task for spectral-phase laser pulse shaping in QC \cite{Tannor85,Hersch92,Hersch00,Gerber07,Roslund06,LabES}. 
Combining genuine experimental relevance \cite{Shir08_QCE,Roslund2009} with an elegant, mathematically tractable formulation, we demonstrate that the Quantum Sphere admits an exact analytic representation, sharp local geometry, and non-asymptotic convergence guarantees for both white-box first-order and black-box derivative-free optimization.

\paragraph{Contributions.}
We recast a canonical QC objective function as a certified optimization benchmark, supplying the metric data that the trap-free topology omits:
\begin{enumerate}
    \renewcommand{\labelenumi}{(\roman{enumi})}
    \item \emph{Exact Representation:} An exact representation of the SHG yield gap as a nonnegative combination of $\sin^2$ terms in linear forms on the shell hypergraph (Thm.~\ref{thm:exact}), with the gauge kernel, saturation level, and far-field plateau height identified in closed form (Cors.~\ref{cor:gauge}--\ref{cor:dom});
    \item \emph{Certified Local Geometry:} A two-sided quadratic sandwich $\frac{4}{\pi^2}Q \le G \le Q$ on a certified quarter-wave region, with the lower constant attained and the upper one approached at the optimum, proving the transform-limited peak nondegenerate modulo gauge and well-conditioned (Cor.~\ref{cor:sandwich});
    \item \emph{White-box Guarantee:} Convexity and a Polyak--\L{}ojasiewicz inequality with an explicit constant on the halved (eighth-wave) region, yielding the Gradient Descent rate $1-16/(\pi^4\kappa_r)$ (Thms.~\ref{thm:whitebox-structure}--\ref{thm:whitebox-gd});
    \item \emph{Black-box Guarantee:} Linear convergence, almost surely and in expected hitting time, of the elitist $(1{+}1)$-ES with the 1/5$^{\text{th}}$ success rule from a certified sublevel set, obtained by reducing the affine gauge exactly and decomposing the sublevel sets over the wrap lattice (Lemma~\ref{lem:barrier}, Thm.~\ref{thm:blackbox}); the certified level is explicit for trap-free instances, and near the optimum the ES provably sees only the quadratic tangent (Prop.~\ref{prop:local}), whose relevant dimension is the rank $r$ rather than $N$;
    \item \emph{Empirical Validation:} Numerical corroboration across $N \in \{32, 64, 80, 160\}$, including the terminal-regime prediction and a multistart search that found no traps; a benchmark specification (Section~\ref{sec:def_qsphere}); and
    \item \emph{Noise \& Open Challenges:} The exact noise-averaged landscape under additive input (actuator) noise, alongside open challenges that culminate in two physically generated companions: a scalable benchmark family (SHG with \emph{Glass}) and a hard minimization counterpart equivalent to low-autocorrelation sequence design.
\end{enumerate}

\section{The Quantum Sphere Landscape}\label{sec:def_qsphere}
In numerical optimization, the decision variables---commonly denoted $\vx\in\R^N$---represent herein a discretized version of a continuous phase function $\varphi(\omega)$, where the $N$ coordinates correspond to individual pixels.
Then, the complex spectral field is $E(\omega)=A(\omega)\,e^{i\varphi(\omega)}$, with real, nonnegative amplitude $A$ fixed by the measured spectral intensity, $|E|^{2}=A^{2}$, and controllable spectral phase $\varphi$; $E$ is the spectrum of the control field in Eq.~\eqref{eq:tdse} below.

\begin{definition}[Quantum Sphere objective function]\label{def:qs}
The total SHG yield (under dispersion-free settings) is defined as
\begin{equation}
\begin{aligned}
\QS(\varphi) &= \int_{-\infty}^{\infty}\bigl|E_{2}(\omega)\bigr|^{2}\dd\omega, \\
\text{where } E_{2}(\omega) &= \int_{-\infty}^{\infty}E(\omega')\,E(\omega-\omega')\dd\omega' .
\end{aligned}
\label{eq:yield}
\end{equation}
\end{definition}
\paragraph{Quantum-control origin.}
In QC, the shaped pulse is the control field of a quantum system with field-free Hamiltonian $H_0$ and dipole operator $\mu$, whose state obeys the time-dependent Schr\"odinger equation
\begin{equation}
i\hbar\,\partial_t\ket{\psi(t)}=\bigl[H_0-\mu\,\mathcal{E}(t)\bigr]\ket{\psi(t)},
\qquad
\mathcal{E}(t)=\int E(\omega)\,e^{-i\omega t}\dd\omega+\mathrm{c.c.},
\label{eq:tdse}
\end{equation}
and a typical objective is a transition probability $P_{i\to f}=|\bra{f}U(T)\ket{i}|^2$, with $U$ the propagator of \eqref{eq:tdse}---the setting of the trap-free landscape results~\cite{RHR2004}. 
Nonresonant two-photon transitions establish an explicit link to Definition~\ref{def:qs}. Within second-order perturbation theory---and for intermediate levels far-detuned relative to the pulse bandwidth---the transition amplitude from state $i$ to state $f$ with two-photon frequency $\omega_{fi}$ is given by~\cite{Meshulach1998}
\begin{equation}
c^{(2)}_{f}\;\propto\;\int E(\omega')\,E(\omega_{fi}-\omega')\dd\omega'\;=\;E_2(\omega_{fi}),
\label{eq:tpa}
\end{equation}
so $P_{i\to f}\propto|E_2(\omega_{fi})|^2$ depends on the phase $\varphi$ only through the second-harmonic field of \eqref{eq:yield}. Summed over a spectrally flat band of final states with constant couplings---equivalently, detected by frequency doubling in a thin nonlinear crystal---these probabilities add up to the total yield $\QS(\varphi)$. The Quantum Sphere is thus the weak-field, two-photon limit of a QC landscape, and the standard calibration objective of pulse-shaping experiments \cite{LabES}.

\noindent The maximizer is $\varphi^\star=\text{const}$ (shaping the so-called transform-limited (TL) pulse), up to the two degrees of freedom (global phase; linear phase $=$ delay); we fix the global-phase gauge so that $\varphi^\star\equiv 0$. Throughout, $\varphi$ denotes the spectral phase and $\delta$ an additive perturbation of it. Consider the perturbed phase $\varphi=\varphi^\star+\delta$ and define the \emph{yield gap}
\begin{equation}
G(\delta)\;:=\;\QS^\star-\QS(\varphi^\star+\delta)\;\ge\;0 ,
\qquad
\QS^\star:=\QS(\varphi^\star).
\label{eq:gapdef}
\end{equation}
For each $\omega$, we define the nonnegative shell weight
\begin{equation}
\mu_\omega(\omega') := A(\omega')\,A(\omega-\omega') \ge 0 .
\end{equation}
\begin{theorem}[Exact representation; the Quantum Sphere]\label{thm:exact}
For every measurable phase perturbation $\delta$,
\begin{equation}
\boxed{%
\begin{aligned}
G(\delta) = {}&2\int\dd\omega\iint\dd\omega'\,\dd\omega''\; \mu_\omega(\omega')\,\mu_\omega(\omega'') 
 \sin^2\!\bigl(\tfrac{1}{2}\Delta_\omega[\delta](\omega',\omega'')\bigr)
\end{aligned}%
}
\label{eq:trig}
\end{equation}
where
$\Delta_\omega[\delta](\omega',\omega'')
 =\delta(\omega')+\delta(\omega-\omega')
 -\delta(\omega'')-\delta(\omega-\omega'')$
is the four-point phase difference between the two partitions
$(\omega',\,\omega-\omega')$ and $(\omega'',\,\omega-\omega'')$ of the
energy shell of total frequency $\omega$, each obeying the SHG
conservation constraint $\omega_1+\omega_2=\omega$ solved by
$\omega_2=\omega-\omega_1$.
\end{theorem}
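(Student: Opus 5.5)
The plan is to expand $|E_2(\omega)|^2$ as a double integral and then compare it with its value at $\varphi^\star\equiv 0$. With $\varphi=\delta$ and $E=Ae^{i\delta}$, we have $E_2(\omega)=\int \mu_\omega(\omega')\,e^{i\theta_\omega(\omega')}\dd\omega'$, where $\theta_\omega(\omega'):=\delta(\omega')+\delta(\omega-\omega')$. Hence
\[
|E_2(\omega)|^2=\iint \mu_\omega(\omega')\mu_\omega(\omega'')\,e^{i(\theta_\omega(\omega')-\theta_\omega(\omega''))}\dd\omega'\dd\omega''
=\iint \mu_\omega(\omega')\mu_\omega(\omega'')\cos\bigl(\Delta_\omega[\delta](\omega',\omega'')\bigr)\dd\omega'\dd\omega''.
\]
The passage to the cosine holds because $\Delta_\omega$ is antisymmetric under swapping $\omega'\leftrightarrow\omega''$ while the weight $\mu_\omega(\omega')\mu_\omega(\omega'')$ is symmetric. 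The imaginary (sine) part therefore integrates to zero. The real part of $|E_2|^2$ is in any case all that matters, since $|E_2|^2$ is real.

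At the optimum, $\delta=0$ gives $|E_2^\star(\omega)|^2=\bigl(\int\mu_\omega\bigr)^2=\iint\mu_\omega(\omega')\mu_\omega(\omega'')\dd\omega'\dd\omega''$. Subtracting and using $1-\cos x=2\sin^2(x/2)$ pointwise gives the integrand of \eqref{eq:trig}. Integrating over $\omega$ then yields $G(\delta)=\QS^\star-\QS(\delta)$ in the boxed form. Nonnegativity of $G$, asserted in \eqref{eq:gapdef}, also drops out as a byproduct, since every factor is nonnegative.

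The main obstacle is justifying these manipulations for every measurable $\delta$: expanding $|\int f|^2$ as a double integral, applying Fubini/Tonelli, and splitting $\QS^\star-\QS(\delta)$ into a difference of integrals. I will assume, as is implicit in Definition~\ref{def:qs}, that $\QS^\star<\infty$, i.e.\ $\int(\int\mu_\omega)^2\dd\omega<\infty$; for instance, this holds if $A\in L^1\cap L^2$ by Young's inequality, since then $A*A\in L^2$. For a.e.\ $\omega$ we then have $\mu_\omega\in L^1$, so $E_2(\omega)$ is absolutely convergent and the product $\mu_\omega(\omega')\mu_\omega(\omega'')$ is integrable on $\R^2$. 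The modulus of the complex integrand equals this product, so Fubini applies to the double-integral expansion. The identity then holds pointwise in $\omega$ as an equality of finite quantities.

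Finally, the pointwise integrand $2\mu\mu\sin^2$ is nonnegative and bounded by $2\mu\mu$, so Tonelli allows the triple integral to be formed in any order. Both $\QS^\star$ and $\QS(\delta)\le\QS^\star$ are finite, because $|E_2(\omega)|\le\int\mu_\omega$. This makes the subtraction under the $\omega$-integral legitimate, which completes the argument.
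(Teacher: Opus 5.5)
Your proposal is correct and follows the paper's route: expand $|E_2(\omega)|^2$ as a double integral against $\mu_\omega(\omega')\mu_\omega(\omega'')$, use the swap symmetry to replace $e^{i\Delta_\omega}$ by $\cos\Delta_\omega$, subtract the transform-limited value $\bigl(\int\mu_\omega\bigr)^2$, and apply $1-\cos x=2\sin^2(x/2)$. Your Fubini/Tonelli justification, under the implicit and necessary assumption $\QS^\star<\infty$, is an addition the paper's proof leaves out.
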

\begin{proof}
The yield gap is $G(\delta) = \QS^\star - \int |E_2(\omega)|^2 \dd\omega$.
By writing the second-harmonic field as
$E_2(\omega)=\int \dd\omega'\,\mu_\omega(\omega')\,
 e^{i(\delta(\omega')+\delta(\omega-\omega'))}$
and expanding the intensity:
\[
|E_2(\omega)|^2
 =\iint \dd\omega'\,\dd\omega''\;
   \mu_\omega(\omega')\,\mu_\omega(\omega'')\,
   e^{i\Delta_\omega[\delta](\omega',\omega'')} .
\]
The measure $\mu_\omega(\omega')\,\mu_\omega(\omega'')$ is symmetric under
the swap $\omega'\leftrightarrow\omega''$, which sends
$\Delta_\omega\mapsto-\Delta_\omega$; the odd part therefore cancels,
leaving $\cos\Delta_\omega$ in place of $e^{i\Delta_\omega}$. Expressing
the TL peak as
$\QS^\star = \int\dd\omega\iint\dd\omega'\,\dd\omega''\,
 \mu_\omega(\omega')\,\mu_\omega(\omega'')$
and subtracting, the two integrals combine under the shared measure into
a single integrand
$\mu_\omega(\omega')\,\mu_\omega(\omega'')
 \bigl[1 - \cos\bigl(\Delta_\omega[\delta](\omega',\omega'')\bigr)\bigr]$.
Applying the half-angle identity $1 - \cos x = 2\sin^2(x/2)$ yields
Equation~\eqref{eq:trig}.
\end{proof}

\noindent \textbf{From this point on we work in the discrete formalism}---the format in which the objective is computed and realized on a physical pulse shaper. Boldface denotes $\R^N$-valued vectors, components and scalars italic; the continuum perturbation $\delta(\omega)$ of Theorem~\ref{thm:exact} becomes the decision vector $\vdelta=(\delta_0,\dots,\delta_{N-1})\in\R^N$ sampling $\delta$ on the pixel grid, at fixed amplitude $\vA\in\R_{\ge0}^N$ and TL gauge $\varphi^\star=0$. 
Every statement below admits a verbatim continuum analogue (each per-shell sum $\sum_{i\in S_s}$ becoming an integral $\int\dd\omega'$ against $\mu_\omega(\omega')$, and $\sum_s$ becoming $\int\dd\omega$); no proof relies on discreteness. For each $s\in\{0,\dots,2N-2\}$, the \emph{shell}
\begin{equation}
S_s \;=\; \{\, i \mid 0\le i\le N-1,\; 0\le s-i\le N-1 \,\}
\label{eq:shell}
\end{equation}
comprises index pairs summing to $s$. 
The objective is
\begin{equation}
Y(\vdelta)\;=\;\sum_{s=0}^{2N-2}\;\Bigl|\,
\sum_{i\in S_s} A_i\,A_{s-i}\; e^{\,\mathrm{i}(\delta_i+\delta_{s-i})}\Bigr|^{2},
\label{eq:yield-disc}
\end{equation}
maximized at $\vdelta=0$ (non-uniquely; see Cor.~\ref{cor:gauge}), with yield gap $G(\vdelta)=Y(0)-Y(\vdelta)\ge0$.

\begin{corollary}[Discrete Representation]\label{cor:sin2-disc}
For every $\vdelta\in\R^{N}$,
\begin{equation}
\label{eq:sin2-sum}
\begin{aligned}
G(\vdelta) = 2\sum_{s=0}^{2N-2} & \sum_{i,p\,\in S_s} A_i A_{s-i}\,A_p A_{s-p}  \sin^{2}\!\Bigl( \tfrac{\delta_i+\delta_{s-i}-\delta_p-\delta_{s-p}}{2} \Bigr),
\end{aligned}
\end{equation}
where the inner sum runs over all ordered pairs $(i,p) \in S_s \times S_s$,
with per-shell phases $v^{(s)}_i=\delta_i+\delta_{s-i}$ and weights
$\alpha^{(s)}_i=A_iA_{s-i}$.
\end{corollary}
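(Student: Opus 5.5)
The plan is to repeat the proof of Theorem~\ref{thm:exact} verbatim in the discrete setting, with integrals replaced by finite sums. Finiteness also removes all measurability and integrability issues. Fix a shell $s$ and write the per-shell amplitude as
\[
E^{(s)}(\vdelta)=\sum_{i\in S_s}\alpha^{(s)}_i e^{\mathrm{i} v^{(s)}_i},
\]
with weights $\alpha^{(s)}_i=A_iA_{s-i}\ge 0$ and phases $v^{(s)}_i=\delta_i+\delta_{s-i}$. Then $Y(\vdelta)=\sum_s|E^{(s)}(\vdelta)|^2$. Since $v^{(s)}_i=0$ at $\vdelta=0$, we get
\[
Y(0)=\sum_s\Bigl(\sum_{i\in S_s}\alpha^{(s)}_i\Bigr)^2=\sum_s\sum_{i,p\in S_s}\alpha^{(s)}_i\alpha^{(s)}_p .
\]

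Next, expand each modulus squared as a double sum over ordered pairs:
\[
|E^{(s)}|^2=\sum_{i,p\in S_s}\alpha^{(s)}_i\alpha^{(s)}_p\, e^{\mathrm{i}(v^{(s)}_i-v^{(s)}_p)} .
\]
The index set $S_s\times S_s$ and the weight $\alpha^{(s)}_i\alpha^{(s)}_p$ are invariant under the swap $i\leftrightarrow p$, while the swap negates the phase difference. Averaging the sum with its swapped copy replaces $e^{\mathrm{i}x}$ by $\cos x$, so the imaginary parts cancel exactly. This also reflects the fact that $|E^{(s)}|^2$ is real.

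Subtracting shell by shell gives
\[
G(\vdelta)=\sum_s\sum_{i,p\in S_s}\alpha^{(s)}_i\alpha^{(s)}_p\bigl[1-\cos(v^{(s)}_i-v^{(s)}_p)\bigr].
\]
The half-angle identity $1-\cos x=2\sin^2(x/2)$ then yields Eq.~\eqref{eq:sin2-sum}, because $v^{(s)}_i-v^{(s)}_p=\delta_i+\delta_{s-i}-\delta_p-\delta_{s-p}$.

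There is no real obstacle. The one point needing care is the bookkeeping convention: the inner sum runs over ordered pairs, so each unordered pair with $i\ne p$ is counted twice and the diagonal terms vanish. The factor $2$ must be kept consistent with this convention rather than rewritten as a sum over $i<p$ with factor $4$.

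Alternatively, one could obtain the statement directly as a specialization of Theorem~\ref{thm:exact}: take $A$ and $\delta$ to be piecewise constant on unit pixels and identify the shell measure with counting measure on $S_s$. The direct finite computation above is cleaner, however, and avoids discretization artifacts at shell boundaries.
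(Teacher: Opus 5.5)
Your proposal is correct and takes essentially the paper's route. The paper likewise splits $G=\sum_s G_s$ with $G_s=\bigl(\sum_{i\in S_s}\alpha^{(s)}_i\bigr)^2-\bigl|\sum_{i\in S_s}\alpha^{(s)}_i e^{\mathrm{i}v^{(s)}_i}\bigr|^2$, rewrites this per shell as the ordered-pair $1-\cos$ sum (the swap-symmetry argument of Theorem~\ref{thm:exact}), applies the half-angle identity, and then sums over $s$.
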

\begin{proof}[Proof sketch]
Expand each shell's coherent sum via the identity 
\[ \bigl(\sum_i\alpha^{(s)}_i\bigr)^2-\bigl|\sum_i\alpha^{(s)}_i e^{\mathrm{i}v^{(s)}_i}\bigr|^2 = 2\sum_{i,p}\alpha^{(s)}_i\alpha^{(s)}_p\sin^2\!\bigl(\tfrac{v^{(s)}_i-v^{(s)}_p}{2}\bigr);\] summing over $s$ gives \eqref{eq:sin2-sum}. See Appendix~\ref{s:thm12}.
\end{proof}
Maximizing the per-shell coherent sum $\sum_{i\in S_s}\alpha^{(s)}_i e^{\mathrm{i}v^{(s)}_i}$ is an instance of \emph{Angular Synchronization} \cite{SINGER2011}, which is generally a non-convex problem. 

\begin{corollary}[Gauge Zero-Set and Hessian Kernel]\label{cor:gauge}
Suppose the support of $\vA$ is a contiguous index window $W$ with $|W|\ge3$. Then $G(\vdelta)=0$ if and only if $\vdelta$ is affine \emph{modulo $2\pi$} on $W$: there exist $a,b\in\R$ such that $\delta_i\equiv a+b\,i \pmod{2\pi}$ for all $i\in W$. 
\end{corollary}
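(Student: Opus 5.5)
We work directly from Corollary~\ref{cor:sin2-disc}. Every summand in \eqref{eq:sin2-sum} is a nonnegative weight times $\sin^2$, so $G(\vdelta)=0$ holds if and only if every term with positive weight vanishes. The weight $A_iA_{s-i}A_pA_{s-p}$ is positive exactly when $i,s-i,p,s-p\in W$. Hence $G(\vdelta)=0$ is equivalent to the family of congruences
\[
\delta_i+\delta_{s-i}\equiv\delta_p+\delta_{s-p}\pmod{2\pi}
\quad\text{whenever } i,s-i,p,s-p\in W .
\]
The proof therefore reduces to a purely combinatorial statement about these four-point congruences on an integer window.

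\textbf{Sufficiency.} If $\delta_i\equiv a+bi \pmod{2\pi}$ on $W$, then $\delta_i+\delta_{s-i}\equiv 2a+bs$. This does not depend on $i$, so every $\Delta$ is $\equiv0$ and each $\sin^2(\Delta/2)=0$. Note that $\sin^2(x/2)$ vanishes exactly on $x\in2\pi\Z$. This is why the statement must be read modulo $2\pi$ rather than as exact equality.

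\textbf{Necessity.} Write $W=\{m,m+1,\dots,M\}$ with $M-m\ge2$, and define the increments $d_i=\delta_{i+1}-\delta_i$ for $m\le i<M$. For each $i$ with $m\le i$ and $i+2\le M$, take the shell $s=2i+2$ with partitions $(i,i+2)$ and $(i+1,i+1)$; all indices lie in $W$. The congruence gives $\delta_i+\delta_{i+2}\equiv2\delta_{i+1}$, that is, $d_{i+1}\equiv d_i\pmod{2\pi}$. Chaining these over the window shows that all increments are congruent to a common value $b$. Telescoping then gives $\delta_i\equiv\delta_m+b(i-m)$, which is affine with $a=\delta_m-bm$. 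The hypothesis $|W|\ge3$ is used exactly here: it guarantees at least one nontrivial shell. With $|W|\le2$ every function on $W$ is affine anyway, but the argument needs the triple.

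\textbf{Main obstacle.} The only delicate point is making sure that the chosen partitions are admissible, meaning both partitions lie in the same shell $S_s$ and all four indices lie in $W$. This is where contiguity of the support is essential. If $W$ had gaps, the consecutive triples $(i,i+1,i+2)$ could fail to lie in the support, and the chain of increment equalities would break. The ordered-pair diagonal $i=p$ contributes nothing, and the appearance of $\delta_{i+1}$ twice in the diagonal partition $(i+1,i+1)$ is allowed by \eqref{eq:shell}. With these checks in place the argument closes. The Hessian-kernel part of the corollary's title would follow by linearizing, since $\sin^2(x/2)\approx x^2/4$ gives the same linear constraints $d_{i+1}=d_i$ with kernel $\mathrm{span}\{\mathbf 1,(i)_i\}$ restricted to $W$.
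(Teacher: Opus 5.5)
Your proof is correct and follows the paper's argument. Both reduce $G(\vdelta)=0$ to the vanishing of every positively weighted $\sin^2$ term in Corollary~\ref{cor:sin2-disc}; both then use, at each interior index $j$ of $W$, the even shell $s=2j$ pairing $(j,j)$ against $(j-1,j+1)$ to force the second difference to vanish modulo $2\pi$, with the converse coming from $\delta_i+\delta_{s-i}\equiv 2a+bs$ being independent of $i$. Your increment-and-telescoping step just makes explicit what the paper summarizes as ``this is exactly affinity.''
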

\noindent The kernel of the Hessian $H=\nabla^2G(0)$ restricted to $W$ is exactly the two-dimensional space of affine phases; we write $\Gauge:=\ker H$ for this \emph{gauge} subspace, spanned by the constant vector $\mathbf 1=(1,\dots,1)$ and the linear ramp $\vr=(0,1,\dots,N-1)$.
\begin{proof}[Proof sketch]
$G(\vdelta)=0$ forces $v^{(s)}_i=\delta_i+\delta_{s-i}$ constant in $i$ (mod $2\pi$) on each shell; the even shell $s=2i$ pairs $(i,i)$ against $(i-1,i+1)$, forcing the discrete second difference $\delta_{i-1}-2\delta_i+\delta_{i+1}\equiv0\pmod{2\pi}$ at every interior index of $W$ (whence $|W|\ge3$), i.e.\ affinity $\delta_i\equiv a+b\,i$; the converse is invariance under global phase and linear delay. For the kernel, the analytic SHG Hessian was first derived in closed form by \cite{Shir-FOCAL} for the \emph{yield} $J_{\mathrm{SHG}}=Y$ at $\phi^\star=0$, so $H=\nabla^2G(0)=-\mathbf{H}_{\mathrm{SHG}}$ up to the objective's overall sign;\footnote{That work specializes to a Gaussian amplitude on a truncated window; our representation is amplitude-general and discrete, and the null-space characterization is not treated there.} the same second-difference argument over $\R$ recovers the 2D affine kernel. See Appendix~\ref{s:thm12}.
\end{proof}

\paragraph{Zero set: gauge and wrap lattice.}
By Corollary~\ref{cor:gauge}, the zero set of $G$ is
\begin{equation}
\Zset\;=\;2\pi\Z^N+\Gauge ,
\label{eq:zeroset}
\end{equation}
the union of the two symmetries of the objective. The affine gauge $\Gauge$ is a continuous 2D symmetry and is \emph{local}: it is the kernel of $H$. The per-pixel wrap lattice $2\pi\Z^N$ is a discrete symmetry and is \emph{global}: $Y$ depends on $\vdelta$ only through $e^{\mathrm{i}\vdelta}$. We remove the gauge by working modulo $\Gauge$, with the orthogonal projector $P$ onto $\Gauge^\perp$ and the gauge-quotient norm
\begin{equation}
\dq(\vdelta)\;:=\;\min_{\vg\in\Gauge}\norm{\vdelta-\vg}_\infty ;
\label{eq:dq}
\end{equation}
the wrap lattice then survives as a discrete lattice of optima in the quotient (Appendix~\ref{s:blackbox}). 

\begin{definition}[Weighted $\sin^2$ class]\label{def:class}
Given weights $w_k\ge0$ and vectors $\va_k\in\R^N$, $k\in K$ finite, let
\begin{equation}
G(\vdelta)\;=\;\sum_{k\in K} w_k\,
\sin^{2}\!\bigl(\inner{\va_k}{\vdelta}\bigr).
\label{eq:class}
\end{equation}
The Quantum Sphere is an instance of this class, with $k=(s,i,p)$ ranging over
ordered shell pairs and $\ve_i$ the $i$-th standard basis vector of $\R^N$,
\[
\va_k=\tfrac12\bigl(\ve_i+\ve_{s-i}-\ve_p-\ve_{s-p}\bigr),
\qquad
w_k = 2\,A_iA_{s-i}A_pA_{s-p},
\]
matching \eqref{eq:sin2-sum} exactly. Both the forms $\va_k$ and the weights $w_k$ are those of the gap expanded about the TL peak $\varphi^\star=0$; the representation then holds at every $\vdelta$, but the constants $w_k$ are fixed by that reference.\\ 
The quadratic tangent and the Hessian at the optimum are
\begin{equation}
Q(\vdelta)=\sum_{k\in K} w_k\inner{\va_k}{\vdelta}^{2}
=\tfrac12\,\vdelta^{\!\top}H\vdelta,
\quad
H=\sum_{k\in K}2\,w_k\,\va_k^{\vphantom{\top}} \va_k^{\!\top}.
\label{eq:QH}
\end{equation}
\end{definition}
\begin{corollary}[Global domination and saturation]\label{cor:dom}
Define the \emph{incoherent yield}
\begin{equation}
Y_{\mathrm{inc}}
\;:=\;
2\sum_{s}\sum_{i\in S_s}\bigl(\alpha^{(s)}_i\bigr)^{2}
\;-\;\sum_{s\ \mathrm{even}}\bigl(\alpha^{(s)}_{s/2}\bigr)^{2}.
\label{eq:yinc}
\end{equation}
Then, for every $\vdelta\in\R^{N}$,
\begin{equation}
G(\vdelta)\;\le\;Q(\vdelta)
\label{eq:dom}
\end{equation}
and
\begin{equation}
G(\vdelta)\;\le\;2\bigl(Y(0)-Y_{\mathrm{inc}}\bigr);
\label{eq:sat}
\end{equation}
moreover, when $\vdelta$'s coordinates are i.i.d.\ uniform on $[-\pi,\pi]$,
\begin{equation}
\E\bigl[G(\vdelta)\bigr]\;=\;Y(0)-Y_{\mathrm{inc}}
\qquad\text{exactly,}
\label{eq:avg}
\end{equation}
the \emph{far-field plateau height}.
\end{corollary}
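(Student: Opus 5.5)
The plan is to derive all three claims from the weighted $\sin^2$ representation of Corollary~\ref{cor:sin2-disc}, written in the class form of Definition~\ref{def:class}: $G(\vdelta)=\sum_k w_k\sin^2(\inner{\va_k}{\vdelta})$ with $w_k\ge0$. For the domination bound \eqref{eq:dom}, use the elementary inequality $\sin^2 x\le x^2$ term by term. Since every $w_k\ge0$, summing gives $G(\vdelta)\le\sum_k w_k\inner{\va_k}{\vdelta}^2=Q(\vdelta)$, which is \eqref{eq:QH}. Nothing more is needed for this part.

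For the saturation bound \eqref{eq:sat}, split the index set $K=\{(s,i,p)\}$ into \emph{trivial} and \emph{nontrivial} pairs.

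\textbf{Trivial pairs.} These are the pairs with $\va_k=0$, for which the $\sin^2$ term vanishes identically. I claim these are exactly the pairs with $p=i$ or $p=s-i$, since these make the multisets $\{i,s-i\}$ and $\{p,s-p\}$ coincide. Their total weight is computed as follows.
\begin{itemize}
\item The pairs $p=i$ contribute $2\sum_s\sum_{i\in S_s}(\alpha^{(s)}_i)^2$.
\item The pairs $p=s-i$ contribute the same amount, because $\alpha^{(s)}_{s-i}=\alpha^{(s)}_i$.
\item The two families overlap only at $i=p=s/2$ for even $s$, which must be subtracted once: $2(\alpha^{(s)}_{s/2})^2$.
\end{itemize}
The trivial weight is therefore exactly $2Y_{\mathrm{inc}}$ by \eqref{eq:yinc}.

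\textbf{Nontrivial pairs.} The total weight over all pairs is $2\sum_s(\sum_i\alpha^{(s)}_i)^2=2Y(0)$. Bounding $\sin^2\le1$ on the nontrivial terms then gives $G\le 2Y(0)-2Y_{\mathrm{inc}}$.

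For the average \eqref{eq:avg}, write $\sin^2\theta=\tfrac12(1-\cos2\theta)$ with $2\inner{\va_k}{\vdelta}=\delta_i+\delta_{s-i}-\delta_p-\delta_{s-p}=\inner{\vn_k}{\vdelta}$, where $\vn_k\in\Z^N$. Under i.i.d.\ uniform coordinates on $[-\pi,\pi]$, the expectation $\E[e^{\mathrm{i}\inner{\vn}{\vdelta}}]$ factorizes. It vanishes unless $\vn=0$, because $\E e^{\mathrm{i}m\delta_j}=0$ for every nonzero integer $m$. Hence each nontrivial term has expectation $w_k/2$, and trivial terms contribute $0$. This gives $\E G=\tfrac12(2Y(0)-2Y_{\mathrm{inc}})=Y(0)-Y_{\mathrm{inc}}$.

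The step I expect to require the most care is the combinatorial characterization $\vn_k=0\iff\{i,s-i\}=\{p,s-p\}$ as multisets, i.e.\ $p\in\{i,s-i\}$. I would verify it by considering the coefficient of $\delta_i$ in $\vn_k$, handling the diagonal case $i=s-i$ (coefficient $2$) and the case $p=s-p$ separately. This is the same fact that drives both the bound \eqref{eq:sat} and the exact average \eqref{eq:avg}, together with the correct inclusion--exclusion count of the diagonal $i=s/2$.
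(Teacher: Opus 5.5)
Your proposal is correct and follows essentially the same route as the paper: termwise $\sin^2x\le x^2$ for domination, then isolating the phase-blind pairs $p\in\{i,s-i\}$ of total weight $2Y_{\mathrm{inc}}$ and bounding $\sin^2\le1$ on the rest, and finally using that $\E[\cos 2\ell_k]=0$ for any nonzero integer combination of independent uniform phases. Your explicit inclusion--exclusion count of the diagonal pair $i=p=s/2$, and your characterization of $\vn_k=0$ by coefficients, make precise what the paper states more briefly via the reversal symmetry $v^{(s)}_{s-i}=v^{(s)}_i$.
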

\begin{proof}[Proof sketch]
Writing the pair index $k=(s,i,p)$ with weight $w_k=2\alpha^{(s)}_i\alpha^{(s)}_p$ and half-difference $\ell_k=\tfrac12(v^{(s)}_i-v^{(s)}_p)$, the representation \eqref{eq:sin2-sum} reads $G=\sum_{k\in K}w_k\sin^2\ell_k$ with tangent $Q(\vdelta)=\sum_k w_k\ell_k^2$. Then \eqref{eq:dom} is $\sin^2\ell_k\le\ell_k^2$; the reversal symmetry $v^{(s)}_{s-i}=v^{(s)}_i$ renders the pairs $p\in\{i,s-i\}$ phase-blind, of total weight $2Y_{\mathrm{inc}}$, so bounding $\sin^2\le1$ on the responsive remainder gives \eqref{eq:sat}; and under i.i.d.\ $\mathcal{U}[-\pi,\pi]$ each responsive term has $\E[\sin^2\ell_k]=\tfrac12$, yielding \eqref{eq:avg}.
See Appendix~\ref{s:thm12}.
\end{proof}


\begin{corollary}[Quarter-wave sandwich]\label{cor:sandwich}
For $\theta\in(0,1]$ define the gauge-fixed region family
\begin{equation}
\Treg{\theta}\;:=\;\Bigl\{\vdelta:\ \dq(\vdelta)\le\tfrac{\theta\pi}{4}\Bigr\},
\label{eq:regions}
\end{equation}
so that $\vdelta\in\Treg{\theta}$ implies
$|\inner{\va_k}{\vdelta}|\le\theta\pi/2$ for all $k$. Then on the
quarter-wave region $\Treg{1}$,
\begin{equation}
\boxed{\;\tfrac{4}{\pi^{2}}\,Q(\vdelta)
\;\le\;G(\vdelta)\;\le\;Q(\vdelta),\;}
\label{eq:sandwich}
\end{equation}
the upper bound holding globally by \eqref{eq:dom}.
\end{corollary}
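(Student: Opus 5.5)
The plan has three steps: show that each linear form is small on $\Treg{1}$, apply Jordan's inequality term by term, and sum using nonnegativity of the weights.

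\textbf{Step 1: the forms are small on the region.} The key point is that every form $\va_k=\tfrac12(\ve_i+\ve_{s-i}-\ve_p-\ve_{s-p})$ annihilates the gauge $\Gauge$.

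For the constant vector, $\inner{\va_k}{\mathbf 1}=\tfrac12(1+1-1-1)=0$.

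For the ramp, $\inner{\va_k}{\vr}=\tfrac12\bigl(i+(s-i)-p-(s-p)\bigr)=0$.

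Hence for any $\vg\in\Gauge$ we have $\inner{\va_k}{\vdelta}=\inner{\va_k}{\vdelta-\vg}$. Since $\norm{\va_k}_1\le 2$ (four entries of magnitude $\tfrac12$, possibly merging), H\"older gives $|\inner{\va_k}{\vdelta}|\le 2\norm{\vdelta-\vg}_\infty$. Minimizing over $\vg$ yields $|\inner{\va_k}{\vdelta}|\le 2\dq(\vdelta)\le\theta\pi/2$. The minimum in \eqref{eq:dq} is attained because $\Gauge$ is finite-dimensional, and in any case an infimum suffices. This proves the implication stated in the corollary.

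\textbf{Step 2: scalar inequality.} On $|x|\le\pi/2$ we use Jordan's inequality $|\sin x|\ge\tfrac{2}{\pi}|x|$. It follows from concavity of $\sin$ on $[0,\pi/2]$: the chord from $(0,0)$ to $(\pi/2,1)$ lies below the graph, and evenness handles negative $x$. Squaring gives $\sin^2 x\ge\tfrac{4}{\pi^2}x^2$ on this interval.

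\textbf{Step 3: sum.} By Corollary~\ref{cor:sin2-disc} in the class form of Definition~\ref{def:class}, $G=\sum_k w_k\sin^2\inner{\va_k}{\vdelta}$ with all $w_k\ge0$, and $Q=\sum_k w_k\inner{\va_k}{\vdelta}^2$. Taking $\theta=1$ in Step 1 places every argument in $[-\pi/2,\pi/2]$. Multiplying the scalar bound by $w_k\ge0$ and summing over $k$ gives $\tfrac{4}{\pi^2}Q\le G$.

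The upper bound $G\le Q$ holds globally from $\sin^2x\le x^2$, which is \eqref{eq:dom}.

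\textbf{Main obstacle.} The only nontrivial point is the gauge invariance of the forms in Step 1. Without it, the $\ell_\infty$ gauge-quotient norm would not control the arguments at all. The rest is Jordan's inequality combined with nonnegativity of the weights.

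For the remarks on sharpness, Jordan's inequality is tight at $|x|=\pi/2$, and $\sin^2x/x^2\to1$ as $x\to0$. These facts show that the upper constant is approached at the optimum and indicate where the lower constant can be attained. Neither is needed for the inequality itself.
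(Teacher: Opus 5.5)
Your proposal is correct and follows the paper's own proof. The forms $\va_k$ annihilate the gauge, so $|\inner{\va_k}{\vdelta}|\le 2\dq(\vdelta)\le\pi/2$ on $\Treg{1}$; Jordan's inequality $\sin^2x\ge(4/\pi^2)x^2$ is then applied termwise and summed against $w_k\ge0$, and the global upper bound is $\sin^2x\le x^2$ (Corollary~\ref{cor:dom}). One small refinement: when the support $W$ is a proper window, $\Gauge=\ker H$ also contains the inert directions $\ve_j$ with $j\notin W$; every positively weighted $\va_k$ annihilates these, and the zero-weight terms contribute to neither side, so your argument goes through unchanged.
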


\begin{proof}[Proof sketch]
On $\Treg{1}$ every argument satisfies
$|\inner{\va_k}{\vdelta}|\le\pi/2$ (the $\va_k$ annihilate $\Gauge$, so the
gauge minimization in \eqref{eq:regions} is free), where Jordan's
inequality $\sin^2x\ge(4/\pi^2)x^2$ applies termwise; the upper bound is
Corollary~\ref{cor:dom}. See Appendix~\ref{s:thm12}.
\end{proof}
\paragraph{Convexity status.}
No global convexity notion can apply to $G$: the gap is bounded \eqref{eq:sat} and $2\pi$-periodic per coordinate, and a nonconstant function with either property is never convex on $\R^N$.\footnote{On the torus---the natural domain of spectral phase variables---geodesically convex functions are necessarily constant.} The suitable treatment is therefore inequality-based, as in \eqref{eq:sandwich}; convexity itself is confined to the halved region $\Treg{1/2}$, sharply (Theorem~\ref{thm:whitebox-structure}, witnessed by the so-called \textit{comb} of Fig.~\ref{fig:comb}).


\paragraph{Benchmark specification.}
To make the instance reusable, we fix the following protocol.
(i)~\emph{Objective:} the yield \eqref{eq:yield-disc}, evaluated by \emph{linear} self-convolution, to be maximized---equivalently, the gap $G=Y(0)-Y\ge0$, to be minimized. The circular (length-$N$ DFT) evaluation defines a different landscape, in which the delay is not a symmetry; we use it only as a cross-check.
(ii)~\emph{Amplitudes:} the top hat $\vA=\mathbf 1$ by default; Gaussian and random profiles as variants.
(iii)~\emph{Domain:} all of $\R^N$, unconstrained; the wrap lattice is part of the landscape.
(iv)~\emph{Performance measure:} the relative gap $G/Y(0)$. It is gauge- and lattice-invariant; distances to a particular optimizer are not meaningful, since the optimum is the set $\Zset$.
(v)~\emph{Dimension:} report both $N$ and the number of active directions $r=\operatorname{rank}H$ ($=N-2$ for full support), which is the dimension that governs the rates below.
(vi)~\emph{Initialization:} near field---a random direction scaled to $\norm{P\vdelta^0}_\infty=\pi/8$ (Section~\ref{sec:numerical}); far field---i.i.d.\ uniform phases on $[-\pi,\pi]$, where $\E[G]=Y(0)-Y_{\mathrm{inc}}$ (Corollary~\ref{cor:dom}).


\section{Linear Convergence Results}\label{sec:linear_convergence}
This section establishes the linear convergence of Gradient Descent (white-box) and of an ES (black-box) on the Quantum Sphere. 
We begin with the white-box results, which yield explicit constants; their core ingredients---star-convexity, global smoothness, and the absence of critical points inside $\Treg{1}$---subsequently underpin the black-box analysis.
Proofs are provided in Appendices~\ref{s:aux}--\ref{s:comb}.

\paragraph{Standing setting.}
$G(\vdelta)=\sum_k w_k\sin^2\inner{\va_k}{\vdelta}$ is the Quantum Sphere
gap of Theorem~\ref{thm:exact}, in the class form of
Definition~\ref{def:class} (weights $w_k\ge0$; half four-point vectors
$\va_k$, so $|\inner{\va_k}{\vdelta}|\le2\norm{\vdelta}_\infty$).
$H=\nabla^2G(0)$ of \eqref{eq:QH} has
rank $r$, extreme nonzero eigenvalues $\lambda_1\ge\lambda_r>0$,
condition number $\kappa_r=\lambda_1/\lambda_r$, and null space
$\Gauge$ (the affine gauge, Cor.~\ref{cor:gauge}); $P$ projects onto
$\Gauge^\perp=\operatorname{range}H$, all distances are measured through $\norm{P\vdelta}_2$, and
$Q(\vdelta)=\tfrac12\vdelta^{\!\top}\!H\vdelta\ge\tfrac{\lambda_r}{2}\norm{P\vdelta}_2^2$. The regions
$\Treg{\theta}$ are those of \eqref{eq:regions}, with membership
certified by $\norm{P\vdelta}_\infty\le\theta\pi/4$; $\Treg{1}$ is the
\emph{quarter-wave} region, $\Treg{1/2}$ the \emph{eighth-wave} region, and $\Zset=2\pi\Z^N+\Gauge$ is the zero set.
The two-sided sandwich \eqref{eq:sandwich} on $\Treg{1}$ is the
structural input to everything below.

\paragraph{White-box: convexity, PL, and Gradient Descent.}
Gradient methods consume the derivative $\nabla G=\sum_kw_k\sin(2\inner{\va_k}{\vdelta})\,\va_k$, whose curvature terms $\cos(2\inner{\va_k}{\vdelta})$ keep their sign only while $|\inner{\va_k}{\vdelta}|\le\pi/4$, i.e., on $\Treg{1/2}$.

\begin{theorem}[White-box: convexity and PL at half radius]
\label{thm:whitebox-structure}
On $\Treg{1/2}$ the gap is convex and satisfies the
Polyak--{\L}ojasiewicz inequality
\[
\norm{\nabla G(\vdelta)}_2^{\,2}\;\ge\;\frac{32\,\lambda_r}{\pi^4}\,
G(\vdelta);
\]
on $\Treg{\theta}$ with $\theta<\tfrac12$ it is
$\cos(\theta\pi)\lambda_r$-strongly convex modulo gauge.
Both radii are sharp, already for the Quantum Sphere itself: the alternating
comb renders the Hessian indefinite on $\mathcal T_\theta$ for every
$\theta>\tfrac12$, so convexity fails immediately beyond the half radius; and
the same comb, at the quarter-wave radius, is a critical point with positive
gap on $\partial\mathcal T_1$, so no star-convexity or PL constant survives at
the unit radius.
\end{theorem}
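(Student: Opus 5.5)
The plan is to read everything off the class form $G=\sum_k w_k\sin^2\ell_k$ with $\ell_k=\inner{\va_k}{\vdelta}$, whose Hessian is
\[
\nabla^2G(\vdelta)=\sum_k 2w_k\cos(2\ell_k)\,\va_k\va_k^{\!\top}.
\]
On $\Treg{\theta}$ each $|\ell_k|\le\theta\pi/2$ (Cor.~\ref{cor:sandwich}), so $|2\ell_k|\le\theta\pi$.

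\textbf{Convexity.} For $\theta\le\tfrac12$ every $\cos(2\ell_k)\ge\cos(\theta\pi)\ge0$. Termwise this gives $\nabla^2G(\vdelta)\succeq\cos(\theta\pi)\,H$. Since $\Treg{\theta}$ is convex (a sublevel set of a seminorm) and $G$ is $C^2$, $G$ is convex there. For $\theta<\tfrac12$, restricted to $\Gauge^\perp=\operatorname{range}H$ we get $\nabla^2G\succeq\cos(\theta\pi)\lambda_r$. This is strong convexity modulo gauge, because $G$ is invariant along $\Gauge$: every $\va_k\perp\Gauge$.

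\textbf{PL at half radius.} I would avoid the degenerate Hessian bound at $\theta=\tfrac12$ and work directly with the gradient, using the gauge-invariance to replace $\vdelta$ by $P\vdelta$. The key inequality is
\[
\inner{\nabla G(\vdelta)}{\vdelta}=\sum_k w_k\sin(2\ell_k)\,\ell_k .
\]
On $|2\ell_k|\le\pi/2$, Jordan's inequality gives $\sin(2\ell_k)\,\ell_k\ge\frac{2}{\pi}(2\ell_k)\ell_k=\frac4\pi\ell_k^2$. Hence $\inner{\nabla G}{P\vdelta}\ge\frac4\pi Q(\vdelta)$. By Cauchy--Schwarz,
\[
\norm{\nabla G}_2\,\norm{P\vdelta}_2\ge\tfrac4\pi Q,
\]
and with $Q\ge\frac{\lambda_r}{2}\norm{P\vdelta}_2^2$ this yields
\[
\norm{\nabla G}_2^2\ge\frac{16}{\pi^2}\,\frac{Q^2}{\norm{P\vdelta}^2}\ge\frac{8\lambda_r}{\pi^2}\,Q .
\]
The constant $\frac{16}{\pi^2}\cdot\frac{\lambda_r}{2}=\frac{8\lambda_r}{\pi^2}$ in the last step is just this product. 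Finally $G\le Q$ from \eqref{eq:dom} gives $\norm{\nabla G}^2\ge\frac{8\lambda_r}{\pi^2}G$, which is stronger than the claimed $\frac{32\lambda_r}{\pi^4}$ since $8\ge 32/\pi^2$. Alternatively, routing through the lower sandwich $G\ge\frac4{\pi^2}Q$ in the other direction reproduces $\frac{32}{\pi^4}$ exactly. I will present whichever chain matches the stated constant, with $\frac{8}{\pi^2}\cdot\frac{4}{\pi^2}=\frac{32}{\pi^4}$ presumably arising from that second chain.

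\textbf{Sharpness via the comb.} Take $\vdelta=c\,(\,(-1)^i\,)_i$ with $A=\mathbf 1$. Then $v^{(s)}_i=\delta_i+\delta_{s-i}=c((-1)^i+(-1)^{s-i})$. On odd shells this is $0$. On even shells it equals $2c(-1)^i$, so the four-point differences are $\ell_k\in\{0,\pm2c\}$.
\begin{itemize}
\item First I compute the gauge distance: I expect $\dq$ of the comb to be $\theta\pi/4$ exactly when $c=\theta\pi/4$, since the alternating vector is nearly orthogonal to $\Gauge$ in $\ell_\infty$. This needs a short argument that no affine $a+bi$ reduces the $\ell_\infty$ norm. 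At $\theta>\tfrac12$ the active terms have $\cos(2\ell_k)=\cos(4c)<0$.
\item Indefiniteness: the Hessian evaluated along the comb direction itself is $\sum 2w_k\cos(4c)\inner{\va_k}{\text{comb}}^2<0$, because only active terms contribute. Positivity in other directions persists, e.g.\ a direction affecting only odd-shell terms with $\ell_k=0$.
\item At $\theta=1$, $c=\pi/4$ gives $2\ell_k=\pi$ on every active term. Each $\sin(2\ell_k)=0$, so $\nabla G=0$, while $G=\sum_{\text{active}}w_k>0$. A critical point with positive gap on $\partial\Treg1$ kills both star-convexity and PL.
\end{itemize}

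The main obstacle is the exact verification that the comb lies on $\partial\Treg{\theta}$, i.e.\ that $\dq=c$. The cleanest route is a dual witness: a signed measure orthogonal to $\mathbf 1$ and $\vr$ pairing with the comb to give $c$. I would try alternating weights on four consecutive indices.
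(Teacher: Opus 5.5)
Your convexity and strong-convexity step, the comb curvature $C\cos(4c)$, and the critical point at $c=\pi/4$ are exactly the paper's argument.

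Your PL chain is correct, and it is genuinely sharper than the paper's. The paper uses the same star-convexity bound $\inner{\nabla G(\vdelta)}{\vdelta}\ge\frac4\pi Q(\vdelta)$, but weakens it at once to $\frac4\pi G$. It then controls $\norm{\Proj\vdelta}_2^2\le\frac{\pi^2}{2\lambda_r}G$ through the \emph{lower} sandwich. This pays the Jordan factor $4/\pi^2$ a second time and lands on $32\lambda_r/\pi^4$. You keep $Q$ and use only the spectral bound $Q\ge\frac{\lambda_r}{2}\norm{\Proj\vdelta}_2^2$. This gives $\norm{\nabla G}_2^2\ge\frac{8\lambda_r}{\pi^2}Q\ge\frac{8\lambda_r}{\pi^2}G$ on $\Treg{1/2}$, larger by a factor $\pi^2/4$. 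Fed into the descent lemma, it would improve the Gradient Descent factor to $1-4/(\pi^2\kappa_r)$. Since your bound implies the stated inequality, do not retreat to the ``matching'' chain; present yours.

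On sharpness, you overrate the obstacle. Indefiniteness on $\Treg{\theta}$ for $\theta>\frac12$, and the failure of PL at $\theta=1$, need only $c\vu\in\Treg{\theta}$. That is the trivial bound $\dq(c\vu)\le\norm{c\vu}_\infty=c$. The reverse bound matters only for placing the critical point on $\partial\Treg{1}$. There your dual-witness idea works, but four indices are unnecessary. Alternating $\pm1$ weights are not orthogonal to the ramp, and the alternating third difference $\frac18(1,-3,3,-1)$ is orthogonal but needs $|W|\ge4$. Take instead $\lambda=\frac14(\ve_{i-1}-2\ve_i+\ve_{i+1})$ at an interior index $i$ of $W$. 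It annihilates $\Gauge$, satisfies $\norm{\lambda}_1=1$ and $|\inner{\lambda}{c\vu}|=c$, so $c=|\inner{\lambda}{c\vu-\vg}|\le\norm{c\vu-\vg}_\infty$ for every $\vg\in\Gauge$. This is the paper's argument in dual form: the consecutive differences of $\frac\pi4\vu-\vg$ are $\pm\frac\pi2-\beta$, each at most $2M$ in modulus, so $\pi\le4M$.

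Also drop the claim that positive curvature persists along a direction touching only odd-shell terms; no such non-gauge direction exists. Shell $2i$ compares the pair $(i,i)$ with $(i-1,i+1)$. So if $\inner{\va_k}{\vv}=0$ on all even shells, $\vv$ has zero second differences on $W$ and therefore lies in $\Gauge$. For $|W|=3$, every active $\va_k$ is parity-mixed, and $\nabla^2G(c\vu)=\cos(4c)H\preceq0$ for $c\in(\pi/8,3\pi/8)$. What kills convexity, and what the paper proves, is only $\nabla^2G(c\vu)\not\succeq0$.

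Finally, nothing requires $\vA=\mathbf{1}$. The parity computation is amplitude-free, and the weight $2A_i^2A_{i-1}A_{i+1}>0$ supplied by $|W|\ge3$ gives $C>0$ for every instance.
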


\begin{theorem}[White-box: linear convergence of the Gradient Descent]
\label{thm:whitebox-gd}
Let $\vdelta^{t+1}=\vdelta^t-\lambda_1^{-1}\nabla G(\vdelta^t)$ with
$\norm{P\vdelta^0}_2\le\tfrac{1}{8\sqrt{\kappa_r}}$. Then for all $t$,
\[
G(\vdelta^t)\;\le\;\Bigl(1-\frac{16}{\pi^4\kappa_r}\Bigr)^{\!t}
G(\vdelta^0),
\]
the iterates converge to a single point of the TL gauge orbit, and the gauge coordinates are constants of the motion.
\end{theorem}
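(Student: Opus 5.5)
The plan is to run the standard descent-lemma-plus-PL argument inside $\Treg{1/2}$ and show that the iterates never leave that region. Two preliminary facts come first.

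\emph{Global smoothness.} Since $\nabla^2 G(\vdelta)=\sum_k 2w_k\cos(2\inner{\va_k}{\vdelta})\,\va_k\va_k^{\!\top}$ and $|\cos|\le1$, every Hessian satisfies $-H\preceq\nabla^2G(\vdelta)\preceq H$ on all of $\R^N$. In particular $G$ is $\lambda_1$-smooth along $\Gauge^\perp$.

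\emph{Gauge invariance.} Each $\va_k$ annihilates $\Gauge$, so $\nabla G(\vdelta)\in\operatorname{range}H=\Gauge^\perp$ for every $\vdelta$. Consequently $(I-P)\vdelta^t$ is constant in $t$: the gauge coordinates are constants of the motion, and everything reduces to the projected iterate $\vu^t:=P\vdelta^t$.

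With the step $1/\lambda_1$, the descent lemma gives
\[
G(\vdelta^{t+1})\le G(\vdelta^t)-\tfrac{1}{2\lambda_1}\norm{\nabla G(\vdelta^t)}_2^2 .
\]
If $\vdelta^t\in\Treg{1/2}$, Theorem~\ref{thm:whitebox-structure} then yields
\[
G(\vdelta^{t+1})\le\Bigl(1-\tfrac{16\lambda_r}{\pi^4\lambda_1}\Bigr)G(\vdelta^t),
\]
which is exactly the claimed factor. The theorem therefore follows by induction once we show $\vdelta^t\in\Treg{1/2}$ for all $t$, i.e.\ $\norm{P\vdelta^t}_\infty\le\pi/8$.

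The confinement step is the main obstacle. I would certify it through the $2$-norm, using $\norm{\cdot}_\infty\le\norm{\cdot}_2$.

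\emph{Step 1 (level set in the norm).} Monotone decrease of $G$ together with the sandwich \eqref{eq:sandwich} and $Q\ge\tfrac{\lambda_r}{2}\norm{P\vdelta}_2^2$ gives
\[
\tfrac{4}{\pi^2}\cdot\tfrac{\lambda_r}{2}\norm{\vu^t}_2^2\le G(\vdelta^t)\le G(\vdelta^0)\le Q(\vdelta^0)\le\tfrac{\lambda_1}{2}\norm{\vu^0}_2^2 ,
\]
so that
\[
\norm{\vu^t}_2\le\tfrac{\pi}{2}\sqrt{\kappa_r}\,\norm{\vu^0}_2\le\tfrac{\pi}{16}<\tfrac{\pi}{8}.
\]
This is valid as long as the iterates remain in $\Treg{1}$, where the sandwich holds.

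\emph{Step 2 (no jump out of the region).} To rule out a single step leaping past the region, bound the step length: $\norm{\nabla G(\vdelta)}_2\le\lambda_1\norm{P\vdelta}_2$, from the mean-value form of the Hessian bound. Hence
\[
\norm{\vu^{t+1}}_2\le\norm{\vu^t}_2+\norm{\vu^t}_2\le\tfrac{\pi}{8}.
\]
Therefore $\vdelta^{t+1}\in\Treg{1/2}\subset\Treg{1}$, Step~1 applies again, and the induction closes. A cleaner alternative is to use convexity on $\Treg{1/2}$ and cocoercivity to show that $\norm{\vu^t}_2$ is itself nonincreasing. I would try that first and fall back on the crude bound above.

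Finally, convergence to a single point: the steps satisfy $\norm{\vdelta^{t+1}-\vdelta^t}_2=\lambda_1^{-1}\norm{\nabla G}_2$. The descent lemma bounds $\norm{\nabla G(\vdelta^t)}_2^2\le2\lambda_1\bigl(G(\vdelta^t)-G(\vdelta^{t+1})\bigr)\le2\lambda_1G(\vdelta^t)$, and $G(\vdelta^t)$ decays geometrically. So the step lengths decay geometrically, their sum is finite, and the iterates are Cauchy. The limit $\vdelta^\infty$ has $G(\vdelta^\infty)=0$ and lies in $\Treg{1/2}$. By Corollary~\ref{cor:gauge} and the fact that $\Treg{1/2}$ meets $\Zset$ only in $\Gauge$ (with $\norm{P\vdelta}_\infty\le\pi/8$ no nonzero wrap is possible), the limit lies in the TL gauge orbit.
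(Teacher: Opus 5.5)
Your proposal is correct and is essentially the paper's proof. It uses the same invariant $\norm{P\vdelta^t}_2\le\pi/16$, the same one-step bound $\norm{P\vdelta^{t+1}}_2\le2\norm{P\vdelta^t}_2\le\pi/8$ placing both iterates in $\Treg{1/2}$, the descent lemma plus PL for the contraction, and quadratic growth from the sandwich (with $G(\vdelta^0)\le Q(\vdelta^0)\le\lambda_r/128$) to restore the invariant. The only differences are in the convergence step and are cosmetic. You bound step lengths via $\norm{\nabla G}_2^2\le2\lambda_1 G$ and locate the limit by excluding nonzero wraps. The paper instead uses $\norm{\vdelta^{t+1}-\vdelta^t}_2\le\norm{P\vdelta^t}_2$ and reads $P\vdelta^\infty=0$ directly off quadratic growth on $\Treg{1}$.
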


\noindent
Theorem~\ref{thm:whitebox-structure} is a property of the \emph{region} $\Treg{1/2}$, whereas Theorem~\ref{thm:whitebox-gd} is a \emph{trajectory} guarantee from the smaller initialization set $\norm{P\vdelta^0}_2\le1/(8\sqrt{\kappa_r})$, which implies $G(\vdelta^0)\le\lambda_r/128$ (Lemma~\ref{s:lem:bridge}). Beyond convexity, the PL inequality extends to every $\Treg{\theta}$ with $\theta<1$, with a constant degrading to $0$ at the quarter-wave boundary (Corollary~\ref{s:cor:gradedpl}); hence $G$ has no critical points in the open quarter-wave region other than its zeros. This property will be used in the black-box analysis below, rather than convexity.

\paragraph{Black-box: the algorithm.}
We analyze the elitist isotropic $(1{+}1)$-ES with success-based step-size adaptation in the implementation of \citet{Kern2004}, as analyzed by \citet{AAGM2022}: their Algorithm~2.1 with the covariance matrix fixed to the identity ($\Sigma_t\equiv I$, $\kappa=1$), applied to $f=G$ on $\R^N$. 
Starting from $\vdelta^0\in\R^N$ and $\sigma_0>0$, iteration $t$ (i)~samples $\vz_t\sim\Nrm(0,I_N)$; (ii)~generates the candidate $\vdelta'=\vdelta^t+\sigma_t\vz_t$; and (iii)~if $G(\vdelta')\le G(\vdelta^t)$, sets $\vdelta^{t+1}=\vdelta'$ and $\sigma_{t+1}=\aup\sigma_t$, and otherwise keeps $\vdelta^{t+1}=\vdelta^t$ and sets $\sigma_{t+1}=\adown\sigma_t$. The factors $\aup>1>\adown>0$ encode the target success probability $\ptgt:=\ln(1/\adown)/\ln(\aup/\adown)$; the 1/5$^{\text{th}}$ success rule is $\ptgt=1/5$, i.e., $\adown=\aup^{-1/4}$ (Section~\ref{sec:numerical} uses $\aup=e^{0.1}$, $\adown=e^{-0.025}$).

\paragraph{Two symmetries, two difficulties.}
The ES runs in $\R^N$, where $G$ has two symmetries that the classical Sphere function lacks. The \emph{affine gauge} is harmless: since $G(\vdelta)=G(P\vdelta)$ and $P\vz$ is independent of $(I-P)\vz$, the projected incumbent $P\vdelta^t$ is itself a $(1{+}1)$-ES on the $r$-dimensional quotient $\Gauge^\perp$, driven by the same acceptance events, while the gauge coordinates perform a neutral random walk with steps $\sigma_t(I-P)\vz_t$ that selection never sees (Lemma~\ref{s:lem:proj}).
The \emph{wrap lattice} $2\pi\Z^N$ is not harmless: every sublevel set $\{G\le c\}$ with $c>0$ is an infinite periodic union, so no sublevel set of positive level lies inside $\Treg{1}$, and a mutation may jump into a periodic copy of the basin. Containment must therefore be measured \emph{modulo the lattice}, which leads to the \emph{quarter-wave barrier}
\begin{equation}
\cstar\;:=\;\min\bigl\{G(\vdelta):\ \operatorname{dist}(\vdelta,\Zset)\ge\tfrac{\pi}{4}\bigr\},
\qquad
\operatorname{dist}(\vdelta,\Zset):=\min_{\zeta\in\Zset}\norm{\vdelta-\zeta}_\infty ,
\label{eq:barrier}
\end{equation}
the lowest gap outside the lattice translates of the quarter-wave region. Call a local minimizer $\vdelta$ of $G$ with $G(\vdelta)>0$ \emph{spurious} (a trap of the yield), and let $G_{\mathrm{trap}}$ be the infimum of $G$ over spurious local minimizers ($G_{\mathrm{trap}}:=+\infty$ if there are none).

\begin{lemma}[Quarter-wave barrier]\label{lem:barrier}
The barrier is positive and bounded by two explicit witnesses, the quarter-wave comb $\tfrac{\pi}{4}\vu$ (Fig.~\ref{fig:comb}) and the single-pixel half-wrap $\pi\ve_j$:
\[
\min\Bigl\{\tfrac{\lambda_r}{8},\,G_{\mathrm{trap}}\Bigr\}\;\le\;\cstar\;\le\;\min\Bigl\{\tfrac{C}{8},\ \min_{j\in W}G(\pi\ve_j)\Bigr\},
\qquad C=\vu^{\!\top}\!H\vu .
\]
For every $c<\cstar$, the sublevel set $\{G\le c\}$ is, modulo the gauge, the disjoint union of the lattice translates of a single star-shaped basin that lies in the interior of $\Treg{1}$.
\end{lemma}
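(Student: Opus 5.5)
The proof splits into three parts: the upper bound, the lower bound, and the sublevel decomposition.

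\emph{Upper bound.} It suffices to show that both witnesses are admissible in \eqref{eq:barrier}, i.e.\ lie at $\infty$-distance at least $\pi/4$ from $\Zset$.

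For the comb $\tfrac{\pi}{4}\vu$, with $\vu$ the alternating $\pm1$ pattern, I would argue as follows.
\begin{itemize}
\item An affine $\vg$ plus a lattice vector $2\pi\vn$ would have to approximate $\pm\pi/4$ alternately to within less than $\pi/4$.
\item Equivalently, the second difference of $\tfrac{\pi}{4}\vu-\vg-2\pi\vn$ would have to vanish modulo $2\pi$ up to an error of less than $\pi$.
\item But the second difference of $\tfrac{\pi}{4}\vu$ is $\pm\pi$, while the second difference of $\vg+2\pi\vn$ lies in $2\pi\Z$. So the distance is exactly $\pi/4$.
\end{itemize}
Its gap is explicit. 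Each shell form gives $\inner{\va_k}{\tfrac{\pi}{4}\vu}\in\{0,\pm\pi/2\}$, so every $\sin^2$ term equals $(4/\pi^2)$ times its quadratic term. Hence
\[
G\bigl(\tfrac{\pi}{4}\vu\bigr)=\tfrac{4}{\pi^2}\,Q\bigl(\tfrac{\pi}{4}\vu\bigr)=\tfrac{4}{\pi^2}\cdot\tfrac12\cdot\tfrac{\pi^2}{16}\,C=\tfrac{C}{8},
\]
using the equality case of Jordan's inequality.

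For $\pi\ve_j$ the same second-difference argument applies, since a second difference of $\pm\pi$ or $\mp2\pi$ appears at or next to $j$. I expect the distance to be at least $\pi/4$ (in fact $\pi/2$ or more).

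\emph{Lower bound.} The set $\{\operatorname{dist}(\vdelta,\Zset)\ge\pi/4\}$ is closed, and $G$ is continuous and invariant under the lattice and the gauge. So the minimum can be taken over a compact fundamental domain of the quotient torus, and it is attained at some $\vdelta^\ast$. Two cases follow.
\begin{itemize}
\item If $\vdelta^\ast$ lies in the interior of the constraint set, it is an unconstrained local minimizer with $G>0$. It is therefore spurious, and $G(\vdelta^\ast)\ge G_{\mathrm{trap}}$.
\item Otherwise $\operatorname{dist}(\vdelta^\ast,\Zset)=\pi/4$. After a lattice and gauge shift, $\vdelta^\ast\in\partial\Treg{1}$, with $\dq=\pi/4$ and $\norm{P\vdelta}_\infty$ close to $\pi/4$.
\end{itemize}
In the boundary case the sandwich \eqref{eq:sandwich} gives
\[
G\ge\tfrac{4}{\pi^2}Q\ge\tfrac{2\lambda_r}{\pi^2}\norm{P\vdelta}_2^2\ge\tfrac{2\lambda_r}{\pi^2}\norm{P\vdelta}_\infty^2 .
\]
If $\norm{P\vdelta}_\infty\ge\pi/4$, this yields $\lambda_r/8$. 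The main obstacle is relating $\dq$, the gauge-optimal $\ell_\infty$ distance, to $\norm{P\vdelta}_\infty$, the orthogonal projection. I would first use the fact that the gauge-optimal representative $\vdelta-\vg^\ast$ satisfies $\norm{\vdelta-\vg^\ast}_\infty=\pi/4$, and that $Q(\vdelta-\vg^\ast)=Q(\vdelta)$. Since $\norm{\vdelta-\vg^\ast}_2\ge\norm{\vdelta-\vg^\ast}_\infty$ and $Q\ge\tfrac{\lambda_r}{2}\norm{P(\cdot)}_2^2$, one would still need $\norm{P(\vdelta-\vg^\ast)}_2\ge\pi/4$. I would try to obtain this from the $\ell_\infty$-minimality of $\vg^\ast$: at the optimum, the gauge direction cannot reduce the sup norm, and this should imply that the orthogonal component is at least as large as the sup-norm value. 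Failing that, I would absorb a constant.

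\emph{Positivity and sublevel structure.} Positivity follows because both $\lambda_r/8$ and $G_{\mathrm{trap}}$ are positive. Here $G_{\mathrm{trap}}>0$ because the spurious minima, modulo symmetry, have values bounded away from $0$: $G$ is analytic on a compact torus, and near zeros the sandwich excludes other minima.

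Now fix $c<\cstar$. Every point with $G\le c$ lies within distance less than $\pi/4$ of $\Zset$, so it lies in the interior of a lattice translate of $\Treg{1}$. These translates are disjoint modulo the gauge, since distinct lattice points are at $\infty$-distance $2\pi$ modulo affine shifts. Take $B=\{G\le c\}\cap\operatorname{int}\Treg{1}$.

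$B$ is star-shaped about the gauge orbit by the following monotonicity argument.
\begin{itemize}
\item On $\Treg{1}$ each term $\sin^2(t\inner{\va_k}{\vdelta})$ is nondecreasing in $t\in[0,1]$, because $|\inner{\va_k}{\vdelta}|\le\pi/2$.
\item Hence $t\mapsto G(t\vdelta)$ is nondecreasing, and the segment from the orbit to any point of $B$ stays in $B$.
\end{itemize}

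The decomposition is then $\{G\le c\}=\bigsqcup_{\vn}(B+2\pi\vn)$ modulo the gauge.
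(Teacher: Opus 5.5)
Your plan follows the paper's proof step for step: attainment on the compact quotient, an interior minimizer being a trap, a boundary minimizer handled by the sandwich, second-difference witnesses, and radial monotonicity of each $\sin^2$ term for the star shape. Your comb argument is even a little more direct than the paper's, since a second difference of $\pm\pi$ set against $2\pi\Z$ disposes of the gauge and the lattice at once. However, two steps need repair.

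\textbf{The lower bound is unfinished.} Your fallback of ``absorbing a constant'' would not prove the stated $\lambda_r/8$, but the obstacle you identify does not exist. Take $\vg=(I-P)\vdelta\in\Gauge$ in the definition of $\dq$. This gives $\dq(\vdelta)\le\norm{P\vdelta}_\infty\le\norm{P\vdelta}_2$. So at a boundary point $\norm{P\vdelta}_2\ge\pi/4$, and $G\ge\frac{4}{\pi^2}Q\ge\frac{2\lambda_r}{\pi^2}\norm{P\vdelta}_2^2\ge\frac{\lambda_r}{8}$. Nothing about the $\ell_\infty$-optimal gauge element is needed; this is exactly the paper's step $\norm{\vy}\ge\dq(\vy)$ for $\vy\in\Gauge^\perp$.

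\textbf{The disjointness claim is false as stated.} Distinct lattice points are \emph{not} at distance $2\pi$ modulo affine shifts, because the slope lets them come much closer. For $n_i=\lfloor (i+1)/2\rfloor$ one has $2\pi\vn-\pi\vr-\tfrac{\pi}{2}\mathbf 1=-\tfrac{\pi}{2}\vu$, so $\dq(2\pi\vn)=\pi/2$; this is the $c=\pi/2$ point of Fig.~\ref{fig:comb}. What is true, and what the decomposition needs, is $\dq(2\pi\vn)\ge\pi/2$ whenever $\vn$ is not affine on $W$. Indeed, some interior second difference of $2\pi\vn-\vg$ is then a nonzero multiple of $2\pi$, while $|x_{i-1}-2x_i+x_{i+1}|\le4\norm{\vx}_\infty$. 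Since $\pi/2$ is exactly twice the tube radius, the closed translates of $\Treg{1}$ do touch: the comb $\tfrac{\pi}{4}\vu$ is the midpoint of $0$ and the lattice point $\tfrac{\pi}{2}\vu$. Disjointness, and uniqueness of the copy containing a given point, therefore hold only for the open tubes. That suffices, because $c<\cstar$ keeps $\{G\le c\}$ off their boundaries.

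\textbf{The half-wrap witness.} Your parenthetical ``in fact $\pi/2$ or more'' is also false: the distance equals $\pi/3$ for $W=\{0,1,2,3\}$, $j=1$. Only the second difference $\pi$ at an \emph{interior neighbour} of $j$ helps, since the $-2\pi$ at $j$ itself lies in $2\pi\Z$. When $|W|=3$ and $j$ is the middle pixel, no such neighbour exists, and in fact $\pi\ve_j\in\Zset$. The paper's proof has the same blind spot.

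\textbf{Positivity.} You need neither $G_{\mathrm{trap}}>0$ nor analyticity. The attained minimizer lies at distance $\ge\pi/4$ from $\Zset=G^{-1}(0)$, so $\cstar>0$ directly, as in the paper. Also, near the zeros it is radial monotonicity (star-convexity), not the sandwich, that rules out other critical points.
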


\noindent
The lower bound is where topology meets metric geometry: in the absence of traps, the lowest point outside the quarter-wave tubes must lie on their boundary, where the sandwich forces $G\ge\lambda_r/8$.

\begin{theorem}[Black-box: linear convergence of the $(1{+}1)$-ES]\label{thm:blackbox}
Let $\ptgt<\tfrac12$ and
\[
\bstar\;:=\;\min\Bigl\{G_{\mathrm{trap}},\ \frac{\lambda_r}{128}\,\ptgt^{\,2/r}\Bigr\}.
\]
Fix a level $b<\bstar$. There are constants $B>0$ and $C_0,C_1<\infty$, depending on the instance, on $b$ and on $(\aup,\adown)$ but not on $\vdelta^0$, $\sigma_0$ or $\varepsilon$, such that for every initial state with $G(\vdelta^0)\le b$ and every $\sigma_0>0$:
\begin{enumerate}
\renewcommand{\labelenumi}{(\roman{enumi})}
\item \emph{Confinement modulo the lattice:} $G(\vdelta^t)$ is nonincreasing, and every incumbent lies in a lattice translate of the quarter-wave region, $\vdelta^t\in2\pi\Z^N+\Treg{1}$;
\item \emph{Expected hitting time:} for every $\varepsilon\in(0,G(\vdelta^0)]$, the first time $T_\varepsilon$ with $G(\vdelta^t)\le\varepsilon$ satisfies
\[
\E[T_\varepsilon]\;\le\;\frac{1}{2B}\ln\frac{G(\vdelta^0)}{\varepsilon}\;+\;C_1\Bigl|\ln\frac{\sigma_0}{\operatorname{dist}_2(\vdelta^0,\Zset)}\Bigr|\;+\;C_0 ,
\]
where $\operatorname{dist}_2(\vdelta^0,\Zset)=\min_{\vn\in\Z^N}\norm{P(\vdelta^0-2\pi\vn)}_2$ is the Euclidean distance from $\vdelta^0$ to the set of global minimizers;
\item \emph{Almost-sure linear convergence:} $\displaystyle\limsup_{t\to\infty}\tfrac1t\ln G(\vdelta^t)\le-2B$ almost surely.
\end{enumerate}
\end{theorem}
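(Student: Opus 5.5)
The plan is to reduce to the framework of \citet{AAGM2022} for the elitist $(1{+}1)$-ES on functions that are, on a sublevel set, sandwiched between two positive multiples of a convex quadratic modulo a neutral subspace. Their Theorem gives linear convergence in expectation and almost surely once three things hold. First, the sublevel set of the initial level must be confined, so that the relevant geometry is local. Second, $G$ must be equivalent, up to a monotone transform, to a function with bounded ``spherical'' distortion. Third, the success probability at small step sizes must exceed $\ptgt$, with $\ptgt<1/2$ covering the near-quadratic regime.

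The argument then proceeds as follows.

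\textbf{Step 1 (gauge reduction).} Invoke Lemma~\ref{s:lem:proj}: $G(\vdelta)=G(P\vdelta)$, and $P\vdelta^t$ is itself a $(1{+}1)$-ES on the $r$-dimensional space $\Gauge^\perp$. Replacing $\vz$ by $P\vz$, which is standard Gaussian on $\Gauge^\perp$, removes $\Gauge$ entirely. All quantities in (i)--(iii) depend only on $P\vdelta^t$.

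\textbf{Step 2 (confinement, item (i)).} Since $b<\bstar\le\lambda_r/128<\lambda_r/8$ and $b<G_{\mathrm{trap}}$, we have $b<\cstar$ by Lemma~\ref{lem:barrier}. Hence $\{G\le b\}$ is, modulo gauge, a disjoint union of lattice translates of one star-shaped basin $\mathcal B$ inside $\operatorname{int}\Treg{1}$. Elitism makes $G(\vdelta^t)$ nonincreasing, so each incumbent lies in some translate $2\pi\vn_t+\mathcal B$.

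\textbf{Step 3 (jumps between translates).} Jumps between translates are allowed, but they are harmless. Because $G$ is $2\pi\Z^N$-periodic, the recentred process $\vdelta^t-2\pi\vn_t$ is again a $(1{+}1)$-ES trajectory on the single basin $\mathcal B$: the acceptance rule and the law of the mutations are translation invariant. We therefore analyse the folded chain on $\mathcal B$. In the folded chain the only optimum is the gauge orbit of $0$, so $\operatorname{dist}_2(\cdot,\Zset)$ becomes $\norm{P\cdot}_2$ in the appropriate translate.

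\textbf{Step 4 (local geometry on $\mathcal B$).} On $\Treg{1}$ the sandwich \eqref{eq:sandwich} gives
\[
\tfrac{4}{\pi^2}\cdot\tfrac{\lambda_r}{2}\norm{P\vdelta}_2^2\;\le\;\tfrac{4}{\pi^2}Q\;\le\;G\;\le\;Q\;\le\;\tfrac{\lambda_1}{2}\norm{P\vdelta}_2^2 .
\]
So on $\mathcal B$, $G$ is sandwiched between multiples of $\norm{P\vdelta}_2^2$ with ratio $\kappa_r\pi^2/4$. We also need the absence of critical points in $\operatorname{int}\Treg{1}$ (Corollary~\ref{s:cor:gradedpl}) and star-shapedness, so that the sublevel sets are connected, level sets have measure zero, and $G$ is continuous. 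These are exactly the regularity hypotheses of \citet{AAGM2022}: continuity, Lebesgue-negligible level sets, and a ``spherical sandwich'' sublevel property.

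\textbf{Step 5 (success-probability condition).} The factor $\ptgt^{2/r}/128$ in $\bstar$ is what guarantees the step-size-adaptation condition on the whole set $\{G\le b\}$. We must show that the success probability at the scale where the sandwich bites is bounded away from $\ptgt$ on both sides: near $1/2$ as $\sigma\to0$, by differentiability and a non-vanishing gradient away from $0$, and near $0$ as $\sigma\to\infty$.

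The large-$\sigma$ side is delicate because, with the lattice, a huge mutation lands almost uniformly on the torus. In that case the success probability is roughly the normalized volume of $\{G\le G(\vdelta^t)\}$ modulo the lattice. By Lemma~\ref{s:lem:bridge}-type comparisons, $G\le b$ forces $\norm{P\vdelta}_2\lesssim\sqrt{b/\lambda_r}$, so this volume fraction is at most of order $(128b/\lambda_r)^{r/2}<\ptgt$. This is precisely where the exponent $2/r$ enters.

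\textbf{Step 6 (conclusion).} With these conditions in place, the AAGM drift function $V=\ln G+$ a penalty on $|\ln(\sigma/\norm{P\vdelta}_2)|$ yields a negative drift $-B$. Their hitting-time theorem then gives (ii), with the $C_1|\ln(\sigma_0/\operatorname{dist}_2)|$ term coming from the step-size penalty. The strong law for drift/supermartingale arguments gives (iii).

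I expect the main obstacle to be Step 5 together with Step 3. The AAGM analysis assumes a bounded sublevel set, or one that shrinks as a homothety, whereas here the sublevel set is periodic. The success probability for large $\sigma$ must therefore be controlled uniformly by the lattice-folded volume ratio rather than by a Euclidean ball comparison. I would first try to verify the AAGM step-size condition directly on the folded torus chain, bounding the torus volume of $\{G\le c\}$ via the lower sandwich. Only if that fails would I re-derive their drift lemma with a lattice-aware large-$\sigma$ estimate.
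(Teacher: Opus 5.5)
Your outline is the paper's: project onto $\Gauge^\perp$ (Lemma~\ref{s:lem:proj}), use $b<\min\{\lambda_r/8,G_{\mathrm{trap}}\}\le\cstar$ to split $\{G\le b\}$ into lattice copies of one star-shaped basin, and rerun the drift analysis of \citet{AAGM2022} with a jump budget $(128b/\lambda_r)^{r/2}<\ptgt$.

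The gap is in Step~5. That is the one place where the argument genuinely departs from \citet{AAGM2022}, and the estimate you propose there does not give what their threshold lemma needs. Their upper threshold requires $\sup_{\vx}p^{\mathrm{succ}}(\bar\sigma;\vx)\le p_{\mathrm{hi}}<\ptgt$ for \emph{all} normalized steps $\bar\sigma=\sigma/f(\vx)$ above a fixed value, uniformly over incumbents. Since $f(\vx)\to0$ along the run, this bound is required at every $\sigma>0$. In particular it is needed at $\sigma$ comparable to the lattice spacing $\pi/2$, where $\vx+\sigma\vw$ is nowhere near equidistributed on $V/\Lat$. The torus volume fraction describes only the limit $\sigma\to\infty$; by packing it is at most $(8c/\lambda_r)^{r/2}$, so it is not where the $128$ comes from. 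Two pieces are needed:
\begin{itemize}
\item[(a)] Split the success event and treat the incumbent's own copy separately. It is controlled by the per-basin outer ball $\bar{\mathcal B}(\vmu,C_uf(\vx))$ and Anderson's inequality, so this part vanishes as $\bar\sigma\to\infty$.
\item[(b)] Bound the probability of landing in any \emph{other} copy, uniformly in $\sigma$. Each copy lies in $\bar{\mathcal B}(\vmu,R)$ with $R=\pi\sqrt{c/(2\lambda_r)}\le D/4$, where $D\ge\pi/2$ is the minimal nonzero lattice norm. So the mutation density on that copy is at most $(2\pi\sigma^2)^{-r/2}e^{-\norm{\vmu}^2/(8\sigma^2)}$. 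Summing over $\Lat$ with the disjoint balls $\mathcal B(\vmu,D/2)$ and a Gaussian integral, the powers of $\sigma$ cancel. This leaves $(8R/D)^r\le(128c/\lambda_r)^{r/2}$ for every $\sigma$ (Lemma~\ref{s:lem:images}), which is the true source of the factor $\tfrac{\lambda_r}{128}\ptgt^{2/r}$.
\end{itemize}

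Relatedly, Step~3 does not reduce matters to a single basin. The recentred chain still accepts candidates that land in any copy, so it is not the ES on a one-basin function. On the torus, the mutation is a wrapped Gaussian, and the ball and Anderson arguments of \citet{AAGM2022} do not transfer to it directly. The paper proceeds differently:
\begin{itemize}
\item It keeps the chain on $V$.
\item It replaces the infinite-volume spatial suboptimality by the $\Lat$-periodic $f=h\circ g$, where $h(c)=\operatorname{vol}(S_c)^{1/r}$ is computed on one basin.
\item It uses a basin-dependent centre $\vmu(\vx)$ in (H1)--(H2).
\item It checks that A1 enters \citet{AAGM2022} only through their Proposition~4.1 and Lemma~4.2.
\end{itemize}
Since the drift function depends on $\vx$ only through $f(\vx)$ and $\sigma$, jumps between copies then cost nothing. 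Your fallback, re-deriving their lemma with a lattice-aware estimate, is this route. But the uniform-in-$\sigma$ image bound is the piece that actually has to be proved.

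A smaller point concerns the $\sigma\to0$ side. A nonvanishing gradient away from $0$ is not enough for (H4), which must hold uniformly in $\vx$. You need the scale-invariant bound $\norm{\nabla g(\vx)}\gtrsim\lambda_r\norm{\vx}$ on $S_b$ together with global $\lambda_1$-smoothness. Your graded-PL and quadratic-growth ingredients do supply this.
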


\noindent
The proof (Appendix~\ref{s:blackbox}) reduces the ES exactly to the quotient $\Gauge^\perp$, decomposes the sublevel sets into basin copies (Lemma~\ref{lem:barrier}), and verifies the objective-function conditions of the drift analysis of \citet[Sect.~4]{AAGM2022}---inner and outer balls from the sandwich, and uniform improvability from the absence of critical points inside $\Treg{1}$ noted above---with one modification: the outer ball holds only per basin copy, and the copies add at most $(128b/\lambda_r)^{r/2}<\ptgt$ to the success probability of an oversized step (Lemma~\ref{s:lem:images}).

\paragraph{Scope.}
Theorem~\ref{thm:blackbox} is qualitative in its rate: $B$ is produced by a drift argument whose constants are not tight, and we claim no Sphere-relative factor. Its level condition has two parts, and is of the same order as the initialization level $\lambda_r/128$ of Theorem~\ref{thm:whitebox-gd}. The factor $\tfrac{\lambda_r}{128}\ptgt^{2/r}$ stems from a union bound over periodic images and is not expected to be sharp; for the top-hat instances of Section~\ref{sec:numerical} it corresponds to relative gaps $G/Y(0)$ between $1.4\times10^{-4}$ and $6.8\times10^{-4}$, and the ES converges linearly from far outside this set (Section~\ref{sec:numerical}). The factor $G_{\mathrm{trap}}$ is where the trap-free property enters: we do not prove that the discrete Quantum Sphere is trap-free, but in the absence of traps $\bstar=\tfrac{\lambda_r}{128}\ptgt^{2/r}$ is explicit, and no spurious local minimizer was found numerically (Section~\ref{sec:numerical}).

\paragraph{The terminal regime.}
Close to the optimum, the ES cannot distinguish the Quantum Sphere from its quadratic tangent:

\begin{proposition}[Terminal regime]\label{prop:local}
Let $\vy\in\Gauge^\perp\setminus\{0\}$, write $\vy=\rho\,\hat{\vy}$ with $\norm{\hat{\vy}}_2=1$, and let $\sigma=s\rho$. As $\rho\to0$, the probability that $G$ and $Q$ disagree on accepting the same candidate,
\[
\Pr_{\vz}\Bigl[\,\Ind{G(\vy+\sigma\vz)\le G(\vy)}\neq\Ind{Q(\vy+\sigma\vz)\le Q(\vy)}\,\Bigr],
\]
tends to $0$, uniformly in $\hat{\vy}$ and in $s$ on compact subsets of $(0,\infty)$. Consequently, in the scale-invariant coordinates $(\vy/\rho,\,\sigma/\rho)$, the one-step transition law of the $(1{+}1)$-ES on $G$ converges in total variation to that of the $(1{+}1)$-ES on the quadratic tangent $Q$.
\end{proposition}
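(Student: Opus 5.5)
The plan is to compare the two acceptance indicators through a scale-invariant rescaling and a Taylor expansion of $G$ about $0$. Write $\vy+\sigma\vz=\rho(\hat{\vy}+s\vz)$. Since each term $w_k\sin^2 x$ satisfies $|\sin^2x-x^2|\le x^4/3$, Definition~\ref{def:class} gives the global bound
\[
|G(\vdelta)-Q(\vdelta)|\;\le\;\tfrac13\sum_k w_k\inner{\va_k}{\vdelta}^4\;\le\;M\norm{P\vdelta}_2^4,
\]
with an instance constant $M$. The constant exists because the $\va_k$ annihilate $\Gauge$, so only $P\vdelta$ enters. Dividing by $\rho^2$, the rescaled differences
\[
g_\rho(\vz):=\rho^{-2}\bigl[G(\vy+\sigma\vz)-G(\vy)\bigr],\qquad q(\vz):=\rho^{-2}\bigl[Q(\vy+\sigma\vz)-Q(\vy)\bigr]=Q(\hat{\vy}+s\vz)-Q(\hat{\vy})
\]
then satisfy $|g_\rho-q|\le M\rho^2\bigl(\norm{\hat{\vy}+sP\vz}_2^4+1\bigr)$. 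The acceptance events are $\{g_\rho\le0\}$ and $\{q\le0\}$, so the indicators can differ only when $|q(\vz)|\le|g_\rho(\vz)-q(\vz)|$.

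To bound the disagreement probability, I would fix $R$ and $\eta>0$ and estimate
\[
\Pr[\text{disagree}]\;\le\;\Pr[\norm{\vz}_2>R]\;+\;\Pr\bigl[|q(\vz)|\le\eta\bigr],
\]
which holds once $\rho$ is small enough that $M\rho^2\bigl((1+sR)^4+1\bigr)<\eta$. The first term is small uniformly by choosing $R$ large. The crux is then anti-concentration of the quadratic polynomial $q$, namely $\Pr[|q(\vz)|\le\eta]\to0$ as $\eta\to0$, uniformly in $\hat{\vy}$ on the unit sphere of $\Gauge^\perp$ and in $s\in[s_0,s_1]$. Explicitly,
\[
q(\vz)=s\,\vz^{\!\top}H\hat{\vy}+\tfrac{s^2}{2}\,\vz^{\!\top}H\vz .
\]
Its quadratic part $\tfrac{s^2}{2}\vz^\top H\vz$ has Hessian $s^2H$, which is nonzero with rank $r\ge1$ and nonzero eigenvalues bounded below by $s_0^2\lambda_r$. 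Hence $q$ is a nonconstant polynomial of degree exactly $2$. Its level set $\{q=0\}$ is a quadric of Lebesgue measure zero, so $\Pr[|q|\le\eta]\to0$ for each fixed parameter pair.

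Uniformity is the main obstacle, and I would settle it by compactness. The map $(\hat{\vy},s,\eta)\mapsto\Pr[|q|\le\eta]$ is monotone in $\eta$ and continuous in $(\hat{\vy},s)$ for each fixed $\eta$, by dominated convergence, using that the boundary $\{|q|=\eta\}$ is null. Dini's theorem on the compact set $\{\hat{\vy}\}\times[s_0,s_1]$ then upgrades pointwise convergence to uniform convergence. Alternatively, Carbery--Wright gives an explicit bound $\Pr[|q|\le\eta]\lesssim(\eta/\operatorname{Var}q)^{1/2}$ with $\operatorname{Var}q\ge s^4\tfrac12\tr H^2\ge s_0^4\lambda_r^2/2$, which is uniform directly. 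Combining the three estimates proves the first claim.

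For the consequence, the rescaled one-step transition from $(\hat{\vy},s)$ sends $\vz$ either to the accepted state $\bigl(\hat{\vy}+s\vz,\,s\aup\bigr)$ or to the rejected state $(\hat{\vy},\,s\adown)$. On $G$ and on $Q$ these outcomes are the same measurable function of $(\vz,\text{accept})$ pushed forward from the same Gaussian. Coupling both chains through the same $\vz$, the total-variation distance between the two laws is at most the disagreement probability, which tends to $0$ uniformly by the above. By Lemma~\ref{s:lem:proj}, it suffices to run this argument in the quotient $\Gauge^\perp$, projecting $\vz$ by $P$ wherever it appears.
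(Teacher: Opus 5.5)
Your proposal follows the paper's proof step for step. Both use the quartic bound on $|G-Q|$ rescaled by $\rho^{-2}$, split the disagreement probability into $\Pr[\norm{\vz}>R]$ plus a small-ball probability for the quadratic $\Delta_Q$, and get anti-concentration from $\Delta_Q$ being a nonconstant polynomial; uniformity comes from Dini's theorem on $\{\norm{\hat{\vy}}=1\}\times[s_0,s_1]$, where you argue full continuity via null level sets and the paper uses upper semicontinuity via portmanteau (either suffices), and the total-variation claim from coupling through the same $\vz$. One small slip in your optional Carbery--Wright remark: for degree $2$ the bound is $\Pr[|q|\le\eta]\lesssim(\eta/\sqrt{\operatorname{Var}q})^{1/2}$, not $(\eta/\operatorname{Var}q)^{1/2}$, though your uniform lower bound on $\operatorname{Var}q$ still gives the uniformity you want.
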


\noindent
Proposition~\ref{prop:local} identifies the terminal dynamics with those of the ES on the ellipsoid $Q$ restricted to $\Gauge^\perp$, whose convergence rate is governed by $\lambda_r/\tr(H)$ \citep{Morinaga2021}; the relevant dimension is the rank $r$, not the nominal $N$. If $H$ were isotropic on $\Gauge^\perp$, $Q$ would be a Sphere in $r$ dimensions and the classical progress-rate constants would apply (Lemma~\ref{s:lem:rate}); the SHG instances considered here are not isotropic (for the top-hat instances of Section~\ref{sec:numerical}, $\kappa_r=1.33$--$1.49$), but numerically the effect of this mild anisotropy on the rate is small (Section~\ref{sec:numerical}). Proposition~\ref{prop:local} is a one-step statement: transferring the \emph{rate} from $Q$ to $G$ would require continuity of the stationary behavior of the normalized chain, which we do not prove. Numerically the two rates coincide (Table~\ref{tab:es-validation}).

\begin{figure}[h!]
\centering
\includegraphics[width=\columnwidth]{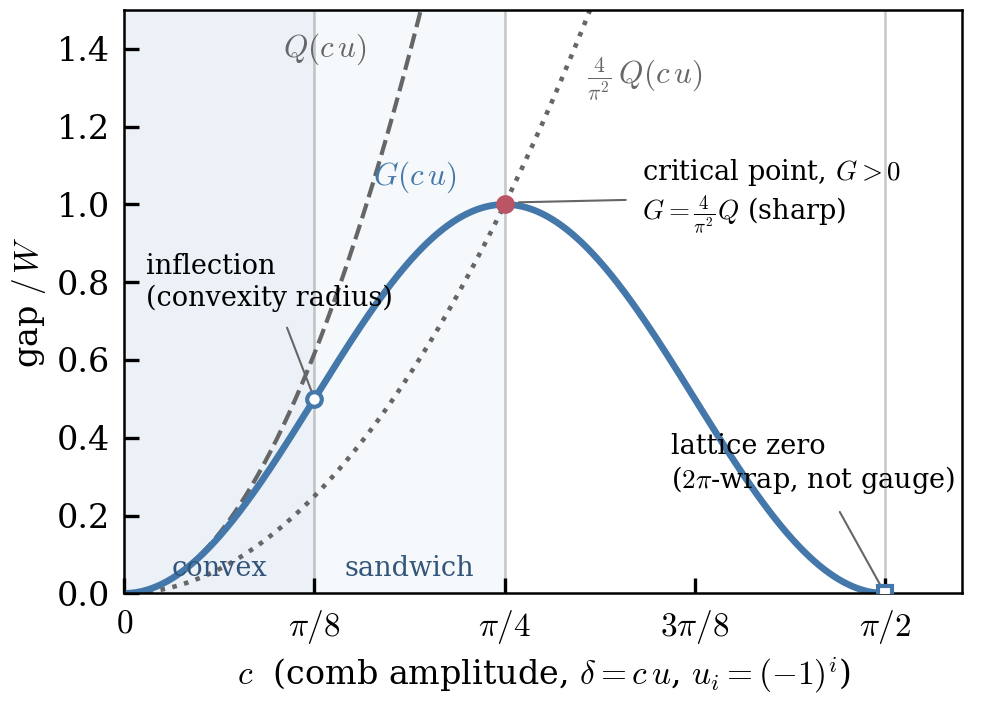}
\caption{\textbf{1D landscape section along the alternating comb} $\vu = ((-1)^i)_{i=0}^{N-1}$ (normalized by $C/8$, where $C = \vu^{\!\top}H\vu > 0$), witnessing the $1{:}2{:}4$ radius hierarchy. The exact gap reduces to $\sin^2(2c)$ (solid blue), invariant to the amplitude profile $\vA$; it lies below the quadratic tangent $Q$ (dashed) everywhere, and above the lower bound $\frac{4}{\pi^2}Q$ (dotted) on the certified region $c\le\pi/4$ only---the touch at $c=\pi/4$ is where the sandwich goes tight, and it breaks beyond. Three milestones:
(i)~$c = \pi/8$ (open circle): inflection point marking the boundary of convexity (Thm.~\ref{thm:whitebox-structure});
(ii)~$c = \pi/4$ (red dot): a critical point with positive gap on $\partial\Treg{1}$, where the sandwich constant $4/\pi^2$ is attained; it caps the quarter-wave barrier, $\cstar\le C/8$ (Lemma~\ref{lem:barrier}); and
(iii)~$c = \pi/2$ (open square): a zero of $G$ outside the gauge---a wrapped affine phase of slope $\pi$, i.e., a point of the wrap lattice.}
\label{fig:comb_section}
\label{fig:comb}
\end{figure}
\begin{table*}[t]
\centering
\small
\setlength{\tabcolsep}{6pt}
\renewcommand{\arraystretch}{1.25}
\begin{tabular}{@{}p{0.15\textwidth}p{0.38\textwidth}p{0.38\textwidth}@{}}
\toprule
 & \textbf{Gradient Descent} \,(white-box) & \textbf{$(1{+}1)$-ES} \,(black-box) \\
\midrule
Landscape region
  & convexity and PL on $\Treg{1/2}$ (eighth-wave)
    \hfill{\footnotesize[Thm.~\ref{thm:whitebox-structure}]}
  & sandwich, star shape and no critical points on $\Treg{1}$ (quarter-wave)
    \hfill{\footnotesize[Cor.~\ref{cor:sandwich}, Lem.~\ref{lem:barrier}]} \\
Certified start
  & $\norm{P\vdelta^0}_2\le\tfrac{1}{8\sqrt{\kappa_r}}$, hence $G(\vdelta^0)\le\tfrac{\lambda_r}{128}$
    \hfill{\footnotesize[Thm.~\ref{thm:whitebox-gd}]}
  & any $\vdelta^0$ with $G(\vdelta^0)<\bstar=\min\{G_{\mathrm{trap}},\tfrac{\lambda_r}{128}\ptgt^{2/r}\}$
    \hfill{\footnotesize[Thm.~\ref{thm:blackbox}]} \\
\quad\emph{--- sharp?}
  & radii yes: the comb pins the convexity radius; PL fails at $\partial\Treg{1}$
    \hfill{\footnotesize[Props.~\ref{cx:prop},\,\ref{cx:crit}]}
  & no; the barrier obeys $\cstar\le\min\{\tfrac{C}{8},\min_jG(\pi\ve_j)\}$
    \hfill{\footnotesize[Lem.~\ref{lem:barrier}]} \\
\addlinespace[3pt]
Rate (per step)
  & $\displaystyle 1-\frac{16}{\pi^{4}\kappa_r}$ \;{\footnotesize(contraction factor)}
    \hfill{\footnotesize[Thm.~\ref{thm:whitebox-gd}]}
  & linear, a.s.\ and in expected hitting time; constant not explicit
    \hfill{\footnotesize[Thm.~\ref{thm:blackbox}]} \\
\addlinespace[3pt]
Terminal regime
  & strong cvx.\ modulo gauge on $\Treg{\theta}$, $\theta<\tfrac12$:
    modulus $\cos(\theta\pi)\lambda_r$
    \hfill{\footnotesize[Thm.~\ref{thm:whitebox-structure}]}
  & the ES on the quadratic tangent $Q$ \hfill{\footnotesize[Prop.~\ref{prop:local}]}; measured $0.88$--$0.96$ of the Sphere reference $2\psi^\star_{1/5}/r$
    \hfill{\footnotesize[Tab.~\ref{tab:es-validation}]} \\
\addlinespace[3pt]
Gauge coordinates
  & frozen --- constants of the motion
    \hfill{\footnotesize[Thm.~\ref{thm:whitebox-gd}]}
  & neutral martingale; freezes as $\sigma_t\to0$
    \hfill{\footnotesize[Lem.~\ref{s:lem:proj}]} \\
Wrap lattice
  & never leaves its basin
    \hfill{\footnotesize[Thm.~\ref{thm:whitebox-gd}]}
  & may hop between lattice copies of the basin
    \hfill{\footnotesize[Thm.~\ref{thm:blackbox}(i)]} \\
Invariances
  & none beyond the affine gauge
  & monotone transforms of $G$ (comparison-based selection) \\
Oracle cost / step
  & full model gradient
  & one comparison \\
\bottomrule
\end{tabular}
\caption{The Quantum Sphere's breakdown of results: white-box (gradient) versus black-box (ES) guarantees.}
\label{tab:ledger}
\end{table*}

\paragraph{Summarizing the theoretical results.}
Table~\ref{tab:ledger} provides the breakdown of results. Both families rest on the same two facts---the shell representation and the sandwich of Cor.~\ref{cor:sandwich}---but consume them differently.
Gradient descent consumes the derivative $\nabla G\propto\sin\!\bigl(2\inner{\va_k}{\vdelta}\bigr)$, whose curvature $\cos\!\bigl(2\inner{\va_k}{\vdelta}\bigr)$ keeps its sign only while $|\inner{\va_k}{\vdelta}|\le\pi/4$---that is, on $\Treg{1/2}$---and in return delivers an explicit rate.
Comparison-based selection instead sees only sublevel sets: the sandwich traps each basin between two balls, the radial monotonicity of $G$ on $\Treg{1}$ makes the basin a single star-shaped piece, and the absence of critical points inside $\Treg{1}$ keeps the success probability of small steps near $\tfrac12$. None of this requires convexity, so the black-box analysis runs on sublevel sets inside the quarter-wave region, at the price of a non-explicit rate.
The certified starting levels of the two algorithms are of the same order, $\lambda_r/128$. The periodicity of $G$, absent from the classical Sphere, affects only the black-box side: a mutation can jump to a lattice copy of the basin, whereas gradient descent never leaves its own. In both cases the relevant dimension is the rank $r$ rather than the nominal $N$.

Since these guarantees are uniform over their regions while perturbations---mutations, probes, noise---carry no structure, a natural stress test is a configuration extreme on as many shell terms as it can simultaneously saturate: the \emph{comb} $\vdelta=c\,((-1)^{i})_i$. Each shell form obeys $|\ell_k|\le2\norm{\vdelta}_\infty$, and the comb is precisely the sign pattern that saturates this bound on every term it activates ($\ell_k=\pm2c$ on the parity-mixed pairs of even shells, and $\ell_k=0$, hence inactive, elsewhere). Random perturbations do the opposite---they spread across the shell forms and populate the interior of the sandwich (cf.~Table~\ref{tab:sandwich-validation} below).
This alternating comb makes the tests concrete. Its directional curvature is $\tfrac{d^2}{dc^2}G(c\vu)=C\cos(4c)$ with $C=\vu^{\!\top}H\vu>0$, turning the Hessian indefinite the instant $c$ passes $\pi/8$---where $|\ell_k|=\pi/4$, the comb saturating the coordinate-to-argument bound. The three witness radii then stand in ratio $\pi/8:\pi/4:\pi/2 = 1:2:4$ for convexity, sandwich, and wrap (Figure~\ref{fig:comb}).

\section{Numerical Corroboration}\label{sec:numerical}
We corroborate the results at $N\in\{32,64,80,160\}$: the four-point shell representation, the sandwich, white-box gradient descent (Thms.~\ref{thm:whitebox-structure}--\ref{thm:whitebox-gd}), and black-box $(1{+}1)$-ES convergence (Theorem~\ref{thm:blackbox} and Proposition~\ref{prop:local}). Checks of identities that are proved exactly (T1--T3, S1--S3) are \emph{implementation} checks---agreement of independent evaluators---rather than evidence for the identities themselves.
All tests were implemented in \texttt{python3}.

\paragraph{Numerical validation of the shell structure.}
A five-part driver verifies the four-point shell representation and its corollaries; Table~\ref{tab:shell-validation} reports the worst case over four dimensions. 
This driver uses the circular (length-$N$ DFT) evaluation, a distinct variant in which the delay is not a symmetry: its Hessian has exactly one zero mode, whereas the canonical linear-convolution instance of Section~\ref{sec:def_qsphere} carries the full 2D gauge kernel (rank $N-2$; Table~\ref{tab:sandwich-validation}). 
The T2 ripple is carried exclusively by the wraparound quadruples of the circular (DFT) evaluation (four-point index sums differing by $\pm N$), and T4--T5 reproduce the plateau and smoothing curves published two decades ago by~\cite{ShirBaeck2007}.

\begin{table}[t]
\centering
\caption{Shell validation over $N\in\{32,64,80,160\}$.}
\label{tab:shell-validation}
\setlength{\tabcolsep}{4pt}
\begin{tabular}{@{}llr@{}}
\toprule
Test & Identity checked & Max.\ deviation \\
\midrule
T1 & shell sum $=$ DFT yield (rel.) & $3.3\times10^{-17}$ \\
T2 & exact gauges: const.\ / int.-slope (rel.\ gap) & $8.5\times10^{-16}$ \\
   & ripple ampl.\ vs.\ wraparound closed form & $10^{-13}$ (rel.) \\
T3 & $H = \sum_k 2w_k\,\va_k^{\vphantom{\top}}\va_k^{\!\top}$ vs.\ 2nd diff.\ (rel.) & $8.5\times10^{-8}$ \\
   & zero modes (gauge, circular eval.) & $1$ at every $N$ \\
   & $\kappa_{\mathrm{DW}}$ (nonzero modes) & $804$--$828$ \\
T4 & landscape's bottom $\E[Y]/Y(0)$, pairing vs.\ MC (2000 draws) & $8.6\times10^{-4}$ \\
T5 & Debye--Waller smoothing vs.\ MC, $\sigma\in[0,2]$ & $1.5\times10^{-2}$ \\
\bottomrule
\end{tabular}
\end{table}

\paragraph{Validating the sandwich.}
We validate the \textit{landscape structure} result---Cor.~\ref{cor:sandwich}, $(4/\pi^{2})\,Q \le G \le Q$ on $\Treg{1}=\{\min_{\vg\in\Gauge}\norm{\vdelta-\vg}_\infty\le\pi/4\}$---with a six-part driver run over uniform (flat), Gaussian, and random amplitude profiles (Table~\ref{tab:sandwich-validation}).
Two independent evaluators---the physical self-convolution yield and the shell $\sin^2$ representation---agree to machine precision, with exact gauge kernel and analytic Hessian of rank $N-2$ (S1--S3).
The sandwich itself shows \emph{zero} violations over $16{,}000$ uniform draws from $\Treg{1}$, with the observed ratio $G/Q$ concentrating well inside the certified interval $[4/\pi^{2},1]$ and narrowing as $N$ grows (S4). Both constants are sharp: the quarter-wave comb $(\pi/4)(-1)^{i}$ attains the lower constant $4/\pi^{2}$ \emph{exactly}, and $G/Q\to1$ as $\vdelta\to0$, with deficit $O(\norm{P\vdelta}^2)$ (S5). Beyond the region, the bound persists up to the pointwise Jordan condition $\max_k|\ell_k|\le\pi/2$ and fails past it: $\Treg{1}$ is the certified core, not an artifact of the proof (S6).

\begin{table}[t]
\centering
\caption{Sandwich-theorem validation: worst case over $N\in\{32,64,80,160\}$ and three amplitude profiles.}
\label{tab:sandwich-validation}
\setlength{\tabcolsep}{4pt}
\begin{tabular}{@{}llr@{}}
\toprule
 & Property checked & Result \\
\midrule
S1 & direct yield $=$ shell $\sin^2$ form (rel.) & $1.2\times10^{-15}$ \\
S2 & gauge kernel exact (incl.\ $\pi$-slope comb) & $\le 1.0\times10^{-9}$ \\
S3 & analytic vs.\ FD Hessian; rank $=N-2$ & $2.4\times10^{-6}$; exact \\
S4 & sandwich on $\Treg{1}$, $16{,}000$ draws & $0$ violations \\
   & \quad observed $G/Q$ (allowed $[0.405,1]$) & $[0.725,\,0.932]$ \\
S5 & lower const.\ attained by $\tfrac{\pi}{4}(-1)^i$ comb & $1.1\times10^{-16}$ \\
   & \quad upper const.\ approached as $\vdelta\to0$: $1-G/Q$ & $4.0\times10^{-9}$ \\
S6 & bound up to Jordan onset $\max|\ell|=\tfrac{\pi}{2}$ & confirmed \\
\bottomrule
\end{tabular}
\end{table}
\paragraph{White-box validation: gradient descent.}
The white-box driver works in the class form
$G(\vdelta)=\sum_k w_k\sin^2\!\langle \va_k,\vdelta\rangle$---a sum of nonnegative terms with no cancellation floor---allowing gradient descent to be tracked down to $10^{-24}\,G_0$
(Table~\ref{tab:whitebox-validation}).
For Theorem~\ref{thm:whitebox-structure}, the sampled Hessian is positive semidefinite throughout
$\Treg{1/2}$ and the convexity radius is \emph{exact}: along the alternating
comb $\vu=((-1)^i)$ the curvature is $(\vu^{\!\top}\!H\vu)\cos 4c$, vanishing
at $c=\pi/8$ (W1); the PL inequality
$\|\nabla G\|^2\ge(32\lambda_r/\pi^4)\,G$ shows zero violations over
$12{,}000$ draws, with the smallest sampled ratio ${\ge}6.4\times$ the
bound, and strong convexity modulo gauge holds at $\theta=1/4$ (W2).
The boundary is sharp in the strongest sense: the comb at $c=\pi/4$ is an exhibited critical point of the instance with $G>0$, so PL cannot extend to $\Treg{1}$ (W4). For Theorem~\ref{thm:whitebox-gd} under its verbatim hypotheses, the per-step contraction of $G$ respects $1-16/(\pi^4\kappa_r)$ at \emph{every} iteration, the invariant $\|P\vdelta^t\|_2\le\pi/16$ is never violated, and the gauge coordinates are frozen to machine precision---the white-box counterpart of the neutral gauge walk of the (1+1)-ES reported below (W3).

\begin{table}[t]
\centering
\caption{Validating Thms.~\ref{thm:whitebox-structure}--\ref{thm:whitebox-gd}. W1/W2/W4: $N\in\{32,64,80\}$;
W3: $N\in\{12,\dots,80\}$ ($\kappa_r=1.36$--$1.48$).}
\label{tab:whitebox-validation}
\setlength{\tabcolsep}{3pt}
\begin{tabular}{@{}llr@{}}
\toprule
 & Property checked & Result \\
\midrule
W1 & Hessian PSD on $\Treg{1/2}$ ($3{,}000$ pts/size) & confirmed \\
   & \quad comb curvature $=(\vu^{\!\top}\!H\vu)\cos 4c$ (rel.) & $2.3\times10^{-8}$ \\
W2 & PL on $\Treg{1/2}$, $12{,}000$ draws total & $0$ violations \\
   & \quad min ratio / bound $32\lambda_r/\pi^4$ & $\ge 6.4$ \\
   & \quad strong convexity margin at $\theta=\tfrac14$ & positive \\
W3 & per-step ratio $\le 1-16/(\pi^4\kappa_r)$ (to $10^{-24}G_0$) & $0$ violations \\
   & \quad invariant $\|P\vdelta^t\|_2\le\pi/16$; gauge drift & held; $\le2\!\times\!10^{-16}$ \\
W4 & comb at $c=\pi/4$: critical point, $G>0$ & $\|\nabla G\|\le5\!\times\!10^{-12}$ \\
   & \quad analytic PL-ratio overlay (rel.) & $3\times10^{-13}$ \\
\bottomrule
\end{tabular}
\end{table}

\paragraph{Black-box validation: (1+1)-ES.}
We run the algorithm of Theorem~\ref{thm:blackbox}---per-step success rule with $\aup=e^{0.1}$, $\adown=e^{-0.025}$ ($\ptgt=1/5$), the constants used by \citet{AAGM2022}---on the top-hat Quantum Sphere, initialized at a random direction on $\partial\Treg{1/2}$ ($\norm{P\vdelta^0}_\infty=\pi/8$, $G(\vdelta^0)/Y(0)\approx0.09$) with $\sigma_0=1.224\,\norm{P\vdelta^0}_2/r$, ten seeds per $N$. 
This start lies outside the certified sublevel set, since $G(\vdelta^0)\ge\frac{2\lambda_r}{\pi^2}\norm{P\vdelta^0}_2^2\ge\lambda_r/32>\bstar$ (Corollary~\ref{cor:sandwich}); by elitism and the Markov property, Theorem~\ref{thm:blackbox} applies from the first time $G(\vdelta^t)\le b$.
As Theorem~\ref{thm:blackbox} does not provide an explicit rate, we test what it and Proposition~\ref{prop:local} do assert.
\emph{(a)~Linear convergence on the certified set.} We fit the log-rate of $G_t$ on the window $10^{-9}Y(0)\le G\le b$, $b=\frac{\lambda_r}{128}\ptgt^{2/r}$, which lies inside the sublevel set of Theorem~\ref{thm:blackbox} (Table~\ref{tab:es-validation}). Every run is linear there (coefficient of determination $R^2\ge0.99$ for all $40$ fits), and the ES enters this set from an initial gap two to three orders of magnitude above it.
\emph{(b)~Terminal regime.} Running the same algorithm with the same seeds on the quadratic tangent $Q$ and on the isotropic Sphere gives rates within $2\%$ (on $Q$) and $4\%$ (on the Sphere) of those on $G$: the ES does not distinguish the Quantum Sphere from its tangent. 
Relative to the Sphere reference $2\psi^\star_{1/5}/r$ of Lemma~\ref{s:lem:rate}, the rate rises from $0.88$ ($r=30$) to $0.96$ ($r=158$); a rank sweep $r=\{8,\ldots,78\}$ shows the same rise, $0.60\to0.95$, identically for $G$ and $Q$.
\emph{(c)~Proposition~\ref{prop:local} directly.} Along the trajectories we draw extra candidates and count the steps on which $G$ and $Q$ disagree about acceptance (Table~\ref{tab:mismatch}). The disagreement vanishes as $G\to0$, roughly in proportion to $G$ (i.e., to $\norm{P\vdelta}^2$), with a single disagreement in $1.2\times10^6$ probes at $10^{-9}\le G/Y(0)<10^{-4}$; below $10^{-10}Y(0)$ the direct evaluation $Y(0)-Y$ loses precision to cancellation.
\emph{(d)~Step size and gauge.} In the fit window the success rate equals the balance identity $p=\ptgt-v/(2\ln(\aup/\adown))$, $v$ the measured log-rate of $G$, to within $5\times10^{-4}$ at every $N$, confirming that $\sigma_t$ tracks $\norm{P\vdelta^t}$ (normalized step $\sigma r/\rho=1.77$--$2.15$). The gauge coordinates move by $0.27$ during the first half of a run at $N=160$ and by $5\times10^{-4}$ during the second: a martingale that freezes as $\sigma_t\to0$ (Lemma~\ref{s:lem:proj}), in contrast with the frozen gauge of gradient descent (W3).
\emph{(e)~Traps.} Quasi-Newton (L-BFGS) runs from $2{,}150$ i.i.d.\ uniform random phase vectors---far-field starts with $G/Y(0)\approx0.95$---on top-hat instances with $N=8$--$160$ and on Gaussian and random amplitude profiles with $N\in\{32,64\}$ all reached the optimum lattice ($G<10^{-6}Y(0)$). We found no spurious local minimizer, consistent with $G_{\mathrm{trap}}=+\infty$; conditional on this, $\bstar=\frac{\lambda_r}{128}\ptgt^{2/r}$ and the barrier satisfies $\cstar\ge\lambda_r/8$ (Lemma~\ref{lem:barrier}).

\begin{table}[t]
\centering
\caption{Validating Theorem~\ref{thm:blackbox} and Proposition~\ref{prop:local} on the top-hat instance. Certified level $b=\frac{\lambda_r}{128}\ptgt^{2/r}$ and barrier bracket $\lambda_r/8\le\cstar\le\min_jG(\pi\ve_j)$ (lower end conditional on the absence of traps), relative to $Y(0)$; per-iteration log-rate of $G_t$ on the window $[10^{-9}Y(0),\,b]$, mean [min, max] over $10$ seeds ($\times10^{-3}$); ratio to the rate of the same ES on the quadratic tangent $Q$; and fraction of the Sphere reference $2\psi^\star_{1/5}/r$.}
\label{tab:es-validation}
\small
\setlength{\tabcolsep}{4pt}
\begin{tabular}{@{}rrcccccc@{}}
\toprule
$N$ & $r$ & $\kappa_r$ & $b/Y(0)$ & $\cstar/Y(0)$ & rate on $G$ & $G/Q$ & $v\,r/2\psi^\star_{1/5}$ \\
\midrule
$32$  & $30$  & $1.45$ & $6.8\times10^{-4}$ & $[0.012,\,0.17]$  & $11.06\;[9.72,\,12.03]$ & $1.00$ & $0.88$ \\
$64$  & $62$  & $1.48$ & $3.5\times10^{-4}$ & $[0.006,\,0.089]$ & $5.45\;[5.22,\,5.97]$  & $1.01$ & $0.90$ \\
$80$  & $78$  & $1.48$ & $2.9\times10^{-4}$ & $[0.005,\,0.072]$ & $4.45\;[4.01,\,4.70]$  & $1.00$ & $0.92$ \\
$160$ & $158$ & $1.49$ & $1.4\times10^{-4}$ & $[0.002,\,0.037]$ & $2.28\;[2.13,\,2.48]$  & $1.02$ & $0.96$ \\
\bottomrule
\end{tabular}
\end{table}

\begin{table}[t]
\centering
\caption{Proposition~\ref{prop:local} in action: frequency with which $G$ and $Q$ disagree on accepting the same candidate, along ES trajectories ($10$ seeds, $20$ probe candidates per iteration), binned by the gap level.}
\label{tab:mismatch}
\small
\setlength{\tabcolsep}{5pt}
\begin{tabular}{@{}lcccc@{}}
\toprule
$G/Y(0)$ & $[10^{-2},10^{-1})$ & $[10^{-3},10^{-2})$ & $[10^{-4},10^{-3})$ & $[10^{-9},10^{-4})$ \\
\midrule
$N=32$  & $1.1\times10^{-3}$ & $9.5\times10^{-5}$ & $2.3\times10^{-5}$ & $4.7\times10^{-6}$ ($1/2.1\times10^{5}$) \\
$N=160$ & $1.4\times10^{-3}$ & $1.4\times10^{-4}$ & $3.3\times10^{-5}$ & $0$ ($0/1.0\times10^{6}$) \\
\bottomrule
\end{tabular}
\end{table}


\section{Computational Challenges}\label{sec:challenges}
The results above concern the ideal, noiseless, dispersion-free landscape.
We close by naming what separates the theorems from the laboratory --- and, in one case, what separates the certified rates from the observed dynamics. 
We order the challenges by how much of the machinery already transfers: from a channel the present framework resolves exactly, through regimes it renders precise but leaves open, to benchmarks it makes constructible.

\paragraph{Additive input (actuator) noise: a closed channel.}
Let the applied phase be additively perturbed, $\phasev \mapsto
\phasev + \noisev$ with $\noisev \sim \mathcal{N}(\zerov, \Sigma)$
--- as in \cite{BOS04}. 
Since the yield is a trigonometric polynomial on the shell,
$Y(\phasev)=\tfrac12\sum_{k\in K} w_k\cos\bigl(2\inner{\va_k}{\phasev}\bigr)$
(the phase-blind pairs, for which $\va_k=\mathbf{0}$, carry the incoherent constant \( Y_{\mathrm{inc}}\)), 
the noise-averaged landscape is obtained \emph{exactly} by evaluating the noise's characteristic function at each shell vector: every weight acquires a Debye--Waller attenuation (with \( \tilde w_k = w_k\, e^{-2\va_k^{\!\top}\Sigma\,\va_k}\)),
\begin{align}
\label{eq:dw-reweight}
\E\bigl[Y(\phasev + \noisev)\bigr]
  &= \frac12\sum_{k \in K} \tilde w_k \cos\bigl(2\inner{\va_k}{\phasev}\bigr),
\end{align}
so the averaged landscape is the \emph{same} object with a reweighted shell: the maximizer is preserved, the flat directions are preserved, and every attained constant 
transfers under $w_k \mapsto \tilde w_k$.
The price is a residual yield floor $\tfrac12\sum_k (w_k - \tilde w_k)$, with leading behavior $\tfrac{1}{2}\tr(H\Sigma)$ --- the systematic loss term of the ESs' quality gain. 
In this input channel, noise deforms the \emph{landscape}, reweighting each shell term separately (rather than rescaling $Y$ globally), and does so in a non-disruptive manner. This describes the noise-averaged landscape; it does not by itself give guarantees for an optimizer that receives individual noisy evaluations. 

The channel that remains open is the output: \emph{signal} noise, where the observed yield is perturbed \cite{Arnold2002,BOS04}. 
We therefore pose an open question: by how much does the certified per-step contraction $1 - 16/(\pi^4\kappa_r)$ degrade as a function of the noise-to-signal ratio at the selection step, and does elitist success-based adaptation retain the terminal linear phase? Both the constants and the trust regions in which to pose this question are now explicit.

\paragraph{The far field: the bottom of the landscape.}
At the bottom of the landscape, climbing is not guaranteed anymore to be fast -- even when no traps are present, the gradient may be too small relative to the remaining yield, i.e., by a {\L}ojasiewicz-type inequality in the sense of \citep{lojasiewicz1963propriete} with no explicit exponent or constant available in this regime.
Our certified rates are valid at the top of the landscape; the empirical efficiency observed since the earliest SHG experiments~\cite{Shir08_QCE,Roslund2009b} extends far beyond it, with measured ascent trajectories staying nearly straight well outside any certified region~\cite{Nanduri2015}. 
Attaining a {\L}ojasiewicz inequality would bound the trajectory length those measurements find short relative to the displacement achieved --- and closing that gap, a quantitative plateau-exit estimate at the bottom of the landscape --- is, in our view, the right successor to the question of why effective controls appear easy to find~\cite{HoRabitz2006}.

\paragraph{Global certification: the Discrete Optimization route.}
Every certificate here is local, and the torus geometry says why it must be: $G$ depends on $\vdelta$ only through $e^{\mathrm{i}\vdelta}$, so optima form a lattice (per-pixel $2\pi$ wraps $\times$ the gauge plane), 
and past a computable wrap barrier---relative yield
$\varepsilon_{\mathrm{merge}}$, set by the weakest aperture-edge
amplitude, cf.\ the half-wrap bound of Lemma~\ref{lem:barrier}---basins merge and
no ball-shaped trust region survives in $\mathbb{R}^{N}$. Yet the same gap is exactly a weighted \emph{angular-synchronization} objective on the shell hypergraph~\cite{SINGER2011}, whose semidefinite lift over unit-diagonal PSD matrices is native to the torus, where the wrap lattice is invisible.
The open challenge is a tightness or no-spurious-minima guarantee for the shell-coupled instance beyond the trust region, attacking the far field from the algorithmic side: not how fast local search escapes the plateau, but whether the lift renders escape unnecessary.

\paragraph{SHG with Glass: a scalable dispersion benchmark.}
Next, we propose a concrete benchmark family. Inserting a dispersive slab (``glass'') of thickness $L$ into the beam path imprints a known quadratic spectral phase $\dispv(L)$ on the pulse. 
If the compensating phase is exactly representable by the shaper, the landscape is a pure translate, $Y_L(\phasev) = Y(\phasev + \dispv(L))$, with the global optimum shifted to the mask $\phasev^{\mathrm{tgt}} = -\dispv(L)$. Recovering it is a \emph{hidden-target} continuous problem in the full $N$-dimensional pixel space, a physically generated analogue of Boolean \textbf{OneMax} \cite{Droste2006} and \textbf{two-color Mastermind} \cite{doerr2013playingmastermindcolors}---hidden-target search whose black-box query complexity is classical---here with a target manufactured by a real optical element. 
Realization sharpens the challenge: a finite pixelized shaper with bounded per-pixel stroke cannot represent the continuous $\dispv$ exactly, so the reachable optimum is itself a constrained object. 
This yields a scalable family with independent, physically meaningful knobs---chirp magnitude ($L$), pixel count ($N$), and pixel-representation fidelity. Two properties make it a rigorous benchmark rather than a toy-model: the yield at the top $Y^{\star}$ is known in closed form; and the instance generator is fully specified by the physical forward model, ensuring reproducibility and dimensional scalability. The simulator infrastructure introduced herein supports the family directly, and we pose it as a challenge to the optimization community.

\paragraph{The minimization side: low-autocorrelation sequences.}
Writing $x_i=A_ie^{\mathrm{i}\delta_i}$ and $R_k=\sum_jx_{j+k}\cj{x_j}$ for the aperiodic autocorrelation, Parseval's identity gives
\[
Y(\vdelta)\;=\;\frac{1}{2\pi}\int_0^{2\pi}\Bigl|\sum_ix_ie^{\mathrm{i}ti}\Bigr|^4\dd t\;=\;R_0^2+2\sum_{k\ge1}|R_k|^2 ,
\qquad R_0=\sum_iA_i^2 ,
\]
so the phases act only through the sidelobes $R_{k\ge1}$. Maximizing $Y$---the Quantum Sphere---concentrates the autocorrelation, and its optimum is the transform-limited pulse. \emph{Minimizing} $Y$ flattens it: for $\vA=\mathbf 1$ and binary phases $\delta_i\in\{0,\pi\}$, $Y=N^2+2E$, where $E=\sum_{k\ge1}R_k^2$ is the energy of the low-autocorrelation binary sequence (LABS) problem \cite{Golay1977,LAC_Militzer}, a notoriously hard combinatorial benchmark \cite{PackebuschMertens2016} and a current testbed for quantum optimization \cite{Koch2026}; for continuous phases one obtains its polyphase counterpart. The same SHG apparatus therefore generates both an easy, analyzable maximization problem and, on binary phases, a notoriously hard minimization problem over the same shell representation (Theorem~\ref{thm:exact}), a physically realized control of difficulty that we pose as a further challenge.

\section{Conclusion}\label{sec:conclusion}
Two decades after the SHG landscape's trap-free topology was established, we supply its local metric geometry: an exact $\sin^2$ shell representation, a sharp two-sided quadratic sandwich, and linear convergence guarantees for both white-box and black-box optimization---with explicit constants on the white-box side, and on the black-box side a certified sublevel set whose level is set by $\lambda_r$ and, through the quarter-wave barrier, by the absence of traps. A single witness, the alternating comb, pins the convexity, sandwich and wrap radii in ratio $1{:}2{:}4$, and numerical simulations corroborate the predictions, including the terminal regime in which the Evolution Strategy sees only the quadratic tangent of the peak. Topology and metric thus enter together: trap-freeness is precisely what turns the sandwich into an explicit certified level for comparison-based search. The gap between topology and rate guarantees is closed with drift analysis for Evolution Strategies~\cite{MorinagaAkimoto2019,AAGM2022} and the Polyak--\L{}ojasiewicz framework~\citep{lojasiewicz1963propriete,karimi2016linear}---tools unavailable when the landscape was first analyzed.

\paragraph{Exact Gaussian smoothing.}
The statistical gradient of Roslund and Rabitz~\cite{Roslund2009}, demonstrated on this very SHG system, is the search gradient of ESs and the Gaussian-smoothing gradient of zeroth-order optimization~\cite{Wierstra2014,NesterovSpokoiny2017}: an unbiased estimate of $\nabla Y_{\sigma}$ for $Y_{\sigma}(\vdelta)=\E_{\vxi}[Y(\vdelta+\sigma\vxi)]$. What SHG adds is that this surrogate is known in closed form (Eq.~\ref{eq:dw-reweight}) --- so the laboratory estimator measures the \emph{exact} gradient of an explicitly known landscape.


\paragraph{Why SHG trajectories run straight.}
The sandwich suggests a mechanism for the empirically straight SHG ascent~\cite{Roslund2009b}, as quantified by the linearity ratio $R\ge1$ (path length over displacement) of \citet{Nanduri2015}. For the quadratic tangent, gradient flow is exactly $\vdelta(s)=e^{-Hs}\vdelta_0$, with $R=1$ if and only if $\vdelta_0$ is an eigenvector of $H$ modulo gauge; Cor.~\ref{cor:sandwich} keeps $G$ within a factor $\pi^2/4$ of this tangent throughout the quarter-wave region. This does not by itself establish the tangent flow for $G$, but it suggests that high-$R$ trajectories route near the \textit{torus saddles} formed by $2\pi$-wraps of weak edge pixels---the half-wraps of Lemma~\ref{lem:barrier}, and the same wrapping that laboratory ES campaigns had to address explicitly before ascent became routine \cite{Shir08_QCE}.

\paragraph{A synchronization instance for the AI community.} 
A profound insight emerges from the exact representation itself: the Fourier transform, which physically defines the yield, is not strictly necessary for the algorithmic analysis. Once the shell weights are isolated, the Quantum Sphere behaves at the \textit{top of the landscape} as a weighted angular-synchronization objective on the torus \cite{SINGER2011}. 
It thus aligns with the discrete-optimization family of Kuramoto \cite{Kuramoto1975,Strogatz2000} and Max-Cut \cite{Goemans1995} relaxations, distinguished primarily by the coupling of all shell angles through the linear images $\vv^{(s)} = D_s\vdelta$ of a single shared decision vector. 
The smallest nonzero Hessian eigenvalue $\lambda_r$ fulfills a dual role: it acts both as the PL modulus for the local convergence theory and as the Fiedler value \cite{Fiedler1973} governing relaxation tightness. This duality suggests that semidefinite lifting techniques \cite{Goemans1995,SINGER2011} could potentially extend our local trust-region certificates to global guarantees. 
Consequently, we propose the Quantum Sphere to the broad AI community in two distinct directions: (i) as an analyzable non-synthetic continuous benchmark; and (ii) as a synchronization discrete open problem. 

\section*{Acknowledgments}
The author is indebted to Jonathan Roslund, for his significant contributions that ignited this line of work.
During the preparation of this manuscript, the author utilized Google Gemini Pro and Anthropic Claude Opus to assist with readability and exposition. Following the use of these tools, the author thoroughly reviewed, edited, and verified all content, and takes full responsibility for the scientific results and final text.


\begin{thebibliography}{49}
\providecommand{\natexlab}[1]{#1}
\providecommand{\url}[1]{\texttt{#1}}
\expandafter\ifx\csname urlstyle\endcsname\relax
  \providecommand{\doi}[1]{doi: #1}\else
  \providecommand{\doi}{doi: \begingroup \urlstyle{rm}\Url}\fi

\bibitem[Shir and B\"{a}ck(2022)]{SeqExpEA_TutorialGECCO2022}
Ofer~M. Shir and Thomas B\"{a}ck.
\newblock Sequential experimentation by evolutionary algorithms.
\newblock In \emph{Proceedings of the Genetic and Evolutionary Computation
  Conference Companion}, GECCO '22, page 1450–1468, New York, NY, USA, 2022.
  Association for Computing Machinery.
\newblock ISBN 9781450392686.
\newblock \doi{10.1145/3520304.3533631}.
\newblock URL \url{https://doi.org/10.1145/3520304.3533631}.

\bibitem[{\L}ojasiewicz(1963)]{lojasiewicz1963propriete}
Stanis{\l}aw {\L}ojasiewicz.
\newblock Une propri{\'e}t{\'e} topologique des sous-ensembles analytiques
  r{\'e}els.
\newblock In \emph{Les {\'e}quations aux d{\'e}riv{\'e}es partielles}, volume
  117 of \emph{Colloques Internationaux du CNRS}, pages 87--89. {\'E}ditions du
  CNRS, Paris, 1963.

\bibitem[Polyak(1963)]{polyak1963gradient}
B.T. Polyak.
\newblock Gradient methods for the minimisation of functionals.
\newblock \emph{USSR Computational Mathematics and Mathematical Physics},
  3\penalty0 (4):\penalty0 864--878, 1963.
\newblock ISSN 0041-5553.
\newblock \doi{10.1016/0041-5553(63)90382-3}.

\bibitem[Karimi et~al.(2016)Karimi, Nutini, and Schmidt]{karimi2016linear}
Hamed Karimi, Julie Nutini, and Mark Schmidt.
\newblock Linear convergence of gradient and proximal-gradient methods under
  the polyak-\l{}ojasiewicz condition.
\newblock In \emph{European Conference on Machine Learning and Knowledge
  Discovery in Databases - Volume 9851}, ECML PKDD 2016, page 795–811,
  Berlin, Heidelberg, 2016. Springer-Verlag.
\newblock ISBN 9783319461274.
\newblock \doi{10.1007/978-3-319-46128-1_50}.
\newblock URL \url{https://doi.org/10.1007/978-3-319-46128-1_50}.

\bibitem[Boyd and Vandenberghe(2004)]{Boyd}
Stephen Boyd and Lieven Vandenberghe.
\newblock \emph{Convex Optimization}.
\newblock Cambridge University Press, New York, 2004.
\newblock ISBN 9781107394001.

\bibitem[Auger and Doerr(2011)]{Auger2011_TRS}
Anne Auger and Benjamin Doerr.
\newblock \emph{Theory of Randomized Search Heuristics: Foundations and Recent
  Developments}.
\newblock World Scientific Publishing Co., Inc., River Edge, NJ, USA, 2011.
\newblock ISBN 9789814282666.
\newblock \doi{10.1142/7438}.

\bibitem[B{\"a}ck et~al.(2013)B{\"a}ck, Foussette, and
  Krause]{Baeck2013contemporary}
Thomas B{\"a}ck, Christophe Foussette, and Peter Krause.
\newblock \emph{Contemporary Evolution Strategies}.
\newblock Natural Computing Series. Springer Berlin, Heidelberg, 2013.
\newblock ISBN 9783642401374.
\newblock \doi{10.1007/978-3-642-40137-4}.

\bibitem[Emmerich et~al.(2025)Emmerich, Shir, and Wang]{ESchapter2025}
Michael Emmerich, Ofer~M. Shir, and Hao Wang.
\newblock \emph{Evolution Strategies}, pages 89--123.
\newblock Springer Nature Switzerland, Cham, 2025.
\newblock ISBN 978-3-032-00385-0.
\newblock \doi{10.1007/978-3-032-00385-0_13}.

\bibitem[Rechenberg(1973)]{Rechenberg1973}
Ingo Rechenberg.
\newblock \emph{{Evolutionsstrategies: Optimierung technischer Systeme nach
  Prinzipien der biologischen Evolution}}.
\newblock Frommann-Holzboog Verlag, Stuttgart, Germany, 1973.

\bibitem[Beyer(2001)]{Beyer2001}
Hans-Georg Beyer.
\newblock \emph{The Theory of Evolution Strategies}.
\newblock Natural Computing Series. Springer, Heidelberg, 2001.
\newblock \doi{10.1007/978-3-662-04378-3}.

\bibitem[J{\"a}gersk{\"u}pper(2006)]{Jaegerskuepper2006}
Jens J{\"a}gersk{\"u}pper.
\newblock How the (1+1)~{ES} using isotropic mutations minimizes positive
  definite quadratic forms.
\newblock \emph{Theoretical Computer Science}, 361\penalty0 (1):\penalty0
  38--56, 2006.
\newblock \doi{10.1016/j.tcs.2006.04.004}.

\bibitem[Morinaga and Akimoto(2019)]{MorinagaAkimoto2019}
Daiki Morinaga and Youhei Akimoto.
\newblock Generalized drift analysis in continuous domain: linear convergence
  of (1+1)-es on strongly convex functions with lipschitz continuous gradients.
\newblock In \emph{Proceedings of the 15th ACM/SIGEVO Conference on Foundations
  of Genetic Algorithms}, FOGA '19, page 13–24, New York, NY, USA, 2019.
  Association for Computing Machinery.
\newblock ISBN 9781450362542.
\newblock \doi{10.1145/3299904.3340303}.
\newblock URL \url{https://doi.org/10.1145/3299904.3340303}.

\bibitem[Akimoto et~al.(2022)Akimoto, Auger, Glasmachers, and
  Morinaga]{AAGM2022}
Youhei Akimoto, Anne Auger, Tobias Glasmachers, and Daiki Morinaga.
\newblock Global linear convergence of evolution strategies on more than smooth
  strongly convex functions, 2022.
\newblock URL \url{https://arxiv.org/abs/2009.08647}.

\bibitem[Ollivier et~al.(2017)Ollivier, Arnold, Auger, and Hansen]{IGO}
Yann Ollivier, Ludovic Arnold, Anne Auger, and Nikolaus Hansen.
\newblock Information-geometric optimization algorithms: a unifying picture via
  invariance principles.
\newblock \emph{J. Mach. Learn. Res.}, 18\penalty0 (1):\penalty0 564–628,
  January 2017.
\newblock ISSN 1532-4435.
\newblock URL \url{http://jmlr.org/papers/v18/14-467.html}.

\bibitem[Koch et~al.(2022)Koch, Boscain, Calarco, et~al.]{Koch2022}
Christiane~P. Koch, Ugo Boscain, Tommaso Calarco, et~al.
\newblock {Quantum optimal control in quantum technologies. Strategic report on
  current status, visions and goals for research in Europe}.
\newblock \emph{EPJ Quantum Technol.}, 9\penalty0 (1):\penalty0 19, 2022.
\newblock \doi{10.1140/epjqt/s40507-022-00138-x}.

\bibitem[Rabitz et~al.(2004)Rabitz, Hsieh, and Rosenthal]{RHR2004}
Herschel Rabitz, Michael Hsieh, and Carey Rosenthal.
\newblock Quantum optimally controlled transition landscapes.
\newblock \emph{Science}, 303\penalty0 (5666):\penalty0 1998--2001, 2004.
\newblock \doi{10.1126/science.1093649}.

\bibitem[Pechen and Tannor(2011)]{PechenTannor2011}
Alexander~N. Pechen and David~J. Tannor.
\newblock Are there traps in quantum control landscapes?
\newblock \emph{Phys. Rev. Lett.}, 106:\penalty0 120402, Mar 2011.
\newblock \doi{10.1103/PhysRevLett.106.120402}.

\bibitem[Judson and Rabitz(1992)]{Hersch92}
Richard~S. Judson and Herschel Rabitz.
\newblock {Teaching Lasers to Control Molecules}.
\newblock \emph{Phys. Rev. Lett.}, 68\penalty0 (10):\penalty0 1500--1503, mar
  1992.
\newblock \doi{10.1103/PhysRevLett.68.1500}.

\bibitem[Roslund and Rabitz(2009{\natexlab{a}})]{Roslund2009}
Jonathan Roslund and Herschel Rabitz.
\newblock Gradient algorithm applied to laboratory quantum control.
\newblock \emph{Phys. Rev. A}, 79:\penalty0 053417, May 2009{\natexlab{a}}.
\newblock \doi{10.1103/PhysRevA.79.053417}.
\newblock URL \url{http://link.aps.org/doi/10.1103/PhysRevA.79.053417}.

\bibitem[Shir et~al.(2014)Shir, Roslund, Whitley, and Rabitz]{Shir-FOCAL}
Ofer~M. Shir, Jonathan Roslund, Darrell Whitley, and Herschel Rabitz.
\newblock {Efficient retrieval of landscape Hessian: Forced optimal covariance
  adaptive learning}.
\newblock \emph{Physical Review E}, 89:\penalty0 063306, Jun 2014.
\newblock \doi{10.1103/PhysRevE.89.063306}.
\newblock URL \url{http://link.aps.org/doi/10.1103/PhysRevE.89.063306}.

\bibitem[Ho and Rabitz(2006)]{HoRabitz2006}
Tak-San Ho and Herschel Rabitz.
\newblock {Why do Effective Quantum Controls Appear Easy to Find?}
\newblock \emph{Journal of Photochemistry and Photobiology A: Chemistry},
  180\penalty0 (3), Jun 2006.
\newblock \doi{10.1016/j.jphotochem.2006.03.038}.

\bibitem[Shir and B{\"a}ck(2007)]{ShirBaeck2007}
Ofer~M. Shir and Thomas B{\"a}ck.
\newblock The second harmonic generation case-study as a gateway for {ES} to
  quantum control problems.
\newblock In \emph{Proceedings of the 9th Annual Genetic and Evolutionary
  Computation Conference (GECCO 2007)}, pages 713--721, London, UK, 2007. ACM.
\newblock \doi{10.1145/1276958.1277104}.

\bibitem[Tannor and Rice(1985)]{Tannor85}
David~J. Tannor and Stuart~A. Rice.
\newblock {Control of Selectivity of Chemical Reaction via Control of Wave
  Packet Evolution}.
\newblock \emph{J. Chem. Phys.}, 83\penalty0 (10):\penalty0 5013--5018, 1985.
\newblock \doi{10.1063/1.449767}.

\bibitem[Rabitz et~al.(2000)Rabitz, de~Vivie-Riedle, Motzkus, and
  Kompa]{Hersch00}
Herschel Rabitz, Regina de~Vivie-Riedle, Marcus Motzkus, and Karl Kompa.
\newblock {Whither the Future of Controlling Quantum Phenomena?}
\newblock \emph{Science}, 288\penalty0 (5467):\penalty0 824--828, may 2000.
\newblock \doi{10.1126/science.288.5467.824}.

\bibitem[Nuernberger et~al.(2007)Nuernberger, Vogt, Brixner, and
  Gerber]{Gerber07}
Patrick Nuernberger, Gerhard Vogt, Tobias Brixner, and Gustav Gerber.
\newblock {Femtosecond Quantum Control of Molecular Dynamics in the Condensed
  Phase}.
\newblock \emph{Phys. Chem. Chem. Phys.}, 9\penalty0 (20):\penalty0 2470--2497,
  2007.
\newblock \doi{10.1039/B618760A}.

\bibitem[Roslund et~al.(2006)Roslund, Roth, and Rabitz]{Roslund06}
Jonathan Roslund, Matthias Roth, and Herschel Rabitz.
\newblock Laboratory observation of quantum control level sets.
\newblock \emph{Phys. Rev. A}, 74:\penalty0 043414, Oct 2006.
\newblock \doi{10.1103/PhysRevA.74.043414}.

\bibitem[Roslund et~al.(2009)Roslund, Shir, B{\"a}ck, and Rabitz]{LabES}
Jonathan Roslund, Ofer~M. Shir, Thomas B{\"a}ck, and Herschel Rabitz.
\newblock Accelerated optimization and automated discovery with covariance
  matrix adaptation for experimental quantum control.
\newblock \emph{Phys. Rev. A}, 80:\penalty0 043415, Oct 2009.
\newblock \doi{10.1103/PhysRevA.80.043415}.
\newblock URL \url{https://link.aps.org/doi/10.1103/PhysRevA.80.043415}.

\bibitem[Shir et~al.(2008)Shir, Roslund, B\"{a}ck, and Rabitz]{Shir08_QCE}
Ofer~M. Shir, Jonathan Roslund, Thomas B\"{a}ck, and Herschel Rabitz.
\newblock Performance analysis of derandomized evolution strategies in quantum
  control experiments.
\newblock In \emph{Proceedings of the 10th Annual Conference on Genetic and
  Evolutionary Computation}, GECCO '08, page 519–526, New York, NY, USA,
  2008. Association for Computing Machinery.
\newblock ISBN 9781605581309.
\newblock \doi{10.1145/1389095.1389193}.

\bibitem[Meshulach and Silberberg(1998)]{Meshulach1998}
Doron Meshulach and Yaron Silberberg.
\newblock Coherent quantum control of two-photon transitions by a femtosecond
  laser pulse.
\newblock \emph{Nature}, 396:\penalty0 239--242, 1998.
\newblock \doi{10.1038/24329}.

\bibitem[Singer(2011)]{SINGER2011}
Amit Singer.
\newblock Angular synchronization by eigenvectors and semidefinite programming.
\newblock \emph{Applied and Computational Harmonic Analysis}, 30\penalty0
  (1):\penalty0 20--36, 2011.
\newblock ISSN 1063-5203.
\newblock \doi{10.1016/j.acha.2010.02.001}.

\bibitem[Kern et~al.(2004)Kern, M{\"u}ller, Hansen, B{\"u}che, Ocenasek, and
  Koumoutsakos]{Kern2004}
Stefan Kern, Sibylle~D. M{\"u}ller, Nikolaus Hansen, Dirk B{\"u}che, Jiri
  Ocenasek, and Petros Koumoutsakos.
\newblock Learning probability distributions in continuous evolutionary
  algorithms -- a comparative review.
\newblock \emph{Natural Computing}, 3\penalty0 (1):\penalty0 77--112, 2004.
\newblock \doi{10.1023/B:NACO.0000023416.59689.4e}.

\bibitem[Morinaga et~al.(2021)Morinaga, Fukuchi, Sakuma, and
  Akimoto]{Morinaga2021}
Daiki Morinaga, Kazuto Fukuchi, Jun Sakuma, and Youhei Akimoto.
\newblock Convergence rate of the (1+1)-evolution strategy with success-based
  step-size adaptation on convex quadratic functions.
\newblock In \emph{Proceedings of the Genetic and Evolutionary Computation
  Conference}, GECCO '21, page 1169–1177, New York, NY, USA, 2021.
  Association for Computing Machinery.
\newblock ISBN 9781450383509.
\newblock \doi{10.1145/3449639.3459289}.
\newblock URL \url{https://doi.org/10.1145/3449639.3459289}.

\bibitem[Beyer et~al.(2004)Beyer, Olhofer, and Sendhoff]{BOS04}
Hans-Georg Beyer, Markus Olhofer, and Bernhard Sendhoff.
\newblock {On the Impact of Systematic Noise on the Evolutionary Optimization
  Performance---A Sphere Model Analysis}.
\newblock \emph{Genetic Programming and Evolvable Machines}, 5\penalty0
  (4):\penalty0 327--360, 2004.
\newblock ISSN 1389-2576.
\newblock \doi{10.1023/B:GENP.0000036020.79188.a0}.

\bibitem[Arnold(2002)]{Arnold2002}
Dirk~V. Arnold.
\newblock \emph{Noisy Optimization with Evolution Strategies}.
\newblock Genetic Algorithms and Evolutionary Computation. Springer, Berlin,
  2002.
\newblock ISBN 978-1-4020-7105-8.

\bibitem[Roslund and Rabitz(2009{\natexlab{b}})]{Roslund2009b}
Jonathan Roslund and Herschel Rabitz.
\newblock Experimental quantum control landscapes: Inherent monotonicity and
  artificial structure.
\newblock \emph{Phys. Rev. A}, 80:\penalty0 013408, Jul 2009{\natexlab{b}}.
\newblock \doi{10.1103/PhysRevA.80.013408}.
\newblock URL \url{http://link.aps.org/doi/10.1103/PhysRevA.80.013408}.

\bibitem[Nanduri et~al.(2015)Nanduri, Shir, Donovan, Ho, and
  Rabitz]{Nanduri2015}
Arun Nanduri, Ofer~M. Shir, Ashley Donovan, Tak-San Ho, and Herschel Rabitz.
\newblock Exploring the complexity of quantum control optimization
  trajectories.
\newblock \emph{Phys. Chem. Chem. Phys.}, 17:\penalty0 334--347, 2015.
\newblock \doi{10.1039/C4CP03853C}.

\bibitem[Droste et~al.(2006)Droste, Jansen, and Wegener]{Droste2006}
Stefan Droste, Thomas Jansen, and Ingo Wegener.
\newblock Upper and lower bounds for randomized search heuristics in black-box
  optimization.
\newblock \emph{Theory of Computing Systems}, 39\penalty0 (4):\penalty0
  525--544, 2006.
\newblock \doi{10.1007/s00224-004-1177-z}.

\bibitem[Doerr et~al.(2013)Doerr, Doerr, Spöhel, and
  Thomas]{doerr2013playingmastermindcolors}
Benjamin Doerr, Carola Doerr, Reto Spöhel, and Henning Thomas.
\newblock Playing mastermind with many colors, 2013.
\newblock URL \url{https://arxiv.org/abs/1207.0773}.

\bibitem[Golay(1977)]{Golay1977}
M.~Golay.
\newblock Sieves for low autocorrelation binary sequences.
\newblock \emph{IEEE Transactions on Information Theory}, 23\penalty0
  (1):\penalty0 43--51, 1977.
\newblock \doi{10.1109/TIT.1977.1055653}.

\bibitem[Militzer et~al.(1998)Militzer, Zamparelli, and Beule]{LAC_Militzer}
Burkhard Militzer, Michele Zamparelli, and Dieter Beule.
\newblock Evolutionary search for low autocorrelated binary sequences.
\newblock \emph{IEEE Transactions on Evolutionary Computation}, 2\penalty0
  (1):\penalty0 34--39, 1998.
\newblock ISSN 1089-778X.
\newblock \doi{10.1109/4235.728212}.

\bibitem[Packebusch and Mertens(2016)]{PackebuschMertens2016}
Tom Packebusch and Stephan Mertens.
\newblock Low autocorrelation binary sequences.
\newblock \emph{Journal of Physics A: Mathematical and Theoretical},
  49\penalty0 (16):\penalty0 165001, 2016.

\bibitem[Koch et~al.(2026)Koch, Bernal~Neira, Chen, Cortiana, Egger, Heese,
  Hegade, Gomez~Cadavid, Huang, Itoko, Kleinert, Maciel~Xavier, Mohseni,
  Montanez-Barrera, Nakano, Nannicini, O'Meara, Pauckert, Proissl, Ramesh,
  Schicker, Shimada, Takeori, Valls, Van~Bulck, Woerner, and Zoufal]{Koch2026}
Thorsten Koch, David~E. Bernal~Neira, Ying Chen, Giorgio Cortiana, Daniel~J.
  Egger, Raoul Heese, Narendra~N. Hegade, Alejandro Gomez~Cadavid, Rhea Huang,
  Toshinari Itoko, Thomas Kleinert, Pedro Maciel~Xavier, Naeimeh Mohseni,
  Jhon~A. Montanez-Barrera, Koji Nakano, Giacomo Nannicini, Corey O'Meara,
  Justin Pauckert, Manuel Proissl, Anurag Ramesh, Maximilian Schicker, Noriaki
  Shimada, Mitsuharu Takeori, V{\'i}ctor Valls, David Van~Bulck, Stefan
  Woerner, and Christa Zoufal.
\newblock The quantum optimization benchmarking library.
\newblock \emph{Nature Computational Science}, 6\penalty0 (6):\penalty0
  653--671, June 2026.
\newblock ISSN 2662-8457.
\newblock \doi{10.1038/s43588-026-00991-1}.
\newblock URL \url{https://doi.org/10.1038/s43588-026-00991-1}.

\bibitem[Wierstra et~al.(2014)Wierstra, Schaul, Glasmachers, Sun, Peters, and
  Schmidhuber]{Wierstra2014}
Daan Wierstra, Tom Schaul, Tobias Glasmachers, Yi~Sun, Jan Peters, and
  J\"{u}rgen Schmidhuber.
\newblock Natural evolution strategies.
\newblock \emph{J. Mach. Learn. Res.}, 15\penalty0 (1):\penalty0 949–980,
  January 2014.
\newblock ISSN 1532-4435.
\newblock URL \url{http://jmlr.org/papers/v15/wierstra14a.html}.

\bibitem[Nesterov and Spokoiny(2017)]{NesterovSpokoiny2017}
Yurii Nesterov and Vladimir Spokoiny.
\newblock Random gradient-free minimization of convex functions.
\newblock \emph{Found. Comput. Math.}, 17\penalty0 (2):\penalty0 527–566,
  April 2017.
\newblock ISSN 1615-3375.
\newblock \doi{10.1007/s10208-015-9296-2}.
\newblock URL \url{https://doi.org/10.1007/s10208-015-9296-2}.

\bibitem[Kuramoto(1975)]{Kuramoto1975}
Yoshiki Kuramoto.
\newblock Self-entrainment of a population of coupled non-linear oscillators.
\newblock In Huzihiro Araki, editor, \emph{International Symposium on
  Mathematical Problems in Theoretical Physics}, pages 420--422, Berlin,
  Heidelberg, 1975. Springer Berlin Heidelberg.
\newblock ISBN 978-3-540-37509-8.

\bibitem[Strogatz(2000)]{Strogatz2000}
Steven~H. Strogatz.
\newblock From kuramoto to crawford: exploring the onset of synchronization in
  populations of coupled oscillators.
\newblock \emph{Physica D: Nonlinear Phenomena}, 143\penalty0 (1):\penalty0
  1--20, 2000.
\newblock ISSN 0167-2789.
\newblock \doi{https://doi.org/10.1016/S0167-2789(00)00094-4}.

\bibitem[Goemans and Williamson(1995)]{Goemans1995}
Michel~X. Goemans and David~P. Williamson.
\newblock Improved approximation algorithms for maximum cut and satisfiability
  problems using semidefinite programming.
\newblock \emph{J. ACM}, 42\penalty0 (6):\penalty0 1115–1145, November 1995.
\newblock ISSN 0004-5411.
\newblock \doi{10.1145/227683.227684}.
\newblock URL \url{https://doi.org/10.1145/227683.227684}.

\bibitem[Fiedler(1973)]{Fiedler1973}
Miroslav Fiedler.
\newblock {Algebraic Connectivity of Graphs}.
\newblock \emph{Czechoslovak Mathematical Journal}, 23\penalty0 (98):\penalty0
  298--305, 1973.

\bibitem[Anderson(1955)]{Anderson1955}
T.~W. Anderson.
\newblock The integral of a symmetric unimodal function over a symmetric convex
  set and some probability inequalities.
\newblock \emph{Proceedings of the American Mathematical Society}, 6\penalty0
  (2):\penalty0 170--176, 1955.
\newblock ISSN 00029939, 10886826.
\newblock URL \url{http://www.jstor.org/stable/2032333}.

\end{thebibliography}

\clearpage
\appendix
\numberwithin{equation}{section}
\numberwithin{lemma}{section}
\numberwithin{proposition}{section}
\numberwithin{corollary}{section}
\numberwithin{remark}{section}
\numberwithin{assumption}{section}
\numberwithin{example}{section}

\section{Setting and notation}\label{s:setting}

The Quantum Sphere's yield gap belongs to the weighted $\sin^2$ class,
\begin{equation}
G(\vdelta)\;=\;\sum_{k\in K} w_k\,\sin^{2}\!\bigl(\inner{\va_k}{\vdelta}\bigr),
\qquad w_k\ge0,
\label{s:class}
\end{equation}
with $k=(s,i,p)$ ranging over ordered pairs of the shells $S_s=\{i:\,0\le i\le N{-}1,\ 0\le s{-}i\le N{-}1\}$, and given $\ve_i$ as the $i$-th standard basis vector of $\R^N$:
\[
\va_k=\tfrac12\bigl(\ve_i+\ve_{s-i}-\ve_p-\ve_{s-p}\bigr),
\qquad
w_k = 2\,A_iA_{s-i}A_pA_{s-p},
\]
where $Y(\vdelta)=\sum_s\bigl|\sum_{i\in S_s}A_iA_{s-i}
e^{\mathrm{i}(\delta_i+\delta_{s-i})}\bigr|^2$ and $G=Y(0)-Y$; the
explicit expansion of \eqref{s:class} is Corollary~\ref{cor:sin2-disc}.
Shell
phases and weights: $v^{(s)}_i=\delta_i+\delta_{s-i}$,
$\alpha^{(s)}_i=A_iA_{s-i}$; note $|\inner{\va_k}{\vdelta}|\le
2\norm{\vdelta}_\infty$, write $\ell_k:=\inner{\va_k}{\vdelta}$ for the
shell linear forms, and define the \emph{incoherent yield}
\begin{equation}
Y_{\mathrm{inc}}
:=2\sum_{s}\sum_{i\in S_s}\bigl(\alpha^{(s)}_i\bigr)^{2}
-\sum_{s\ \mathrm{even}}\bigl(\alpha^{(s)}_{s/2}\bigr)^{2}.
\label{s:yinc}
\end{equation}
The quadratic tangent and Hessian at the optimum are
\begin{equation}
Q(\vdelta)=\sum_k w_k\inner{\va_k}{\vdelta}^{2}
=\tfrac12\vdelta^{\!\top}\!H\vdelta,
\qquad
H=\sum_k 2w_k\,\va_k^{\vphantom{\top}}\va_k^{\!\top},
\label{s:QH}
\end{equation}
with rank $r$, extreme nonzero eigenvalues $\lambda_1\ge\lambda_r>0$,
$\kappa_r=\lambda_1/\lambda_r$, null space $\Gauge:=\ker H$, and $\Proj$
the orthogonal projector onto $\Gauge^\perp$. Restricted to the support
window $W$, $\Gauge$ is exactly the two-dimensional space of affine
phases (Corollary~\ref{cor:gauge}); coordinates outside $W$ carry zero amplitude and
are inert, so $\dim\Gauge=N-|W|+2$ in general, and $\Gauge$ is the affine
gauge proper when $W=\{0,\dots,N-1\}$. \\
For symmetric matrices, $A\preceq B$ will denote the Loewner (positive-semidefinite) order, i.e.\ the difference $B-A\succeq0$ is positive-semidefinite. 

\medskip

\noindent We define the \textit{region family}:
\begin{equation}
\Treg{\theta}=\Bigl\{\vdelta:\ \min_{\vg\in\Gauge}
\norm{\vdelta-\vg}_\infty\le\theta\pi/4\Bigr\},\qquad\theta\in(0,1],
\label{s:regions}
\end{equation}
as in the main text. $\vdelta\in\Treg{\theta}$ implies
$|\inner{\va_k}{\vdelta}|\le\theta\pi/2$ for all $k$ ($\va_k\perp\Gauge$);
and since $\min_{\vg\in\Gauge}\norm{\vdelta-\vg}_\infty\le
\norm{\Proj\vdelta}_\infty\le\norm{\Proj\vdelta}_2$, an $\ell_2$ bound
on $\Proj\vdelta$ certifies membership---the form used in the Theorem~\ref{thm:whitebox-gd} induction.

\medskip

\noindent Throughout, the support of $\vA$ is a contiguous index window $W$ with
$|W|\ge3$, as in the main manuscript.


\newpage

\section{Proofs of the corollaries of Theorem~\ref{thm:exact}}
\label{s:thm12}

Theorem~\ref{thm:exact} is stated and proved in the main text; here we prove its
three corollaries.

\begin{restated}{Corollary~\ref{cor:sin2-disc} (restated)}
For every $\vdelta\in\R^{N}$,
\begin{equation}
\label{s:sin2-sum}
G(\vdelta) = 2\sum_{s=0}^{2N-2} \sum_{i,p\,\in S_s} A_i A_{s-i}\,A_p A_{s-p} \cdot \sin^{2}\!\Bigl( \tfrac{\delta_i+\delta_{s-i}-\delta_p-\delta_{s-p}}{2} \Bigr),
\end{equation}
where the inner sum runs over all ordered pairs $(i,p) \in S_s \times S_s$.
\end{restated}

\begin{proof}
Define the per-shell phases $v^{(s)}_i=\delta_i+\delta_{s-i}$ and
weights $\alpha^{(s)}_i=A_iA_{s-i}$ for $i\in S_s$. The gap partitions
as $G=\sum_s G_s$, where
\begin{equation}
\begin{aligned}
G_s(\vdelta) &= \Bigl(\sum_{i\in S_s}\alpha^{(s)}_i\Bigr)^{\!2}
- \Bigl|\sum_{i\in S_s}\alpha^{(s)}_i e^{\,\mathrm{i}v^{(s)}_i}\Bigr|^{2} \\
&= 2\sum_{i,p\in S_s} \alpha^{(s)}_i\alpha^{(s)}_p
\sin^{2}\!\Bigl(\tfrac{v^{(s)}_i-v^{(s)}_p}{2}\Bigr).
\end{aligned}
\end{equation}
Summing over $s$ yields \eqref{s:sin2-sum}.
\end{proof}

\begin{restated}{Corollary~\ref{cor:gauge} (restated)}
Suppose the support of $\vA$ is a contiguous index window $W$ with
$|W|\ge3$. Then $G(\vdelta)=0$ if and only if $\vdelta$ is affine
\emph{modulo $2\pi$} on $W$: there exist $a,b\in\R$ with
$\delta_i\equiv a+b\,i \pmod{2\pi}$ for all $i\in W$. Locally, the
kernel of $H=\nabla^2G(0)$ restricted to $W$ is exactly the
two-dimensional space of affine phases.
\end{restated}

\begin{proof}
To obtain $G(\vdelta)=0$ we must show that every term in \eqref{s:sin2-sum} having a positive weight vanishes, i.e.\ $v^{(s)}_i=\delta_i+\delta_{s-i}$ is constant
modulo $2\pi$ in $i$ on each shell $S_s\cap W$. Fix an interior index
$i$ of $W$---one exists since $|W|\ge3$---and take the even shell
$s=2i$, which contains the pairs $(i,i)$ and $(i-1,i+1)$ with positive
weights $A_i^2$ and $A_{i-1}A_{i+1}$. 
Equality of their $v^{(s)}$-values gives
$2\delta_i \equiv \delta_{i-1}+\delta_{i+1} \pmod{2\pi}$, i.e.\ the
discrete second difference of $\vdelta$ vanishes,
\begin{equation*}
\delta_{i-1}-2\delta_i+\delta_{i+1} \;\equiv\; 0 \pmod{2\pi}.
\end{equation*}
Holding at every interior index of $W$, this is exactly affinity:
$\delta_i\equiv a+b\,i\pmod{2\pi}$ for some $a,b\in\R$. The converse is
the invariance of $Y$ under global phase, delay, and per-pixel $2\pi$
shifts ($Y$ depends on $\vdelta$ only through $e^{\mathrm{i}\vdelta}$).
In the continuum: for measurable $\delta$, $G=0$ iff
$e^{i\delta(\omega)}=e^{i(a+b\omega)}$ a.e.\ (measurable solutions of
the circle-valued Cauchy equation).

\noindent For the kernel, the analytic SHG Hessian was already derived in closed form in \citep{Shir-FOCAL}. 
That Hessian was computed at $\vdelta^\star=0$, but since $G=Y(0)-Y$, the calculation herein coincides with it up to the sign of the objective: $H=\nabla^2 G(0) =-H_{\mathrm{SHG}}$.\footnote{Expanding $\tfrac12\int\dd\omega\iint
\mu_\omega(\omega')\mu_\omega(\omega'')\,\Delta_\omega^2$ about
$\vdelta=0$ reproduces the diagonal and two off-diagonal terms of the analytic SHG's Hessian with the opposite sign; see Remark~\ref{rem:focal-check}. }
Concretely, $\ker H=\bigcap_{w_k>0}\ker\inner{\va_k}{\cdot}$ by \eqref{s:QH}, and the same second-difference argument---now over $\R$, since $\inner{\va_k}{\vdelta}=0$ exactly---leaves precisely the affine phases.
\end{proof}

\begin{restated}{Corollary~\ref{cor:dom} (restated)}
With $Y_{\mathrm{inc}}$ as in \eqref{s:yinc}, for every $\vdelta\in\R^{N}$ we have
$G(\vdelta)\le Q(\vdelta)$ and $G(\vdelta)\le 2\bigl(Y(0)-Y_{\mathrm{inc}}\bigr)$;
moreover, when the coordinates of $\vdelta$ are i.i.d.\ uniform on $[-\pi,\pi]$,
$\E[G(\vdelta)]=Y(0)-Y_{\mathrm{inc}}$ exactly.
\end{restated}

\begin{proof}
Write the pair index $k=(s,i,p)$ with weight
$w_k=2\alpha^{(s)}_i\alpha^{(s)}_p\ge0$ and half-difference
$\ell_k=\tfrac12\bigl(v^{(s)}_i-v^{(s)}_p\bigr)$, so that
\eqref{s:sin2-sum} reads $G=\sum_{k\in K}w_k\sin^2\ell_k$ and
$Q(\vdelta)=\sum_{k\in K}w_k\ell_k^2$, consistently with \eqref{s:QH}.

\emph{Domination.} Since $\sin^2\ell_k\le\ell_k^2$ and $w_k\ge0$,
summing over $K$ gives $G\le Q$. The deficit
$Q-G=\sum_k w_k\bigl(\ell_k^2-\sin^2\ell_k\bigr)$ is a sum of
nonnegative terms; as $x^2-\sin^2x=\tfrac13x^4+O(x^6)$, it is quartic to
leading order, so $Q$ and $G$ agree through third order and $Q$ is the
genuine $2$-jet of $G$ at the optimum.

\emph{Responsive weight.} Sort the terms of \eqref{s:sin2-sum} by phase
dependence. Each shell $S_s$ is symmetric, so the reversed index
$p=s-i$ gives the same shell phase,
$v^{(s)}_{s-i}=\delta_{s-i}+\delta_i=v^{(s)}_i$; hence pairs with
$p\in\{i,s-i\}$ contribute nothing to $G$ for any $\vdelta$. Using
$\alpha^{(s)}_{s-i}=\alpha^{(s)}_i$, the total weight of these
phase-blind pairs is exactly $2Y_{\mathrm{inc}}$ with $Y_{\mathrm{inc}}$
as in \eqref{s:yinc} (the center $i=s/2$ of an even shell is its own
reverse, counted once); equivalently $Y_{\mathrm{inc}}=\E[Y]$ under
uniform phases. The responsive remainder therefore carries total weight
$2\bigl(Y(0)-Y_{\mathrm{inc}}\bigr)$, and bounding $\sin^2(\cdot)\le1$ on
it yields $G\le2(Y(0)-Y_{\mathrm{inc}})$. 
The responsive weight is nonetheless the correct normalizer for the expectation, which is what we use below.

\emph{Expectation.} Let $\vdelta\sim\mathcal{U}[-\pi,\pi]^N$ i.i.d.\ and
take a responsive term, $\{p,s-p\}\neq\{i,s-i\}$.\\ 
Its argument $2\ell_k=v^{(s)}_i-v^{(s)}_p$ is a nonzero integer combination of
independent uniform phases, so $\E[\cos 2\ell_k]=\prod_j\frac{\sin(c_j\pi)}{c_j\pi}=0$ (each nonzero integer coefficient $c_j$ kills its factor over the full $2\pi$
period), whence $\E[\sin^2\ell_k]=\tfrac12(1-\E[\cos2\ell_k])=\tfrac12$. Summing over
responsive pairs, $\E[G]=2\cdot\tfrac12\cdot\bigl(Y(0)-Y_{\mathrm{inc}}\bigr) =Y(0)-Y_{\mathrm{inc}}$---exactly half the responsive weight.
\end{proof}

\begin{restated}{Corollary~\ref{cor:sandwich} (restated)}
On the quarter-wave region $\Treg{1}$ of \eqref{s:regions},
$\tfrac{4}{\pi^{2}}\,Q(\vdelta)\le G(\vdelta)\le Q(\vdelta)$, the upper
bound holding globally.
\end{restated}

\begin{proof}
On $\Treg{1}$ every argument satisfies $|\inner{\va_k}{\vdelta}|\le\pi/2$
(the $\va_k$ annihilate $\Gauge$, so the gauge minimization in
\eqref{s:regions} is free), where Jordan's inequality
$\sin^2x\ge(4/\pi^2)x^2$ applies termwise; summing against $w_k\ge0$
gives the lower bound. The upper bound is Corollary~\ref{cor:dom}. In particular
the quadratic-growth bound
$G(\vdelta)\ge\tfrac{2\lambda_r}{\pi^2}\norm{\Proj\vdelta}_2^2$ holds on
$\Treg{1}$.
\end{proof}

\section{Auxiliary lemmas}\label{s:aux}

\begin{lemma}[Derivatives; gauge frozen]\label{s:lem:deriv}
\[
\nabla G(\vdelta)=\sum_k w_k\sin\bigl(2\inner{\va_k}{\vdelta}\bigr)\,\va_k,
\qquad
\nabla^2G(\vdelta)=\sum_k 2w_k\cos\bigl(2\inner{\va_k}{\vdelta}\bigr)\,
\va_k^{\vphantom{\top}}\va_k^{\!\top}.
\]
In particular
$\nabla G(\vdelta)\in\operatorname{range}H=\Gauge^\perp$ for every
$\vdelta$: gradient trajectories never move the gauge coordinates.
\end{lemma}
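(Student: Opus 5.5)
The plan is to differentiate the class form \eqref{s:class} term by term and then read off the range statement from the structure of $H$ in \eqref{s:QH}. For a single term, the chain rule gives $\nabla\sin^2\inner{\va_k}{\vdelta}=2\sin\inner{\va_k}{\vdelta}\cos\inner{\va_k}{\vdelta}\,\va_k=\sin(2\inner{\va_k}{\vdelta})\,\va_k$ by the double-angle identity. Differentiating once more gives $2\cos(2\inner{\va_k}{\vdelta})\,\va_k\va_k^{\!\top}$. Summing against $w_k\ge0$ over the finite index set $K$ then yields both displayed formulas; no interchange-of-limits issue arises.

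For the range claim, I would show that $\operatorname{range}H=\operatorname{span}\{\va_k: w_k>0\}$. Since $H=\sum_k 2w_k\va_k\va_k^{\!\top}$ is a sum of positive semidefinite rank-one terms, $\vx^{\!\top}H\vx=\sum_k 2w_k\inner{\va_k}{\vx}^2$. Hence
\[
\ker H=\bigcap_{w_k>0}\va_k^{\perp},
\]
which is the identity already used in the proof of Corollary~\ref{cor:gauge}. Because $H$ is symmetric, $\operatorname{range}H=(\ker H)^\perp=\operatorname{span}\{\va_k:w_k>0\}$. Since $\ker H=\Gauge$ by definition, this range equals $\Gauge^\perp$.

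The gradient $\nabla G(\vdelta)$ is a linear combination of exactly those $\va_k$ with $w_k>0$; terms with $w_k=0$ drop out. It therefore lies in $\Gauge^\perp$ for every $\vdelta$, so that $(I-\Proj)\nabla G(\vdelta)=0$.

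The gauge-frozen consequence is then immediate. Along gradient flow $\dot\vdelta=-\nabla G$, or a discrete step $\vdelta^{t+1}=\vdelta^t-\eta\nabla G(\vdelta^t)$, the component $(I-\Proj)\vdelta$ has zero derivative, respectively zero increment, so it is constant.

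No step is a real obstacle. The only point needing care is that $\Gauge$ is defined as $\ker H$ rather than as the affine space itself. Because of that definition, the span argument works verbatim even when the support window $W$ is a proper subset of the index range; in that case inert coordinates also lie in $\Gauge$, and they likewise never appear in any $\va_k$ with $w_k>0$.
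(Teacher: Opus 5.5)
Your proof is correct and follows the paper's route: differentiate the class form termwise using $\tfrac{d}{dx}\sin^2x=\sin 2x$ and $\tfrac{d^2}{dx^2}\sin^2x=2\cos 2x$, then note that each gradient term is a multiple of some $\va_k\in\Gauge^\perp$. Your explicit identification $\operatorname{range}H=\operatorname{span}\{\va_k: w_k>0\}$ is a slightly more careful version of the paper's one-line remark, because it also covers a proper support window, where a form $\va_k$ with $w_k=0$ need not annihilate the inert coordinates contained in $\Gauge$.
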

\begin{proof}
Differentiate \eqref{s:class} termwise;
$\tfrac{d}{dx}\sin^2x=\sin2x$ and $\tfrac{d^2}{dx^2}\sin^2x=2\cos2x$.
Each term of $\nabla G$ is a multiple of $\va_k\in\Gauge^\perp$.
\end{proof}

\begin{lemma}[Global smoothness]\label{s:lem:smooth}
$-H\preceq\nabla^2G(\vdelta)\preceq H$ for all $\vdelta$; hence
$\nabla G$ is $\lambda_1$-Lipschitz and
$\norm{\nabla G(\vdelta)}_2\le\lambda_1\norm{\Proj\vdelta}_2$.
\end{lemma}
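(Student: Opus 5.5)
The plan is to start from the termwise Hessian formula of Lemma~\ref{s:lem:deriv},
\[
\nabla^2G(\vdelta)=\sum_k 2w_k\cos\bigl(2\inner{\va_k}{\vdelta}\bigr)\,\va_k^{\vphantom{\top}}\va_k^{\!\top},
\]
and to compare it in the Loewner order with $H=\sum_k 2w_k\va_k^{\vphantom{\top}}\va_k^{\!\top}$ from \eqref{s:QH}. Each summand is a rank-one positive semidefinite matrix $2w_k\va_k^{\vphantom{\top}}\va_k^{\!\top}$ multiplied by the scalar $c_k:=\cos(2\inner{\va_k}{\vdelta})\in[-1,1]$.

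For any $\vx$ we then have
\[
\vx^{\!\top}\nabla^2G(\vdelta)\vx=\sum_k 2w_k\,c_k\,\inner{\va_k}{\vx}^2 .
\]
Because $w_k\ge0$ and $|c_k|\le1$, this quantity lies between $-\sum_k 2w_k\inner{\va_k}{\vx}^2$ and $+\sum_k 2w_k\inner{\va_k}{\vx}^2$, that is, between $-\vx^{\!\top}H\vx$ and $\vx^{\!\top}H\vx$. This gives $-H\preceq\nabla^2G(\vdelta)\preceq H$ for every $\vdelta$, with no restriction to any region.

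Next, since $H\succeq0$ has largest eigenvalue $\lambda_1$, the sandwich yields $\norm{\nabla^2G(\vdelta)}_{\mathrm{op}}\le\lambda_1$. Indeed, for a symmetric matrix $M$ with $-H\preceq M\preceq H$, every eigenvalue $\mu$ with unit eigenvector $\vx$ satisfies $|\mu|=|\vx^{\!\top}M\vx|\le\vx^{\!\top}H\vx\le\lambda_1$. By the mean value theorem in integral form, $\nabla G(\vdelta)-\nabla G(\vdelta')=\int_0^1\nabla^2G(\vdelta'+t(\vdelta-\vdelta'))(\vdelta-\vdelta')\dd t$, so $\nabla G$ is $\lambda_1$-Lipschitz.

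For the last bound, I would use that $\nabla G(\vdelta)$ depends only on $\Proj\vdelta$, because each $\va_k\perp\Gauge$. Hence $\nabla G(\vdelta)=\nabla G(\Proj\vdelta)$. The gauge elements lie in $\Zset$, so $\nabla G(0)=0$. Applying the Lipschitz bound between $\Proj\vdelta$ and $0$ then gives $\norm{\nabla G(\vdelta)}_2\le\lambda_1\norm{\Proj\vdelta}_2$.

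A sharper route to the same bound uses the integral representation directly. Writing $\nabla G(\vdelta)=\int_0^1\nabla^2G(t\Proj\vdelta)\,\Proj\vdelta\,\dd t$, one only needs the operator bound on $\Gauge^\perp$. There the restriction of $H$ has norm $\lambda_1$, and the range of each Hessian lies in $\Gauge^\perp$.

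The main step is the Loewner sandwich, and it is immediate because the weights are nonnegative and the cosines are bounded. The only point needing care is the passage from the two-sided Loewner bound to the operator-norm bound, since $\nabla^2G$ may be indefinite. The eigenvalue argument above handles this.
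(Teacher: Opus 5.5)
Your proof is correct and follows essentially the paper's argument: the termwise Loewner bound from $w_k\ge0$ and $|\cos|\le1$, then $\lambda_1$-Lipschitz continuity, then $\nabla G(\vdelta)=\nabla G(\Proj\vdelta)-\nabla G(0)$. You simply make explicit the steps the paper leaves implicit, namely the eigenvalue argument that turns $-H\preceq\nabla^2G\preceq H$ into $\norm{\nabla^2G}_{\mathrm{op}}\le\lambda_1$ and the integral mean-value formula.
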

\begin{proof}
By Lemma~\ref{s:lem:deriv} and $|\cos|\le1$,
$\pm\nabla^2G(\vdelta)\preceq\sum_k2w_k\va_k^{\vphantom{\top}}\va_k^{\!\top}=H$
in the positive-semidefinite order. For the last claim write
$\nabla G(\vdelta)=\nabla G(\Proj\vdelta)-\nabla G(0)$ (gauge invariance,
$\nabla G(0)=0$) and apply the Lipschitz bound.
\end{proof}

\begin{lemma}[Star-convexity toward the optimum]\label{s:lem:star}
For $\theta\in(0,1)$ and $\vdelta\in\Treg{\theta}$,
\begin{equation}
\inner{\nabla G(\vdelta)}{\vdelta}\;\ge\;
\frac{2\sin(\theta\pi)}{\theta\pi}\,Q(\vdelta)\;\ge\;
\frac{2\sin(\theta\pi)}{\theta\pi}\,G(\vdelta)
\;\;\stackrel{\theta=1/2}{=}\;\;\frac{4}{\pi}\,G(\vdelta).
\label{s:star}
\end{equation}
\end{lemma}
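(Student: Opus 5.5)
The plan is a termwise computation followed by a scalar inequality. By Lemma~\ref{s:lem:deriv},
\[
\inner{\nabla G(\vdelta)}{\vdelta}=\sum_k w_k\sin(2\ell_k)\,\ell_k,\qquad \ell_k=\inner{\va_k}{\vdelta}.
\]
So the first inequality in \eqref{s:star} reduces to the scalar claim
\[
x\sin(2x)\;\ge\;\frac{2\sin(\theta\pi)}{\theta\pi}\,x^2 \qquad\text{for } |x|\le\theta\pi/2 .
\]
This claim applies to every term because $\vdelta\in\Treg{\theta}$ forces $|\ell_k|\le\theta\pi/2$. That bound follows from the gauge-free estimate $|\inner{\va_k}{\vdelta}|\le2\dq(\vdelta)$, which holds since $\va_k\perp\Gauge$. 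Multiplying the scalar claim by $w_k\ge0$ and summing over $k$ gives $\inner{\nabla G}{\vdelta}\ge\frac{2\sin(\theta\pi)}{\theta\pi}\sum_k w_k\ell_k^2$. The right-hand side equals $\frac{2\sin(\theta\pi)}{\theta\pi}\,Q(\vdelta)$ by \eqref{s:QH}.

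For the scalar claim, the cases $x=0$ and $\theta=0$ are trivial, so take $x\ne0$. Both sides are even in $x$, and dividing by $x^2$ shows it is equivalent to
\[
\frac{\sin(2x)}{x}\;\ge\;\frac{\sin(\theta\pi)}{\theta\pi/2}\qquad\text{for } 0<x\le\theta\pi/2 .
\]
Substituting $u=2x\in(0,\theta\pi]$, this reads $2\,\frac{\sin u}{u}\ge2\,\frac{\sin(\theta\pi)}{\theta\pi}$. This holds because $u\mapsto\sin u/u$ is decreasing on $(0,\pi)$: its derivative is $(u\cos u-\sin u)/u^2$, and $\tan u\ge u$ on $(0,\pi/2)$ while $\cos u\le0$ on $[\pi/2,\pi)$. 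The condition $\theta<1$ keeps $\theta\pi<\pi$, so the constant $\frac{2\sin(\theta\pi)}{\theta\pi}$ is positive.

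The second inequality is $Q\ge G$ from Corollary~\ref{cor:dom}, valid globally. It transfers because the preceding constant is nonnegative for $\theta\in(0,1)$.

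The evaluation at $\theta=1/2$ is direct: $2\sin(\pi/2)/(\pi/2)=4/\pi$.

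There is no real obstacle here. The only point needing care is the monotonicity of $\sin u/u$ all the way up to $\theta\pi$, not just on $(0,\pi/2]$; the sign argument above covers this.
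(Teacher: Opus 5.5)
Your proposal is correct and follows essentially the same route as the paper. Both arguments expand $\inner{\nabla G}{\vdelta}=\sum_k w_k\ell_k\sin(2\ell_k)$ termwise, bound $|2\ell_k|\le\theta\pi<\pi$ on $\Treg{\theta}$, apply the monotonicity of $\sin y/y$, and finish with $Q\ge G$ from Corollary~\ref{cor:dom}. The one difference is that you explicitly verify the monotonicity of $\sin u/u$ past $\pi/2$, which the paper simply asserts.
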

\begin{proof}
On $\Treg{\theta}$, $|2\inner{\va_k}{\vdelta}|\le\theta\pi\le\pi$, and
$y\mapsto\sin(y)/y$ is decreasing on $[0,\pi]$, so termwise
$\sin(2\ell_k)\,\ell_k=2\ell_k^2\cdot\tfrac{\sin(2\ell_k)}{2\ell_k}
\ge\tfrac{2\sin(\theta\pi)}{\theta\pi}\ell_k^2$. Summing gives the
first inequality; $Q\ge G$ is Corollary~\ref{cor:dom}.
\end{proof}

\section{Proof of Theorem~\ref{thm:whitebox-structure} (white-box: convexity and PL)}\label{s:pl}

\begin{restated}{Theorem~\ref{thm:whitebox-structure} (restated)}
On $\Treg{1/2}$ the gap is convex and satisfies
$\norm{\nabla G}_2^2\ge\tfrac{32\lambda_r}{\pi^4}G$; on
$\Treg{\theta}$, $\theta<\tfrac12$, it is
$\cos(\theta\pi)\lambda_r$-strongly convex modulo gauge. 
Both radii are sharp already for the Quantum Sphere instance, hence a
fortiori for the class \eqref{s:class}. The alternating comb at the
quarter-wave radius is a critical point with positive gap on
$\partial\Treg{1}$, so no star-convexity or PL constant survives at
$\theta=1$ (Proposition~\ref{cx:crit} below; cf.\ the graded constant of
Corollary~\ref{s:cor:gradedpl}, which degrades to $0$ as $\theta\to1$).
The same comb renders the Hessian indefinite on every $\Treg{\theta}$ with
$\theta>\tfrac12$, so the convexity radius $\tfrac12$ cannot be enlarged
(Proposition~\ref{cx:prop}).
\end{restated}

\begin{proof}[Proof of convexity and strong convexity]
On $\Treg{\theta}$ with $\theta\le\tfrac12$,
$|2\inner{\va_k}{\vdelta}|\le\theta\pi\le\pi/2$, so
$\cos(2\inner{\va_k}{\vdelta})\ge\cos(\theta\pi)\ge0$ termwise; by
Lemma~\ref{s:lem:deriv},
$\nabla^2G(\vdelta)\succeq\cos(\theta\pi)H\succeq0$. For
$\theta<\tfrac12$ this gives modulus $\cos(\theta\pi)\lambda_r$ on
$\Gauge^\perp$.
\end{proof}

\begin{proof}[Proof of the PL inequality]
If $G(\vdelta)=0$ there is nothing to prove; otherwise
$\Proj\vdelta\neq0$. By Cauchy--Schwarz, Lemma~\ref{s:lem:deriv}
($\nabla G\perp\Gauge$), and Lemma~\ref{s:lem:star} at
$\theta=\tfrac12$,
\[
\norm{\nabla G(\vdelta)}_2\;\ge\;
\frac{\inner{\nabla G(\vdelta)}{\vdelta}}{\norm{\Proj\vdelta}_2}
\;\ge\;\frac{4}{\pi}\,\frac{G(\vdelta)}{\norm{\Proj\vdelta}_2}\,.
\]
By quadratic growth (Corollary~\ref{cor:sandwich}, valid since
$\Treg{1/2}\subset\Treg{1}$),
$\norm{\Proj\vdelta}_2\le\tfrac{\pi}{\sqrt{2\lambda_r}}\sqrt{G(\vdelta)}$;
combining and squaring gives
$\norm{\nabla G}_2^2\ge\tfrac{32\lambda_r}{\pi^4}G$.
\end{proof}

\begin{corollary}[Graded PL up to the quarter-wave boundary]
\label{s:cor:gradedpl}
For every $\theta\in(0,1)$ and $\vdelta\in\Treg{\theta}$,
\[
\norm{\nabla G(\vdelta)}_2^{\,2}\;\ge\;
\frac{8\sin^2(\theta\pi)}{\theta^2\pi^4}\,\lambda_r\,G(\vdelta),
\]
recovering the constant $32\lambda_r/\pi^4$ at $\theta=\tfrac12$ and
degrading continuously to $0$ as $\theta\to1$, in agreement with the critical point of Proposition~\ref{cx:crit}. 
Thus the PL region extends to the whole open quarter-wave region with a quantified constant; only \emph{convexity} is confined to $\Treg{1/2}$.
\end{corollary}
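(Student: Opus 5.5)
The plan is to repeat the proof of the PL inequality in Theorem~\ref{thm:whitebox-structure}, now at general $\theta\in(0,1)$, combining Lemma~\ref{s:lem:star} with the quadratic-growth half of Corollary~\ref{cor:sandwich}.

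First, if $G(\vdelta)=0$ there is nothing to prove, so assume $G(\vdelta)>0$; then $\Proj\vdelta\neq0$. By Lemma~\ref{s:lem:deriv}, $\nabla G(\vdelta)\in\Gauge^\perp$, so $\inner{\nabla G(\vdelta)}{\vdelta}=\inner{\nabla G(\vdelta)}{\Proj\vdelta}$. Cauchy--Schwarz then gives
\[
\norm{\nabla G(\vdelta)}_2\;\ge\;\frac{\inner{\nabla G(\vdelta)}{\vdelta}}{\norm{\Proj\vdelta}_2}.
\]
Lemma~\ref{s:lem:star} at the given $\theta<1$ bounds the numerator below by $\frac{2\sin(\theta\pi)}{\theta\pi}G(\vdelta)$; this is positive because $\theta\pi\in(0,\pi)$.

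Second, since $\Treg{\theta}\subset\Treg{1}$, the quadratic-growth bound from Corollary~\ref{cor:sandwich} applies:
\[
G(\vdelta)\;\ge\;\tfrac{4}{\pi^2}Q(\vdelta)\;\ge\;\tfrac{2\lambda_r}{\pi^2}\norm{\Proj\vdelta}_2^2 .
\]
Hence $\norm{\Proj\vdelta}_2\le \pi\sqrt{G(\vdelta)/(2\lambda_r)}$. Substituting this into the Cauchy--Schwarz bound and squaring gives
\[
\norm{\nabla G(\vdelta)}_2^2\;\ge\;\frac{4\sin^2(\theta\pi)}{\theta^2\pi^2}\cdot\frac{2\lambda_r}{\pi^2}\,G(\vdelta)\;=\;\frac{8\sin^2(\theta\pi)}{\theta^2\pi^4}\lambda_r G(\vdelta).
\]
At $\theta=\tfrac12$ this constant is $8\cdot4/\pi^4=32/\pi^4$, which matches Theorem~\ref{thm:whitebox-structure}. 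As $\theta\to1$ it tends to $0$, because $\sin(\theta\pi)\to0$; this is consistent with the critical comb on $\partial\Treg{1}$.

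There is no genuine obstacle, since every ingredient is already stated. The one point to check is that Lemma~\ref{s:lem:star} is stated for all $\theta\in(0,1)$, not only for $\theta=1/2$. It is, because $\sin y/y$ is decreasing on all of $[0,\pi]$. The "thus" clause follows immediately: on the open region $\bigcup_{\theta<1}\Treg{\theta}$ every point lies in some $\Treg{\theta}$ with $\theta<1$, where the constant is positive. Consequently any critical point there has $G=0$.
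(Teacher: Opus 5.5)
Your proposal is correct and is the paper's proof: Lemma~\ref{s:lem:star} at general $\theta\in(0,1)$, then Cauchy--Schwarz using $\nabla G\perp\Gauge$, then the quadratic-growth half of Corollary~\ref{cor:sandwich} on $\Treg{1}\supseteq\Treg{\theta}$, and finally squaring. The paper does exactly this, calling it identical to the $\theta=\tfrac12$ PL proof, and your arithmetic and your handling of the open region $\bigcup_{\theta<1}\Treg{\theta}$ are both sound.
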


\begin{proof}
Identical to the proof of Theorem~\ref{thm:whitebox-structure}: Lemma~\ref{s:lem:star} (now for
$\theta\in(0,1)$) gives
$\inner{\nabla G}{\vdelta}\ge\tfrac{2\sin(\theta\pi)}{\theta\pi}G$;
Cauchy--Schwarz and quadratic growth (Corollary~\ref{cor:sandwich}, valid on
$\Treg{1}\supseteq\Treg{\theta}$) give
$\norm{\nabla G}_2\ge\tfrac{2\sin(\theta\pi)}{\theta\pi}\,
\tfrac{\sqrt{2\lambda_r}}{\pi}\sqrt{G}$; square.
\end{proof}

\section{Proof of Theorem~\ref{thm:whitebox-gd} (white-box: Gradient Descent)}\label{s:gd}

\begin{lemma}[Hypothesis bridge]\label{s:lem:bridge}
If $\norm{\Proj\vdelta^0}_2\le\tfrac{1}{8\sqrt{\kappa_r}}$, then
$\norm{\Proj\vdelta^0}_2\le\tfrac{\pi}{16}$ and
$G(\vdelta^0)\le\tfrac{\lambda_r}{128}$.
\end{lemma}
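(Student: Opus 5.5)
The statement has two independent parts, and I will handle them in order.

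\emph{First part.} The bound $\norm{\Proj\vdelta^0}_2\le\pi/16$ only needs a lower bound on $\kappa_r$. Since $\lambda_1\ge\lambda_r$, we have $\kappa_r\ge1$, so
\[
\tfrac{1}{8\sqrt{\kappa_r}}\le\tfrac18 .
\]
Moreover $1/8=0.125<\pi/16\approx0.196$, which gives the first claim. It also gives the membership certificate used in the setting section: $\norm{\Proj\vdelta^0}_\infty\le\norm{\Proj\vdelta^0}_2\le\pi/16<\pi/8$, so $\vdelta^0\in\Treg{1/2}\subset\Treg{1}$. This is not needed for the gap bound, since the upper half of the sandwich holds globally.

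\emph{Second part.} For the gap bound I will use global domination (Corollary~\ref{cor:dom}), $G\le Q$, together with gauge invariance of $Q$. Because $\ker H=\Gauge$ and $H$ is symmetric, we have $H=\Proj H\Proj$. Hence
\[
Q(\vdelta^0)=\tfrac12(\Proj\vdelta^0)^{\!\top}H(\Proj\vdelta^0)\le\tfrac{\lambda_1}{2}\norm{\Proj\vdelta^0}_2^2 .
\]
Inserting the hypothesis gives
\[
G(\vdelta^0)\le\tfrac{\lambda_1}{2}\cdot\tfrac{1}{64\,\kappa_r}=\tfrac{\lambda_1}{128}\cdot\tfrac{\lambda_r}{\lambda_1}=\tfrac{\lambda_r}{128}.
\]
The factor $\kappa_r$ in the hypothesis exists precisely to cancel $\lambda_1$ here.

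There is no real obstacle. The only point needing care is to use the largest eigenvalue $\lambda_1$, not $\lambda_r$, when bounding $Q$ from above. One should also justify replacing $\vdelta^0$ by $\Proj\vdelta^0$ inside $Q$, which follows from $H\Gauge=0$. Note that the quoted hypothesis is dimension-free, because it is phrased as an absolute number rather than in units of $\lambda$; a scale mismatch would only affect the $\pi/16$ claim, which uses nothing beyond $\kappa_r\ge1$.
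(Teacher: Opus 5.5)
Your proof is correct and follows the paper's argument step for step: $\kappa_r\ge1$ gives $\tfrac{1}{8\sqrt{\kappa_r}}\le\tfrac18<\tfrac{\pi}{16}$, and global domination $G\le Q\le\tfrac{\lambda_1}{2}\norm{\Proj\vdelta^0}_2^2$ combined with the hypothesis yields $\lambda_r/128$. Your explicit justification that $Q(\vdelta^0)=Q(\Proj\vdelta^0)$ because $H\Gauge=0$ is a detail the paper leaves implicit in \eqref{s:QH}.
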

\begin{proof}
Since $\kappa_r\ge1$,
$\tfrac{1}{8\sqrt{\kappa_r}}\le\tfrac18<\tfrac{\pi}{16}$. By
Corollary~\ref{cor:dom} and \eqref{s:QH},
$G(\vdelta^0)\le Q(\vdelta^0)\le\tfrac{\lambda_1}{2}
\norm{\Proj\vdelta^0}_2^2\le\tfrac{\lambda_1}{2}\cdot
\tfrac{1}{64\kappa_r}=\tfrac{\lambda_r}{128}$.
\end{proof}

\begin{restated}{Theorem~\ref{thm:whitebox-gd} (restated)}
Let $\vdelta^{t+1}=\vdelta^t-\lambda_1^{-1}\nabla G(\vdelta^t)$ with
$\norm{\Proj\vdelta^0}_2\le\tfrac{1}{8\sqrt{\kappa_r}}$. Then for all
$t$, $G(\vdelta^t)\le\bigl(1-\tfrac{16}{\pi^4\kappa_r}\bigr)^tG(\vdelta^0)$,
the iterates converge to a single point of the TL gauge orbit, and the gauge coordinates are constants of the motion.
\end{restated}

\begin{proof}
By Lemma~\ref{s:lem:bridge} we may assume
$\norm{\Proj\vdelta^0}_2\le\pi/16$ and
$G(\vdelta^0)\le\lambda_r/128$. Write
$\mu:=16\lambda_r/\pi^4$, so $\mu/\lambda_1=16/(\pi^4\kappa_r)$. We
prove by induction the invariant $\mathcal{I}_t$:
$\norm{\Proj\vdelta^t}_2\le\pi/16$ and $G(\vdelta^t)\le G(\vdelta^0)$.
The base case is the hypothesis. Assume $\mathcal{I}_t$.

\emph{One-step containment.} By Lemma~\ref{s:lem:smooth},
$\norm{\vdelta^{t+1}-\vdelta^t}_2
=\lambda_1^{-1}\norm{\nabla G(\vdelta^t)}_2
\le\norm{\Proj\vdelta^t}_2$, so
$\norm{\Proj\vdelta^{t+1}}_2\le2\norm{\Proj\vdelta^t}_2\le\pi/8$; since
$\norm{\Proj\vdelta}_\infty\le\norm{\Proj\vdelta}_2$, both $\vdelta^t$ and
$\vdelta^{t+1}$ lie in $\Treg{1/2}\subset\Treg{1}$ (the gauge component
is inert throughout by Lemma~\ref{s:lem:deriv}).

\emph{Descent and contraction.} Global $\lambda_1$-smoothness
(Lemma~\ref{s:lem:smooth}) gives the descent lemma
$G(\vdelta^{t+1})\le G(\vdelta^t)-\tfrac{1}{2\lambda_1}
\norm{\nabla G(\vdelta^t)}_2^2$; the PL inequality of Theorem~\ref{thm:whitebox-structure} applies
at $\vdelta^t\in\Treg{1/2}$, whence
$G(\vdelta^{t+1})\le(1-\mu/\lambda_1)G(\vdelta^t)$.

\emph{Closing the invariant.} $\vdelta^{t+1}\in\Treg{1}$, so quadratic
growth (Corollary~\ref{cor:sandwich}) gives
$\norm{\Proj\vdelta^{t+1}}_2^2\le\tfrac{\pi^2}{2\lambda_r}
G(\vdelta^{t+1})\le\tfrac{\pi^2}{2\lambda_r}\cdot
\tfrac{\lambda_r}{128}=\tfrac{\pi^2}{256}$, i.e.\
$\norm{\Proj\vdelta^{t+1}}_2\le\pi/16$; and
$G(\vdelta^{t+1})\le G(\vdelta^t)\le G(\vdelta^0)$. This establishes
$\mathcal{I}_{t+1}$ and the rate.

\emph{Convergence of the iterates.} The step norms
$\norm{\vdelta^{t+1}-\vdelta^t}_2\le\norm{\Proj\vdelta^t}_2\le
\tfrac{\pi}{\sqrt{2\lambda_r}}\sqrt{G(\vdelta^t)}$ decay geometrically,
so the sequence is Cauchy; its limit $\vdelta^\infty$ satisfies
$G(\vdelta^\infty)=0$ with $\Proj\vdelta^\infty=0$, and the gauge
component never moved (Lemma~\ref{s:lem:deriv}).
\end{proof}


\section{Proofs of Lemma~\ref{lem:barrier}, Theorem~\ref{thm:blackbox} and Proposition~\ref{prop:local} (black-box settings)}
\label{s:blackbox}

Throughout, $V:=\Gauge^\perp=\operatorname{range}H$ (so $\dim V=r$), $g:=G|_V$, balls $\mathcal B$ (open) and $\bar{\mathcal B}$ (closed) are Euclidean balls in $V$, $\vw$ is a standard normal vector on $V$, and $\omega_r$ is the volume of the unit ball of $\R^r$. The algorithm is that of Section~\ref{sec:linear_convergence}, i.e., Algorithm~2.1 of \citet{AAGM2022} with identity covariance ($\kappa=1$); their result numbers refer to arXiv:2009.08647v2. The \emph{gauge-quotient norm} $\dq(\vx):=\min_{\vg\in\Gauge}\norm{\vx-\vg}_\infty$ is a seminorm on $\R^N$ with kernel $\Gauge$ (a norm on $V$), and $\dq(\vx)\le\norm{\vx}_\infty\le\norm{\vx}$ for $\vx\in V$. The regions of \eqref{s:regions} are $\Treg{\theta}=\{\dq\le\theta\pi/4\}$, and $K:=\Treg{1}\cap V$ is a compact convex neighbourhood of $0$ in $V$ with interior $\{\dq<\pi/4\}$. At an interior index $i$ of the support window $W$, second differences obey $|x_{i-1}-2x_i+x_{i+1}|\le4\norm{\vx}_\infty$ and vanish on $\Gauge$.

Theorem~\ref{thm:blackbox} follows from Lemma~\ref{s:lem:aagm}, the analysis of \citet{AAGM2022} with their single-ball Assumption~A1 weakened to one ball per basin; the other lemmas supply its hypotheses.

\subsection{Reduction to the gauge quotient and to \citet{AAGM2022}}

\begin{lemma}[Exact projection]\label{s:lem:proj}
Let $(\vdelta^t,\sigma_t)$ be the $(1{+}1)$-ES on $G$ with mutations $\vz_t$. Then $(P\vdelta^t,\sigma_t)$ is the $(1{+}1)$-ES on $g$ with mutations $\vw_t:=P\vz_t$ and the same acceptance events, and $G(\vdelta^t)=g(P\vdelta^t)$. The gauge component, moved by $\sigma_t(I-P)\vz_t$ on acceptance, is a martingale that selection never sees.
\end{lemma}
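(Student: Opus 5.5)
The proof rests on one structural fact: $G$ is invariant under translations by the gauge. The whole claim then reduces to an induction on $t$ that tracks $P\vdelta^t$ and $(I-P)\vdelta^t$ separately.

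\emph{Step 1: $G(\vdelta)=g(P\vdelta)$.} Every shell form annihilates $\Gauge$, i.e.\ $\inner{\va_k}{\vg}=0$ for $\vg\in\Gauge$ (Corollary~\ref{cor:gauge} and \eqref{s:QH}, since $\Gauge=\ker H=\bigcap_{w_k>0}\ker\inner{\va_k}{\cdot}$). Hence $\inner{\va_k}{\vdelta}=\inner{\va_k}{P\vdelta}$ for every $k$ with $w_k>0$. Terms with $w_k=0$ do not contribute to \eqref{s:class}. It follows that $G(\vdelta)=G(P\vdelta)=g(P\vdelta)$ for all $\vdelta\in\R^N$.

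\emph{Step 2: induction on $t$.} The candidate is $\vdelta'=\vdelta^t+\sigma_t\vz_t$. Its projection is $P\vdelta'=P\vdelta^t+\sigma_t\vw_t$ with $\vw_t=P\vz_t$, and $\vw_t$ is a standard normal vector on $V$ because $P$ is an orthogonal projector. By Step~1, the acceptance test $G(\vdelta')\le G(\vdelta^t)$ is literally the same event as $g(P\vdelta^t+\sigma_t\vw_t)\le g(P\vdelta^t)$. The same event then drives three updates. First, $\sigma_{t+1}$ is updated identically by $\aup$ or $\adown$ in both chains. Second, on acceptance $P\vdelta^{t+1}=P\vdelta'$, and on rejection $P\vdelta^{t+1}=P\vdelta^t$, which is exactly the $(1{+}1)$-ES on $g$ driven by $(\vw_t)$. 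Third, the gauge component $(I-P)\vdelta^t$ moves by $\sigma_t(I-P)\vz_t$ on acceptance and stays put otherwise.

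\emph{Step 3: the martingale property.} Let $\mathcal F_t$ be generated by $\vz_0,\dots,\vz_{t-1}$. Under this filtration, $\sigma_t$ and $(I-P)\vdelta^t$ are adapted. The acceptance indicator at step $t$ is a function of $(P\vdelta^t,\sigma_t,P\vz_t)$. Since $\vz_t$ is Gaussian and $P\perp(I-P)$, the components $P\vz_t$ and $(I-P)\vz_t$ are uncorrelated jointly Gaussian, hence independent. They are also independent of $\mathcal F_t$. The increment is therefore
\[
\Ind{\text{acc}_t}\,\sigma_t(I-P)\vz_t ,
\]
whose conditional expectation given $\mathcal F_t\vee\sigma(P\vz_t)$ is
\[
\Ind{\text{acc}_t}\,\sigma_t\,\E[(I-P)\vz_t]=0 .
\]
This gives the martingale property. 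Integrability is the only delicate point, because $\sigma_t\le\aup^t\sigma_0$ is deterministic-bounded; the increments therefore have finite second moments, which makes it a genuine (square-integrable) martingale. The same independence shows that the law of the $g$-chain is unaffected by the gauge coordinates, which is the sense of ``selection never sees'' it.

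Nothing in this argument is a real obstacle. The one step to state carefully is the independence of $P\vz_t$ and $(I-P)\vz_t$, together with the fact that acceptance depends only on the former. The mixed ``random times $\times$ Gaussian'' increment is handled by conditioning on $P\vz_t$ as in Step~3.
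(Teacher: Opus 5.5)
Your proof is correct and follows the paper's argument: $G=g\circ P$ because the shell forms annihilate $\Gauge$, the acceptance events therefore coincide and drive identical step-size updates, and the independence of $P\vz_t$ and $(I-P)\vz_t$ makes the gauge component a martingale that selection never sees. Beyond the paper's sketch, you spell out the filtration and the integrability bound $\sigma_t\le\aup^t\sigma_0$. You also restrict the annihilation to terms with $w_k>0$, which is the precise statement when the support window $W$ is not the full index range.
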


\begin{proof}
Every $\va_k$ annihilates $\Gauge$, so $G=g\circ P$, and accepting $\vdelta^t+\sigma_t\vz_t$ for $G$ is accepting $P\vdelta^t+\sigma_t\vw_t$ for $g$; the step size depends only on this event. $P\vz_t$ and $(I-P)\vz_t$ are independent as orthogonal projections of a standard normal vector, so $(I-P)\vz_t$ is independent of acceptance given the past.
\end{proof}

\begin{lemma}[Per-basin variant of {\citealp[Sect.~4]{AAGM2022}}]\label{s:lem:aagm}
Let $f:V\to[0,\infty)$ be measurable, $\mathcal X:=\{0<f\le f_{\max}\}$, and $p^{\mathrm{succ}}(\bar\sigma;\vx):=\Pr[f(\vx+\bar\sigma f(\vx)\vw)\le f(\vx)]$. Let $C_\ell,C_u>0$ and $0\le p_{\mathrm{jump}}<\ptgt<p^{\mathrm{limit}}\le1$, and suppose each $\vx\in\mathcal X$ has a centre $\vmu(\vx)\in V$ and a basin $\mathcal S(\vx)$ with $\vx\in\mathcal S(\vx)\subseteq\{f\le f(\vx)\}$ such that
\begin{itemize}
\item[(H1)] $\mathcal B\bigl(\vmu(\vx),C_\ell\,y\bigr)\subseteq\{f<y\}$ for every $y\in(0,f(\vx)]$; \hfill(inner balls)
\item[(H2)] $\mathcal S(\vx)\subseteq\bar{\mathcal B}\bigl(\vmu(\vx),C_u f(\vx)\bigr)$; \hfill(outer ball of the own basin)
\item[(H3)] $\Pr\bigl[\vx+\sigma\vw\in\{f\le f(\vx)\}\setminus\mathcal S(\vx)\bigr]\le p_{\mathrm{jump}}$ for every $\sigma>0$; \hfill(rare jumps)
\item[(H4)] $\liminf_{\bar\sigma\downarrow0}\inf_{\vx\in\mathcal X}p^{\mathrm{succ}}(\bar\sigma;\vx)\ge p^{\mathrm{limit}}$. \hfill(uniform improvability)
\end{itemize}
Then there are $\nu\in(0,1)$ and $A',B>0$, depending only on $r$, $(\aup,\adown)$ and the data of (H1)--(H4), including the rate of the limit in (H4), such that the ES on $f$ started at $\vx_0\in\mathcal X$, $\sigma_0>0$ satisfies, with $\tau_\eta:=\min\{t:\ f(\vx_t)\le\eta\}$,
\begin{equation}\label{s:eq:aagm}
\E[\tau_\eta]\le\frac{\ln\frac{f(\vx_0)}{\eta}+\nu\bigl|\ln\frac{\sigma_0}{f(\vx_0)}\bigr|+A'}{B}\quad\bigl(0<\eta\le f(\vx_0)\bigr),
\qquad
\limsup_{t\to\infty}\tfrac1t\ln f(\vx_t)\le-B\ \ \text{a.s.}
\end{equation}
\end{lemma}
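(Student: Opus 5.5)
The plan is to rerun the drift analysis of \citet[Sect.~4]{AAGM2022} essentially verbatim, isolating the one place where their single-ball assumption A1 is used and replacing it with (H2)--(H3). Their analysis works on the normalized state $(\vx_t,\bar\sigma_t)$ with $\bar\sigma_t:=\sigma_t/f(\vx_t)$. It uses the potential
\[
V(\vx,\sigma)=\ln f(\vx)+v\,\max\Bigl\{0,\ \ln\tfrac{\alpha\,\ell}{\sigma}\cdot\tfrac{1}{\ln(\aup/\adown)},\ \ln\tfrac{\sigma}{\alpha u}\cdot\tfrac{1}{\ln(\aup/\adown)}\Bigr\},
\]
where $\ell,u$ are lower and upper normalized step-size thresholds. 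It then proves a uniform negative expected one-step drift $\E[V_{t+1}-V_t\mid\mathcal F_t]\le -B$, truncated from below as in their Thm.~4.x. The hitting-time bound \eqref{s:eq:aagm} then follows from their generic drift theorem (additive drift on a truncated potential), and the almost-sure rate follows from their martingale/strong-law argument. Neither of these last two steps refers to the geometry of $f$. They only use the drift bound and the boundedness of $V_{t+1}-V_t$ from above, so both will be cited unchanged. The work is therefore to re-verify the three drift cases with the new hypotheses.

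\emph{Case 1 (step size too small, $\bar\sigma\le\ell$).} Here (H4) is used exactly as in AAGM. Choose $\ell$ so that $p^{\mathrm{succ}}(\bar\sigma;\vx)\ge(p^{\mathrm{limit}}+\ptgt)/2>\ptgt$ for all $\bar\sigma\le\ell$ and all $\vx\in\mathcal X$. The log step size then has expected increment at least $\bigl((p^{\mathrm{limit}}+\ptgt)/2-\ptgt\bigr)\ln(\aup/\adown)>0$, while $\ln f$ never increases by elitism. This gives negative drift of the penalty term.

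\emph{Case 2 (reasonable step size, $\ell<\bar\sigma<u$).} This case uses only (H1), just as AAGM's lower-bound argument does. A candidate landing in $\mathcal B(\vmu(\vx),C_\ell y)$ with $y=f(\vx)/2$ lands in $\{f<f(\vx)/2\}$, and this happens with probability bounded below uniformly. That probability depends on $\|\vx-\vmu(\vx)\|/(\bar\sigma f(\vx))$, which is bounded by $C_u/\ell$ using (H2), since $\vx\in\mathcal S(\vx)$. The result is an expected decrease of $\ln f$ of at least a constant.

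\emph{Case 3 (step size too large, $\bar\sigma\ge u$)} is where A1 entered and where the modification lives. A success requires the candidate to lie in $\{f\le f(\vx)\}$. I split this event into landing in the own basin $\mathcal S(\vx)$ and landing outside it. The first lies in $\bar{\mathcal B}(\vmu,C_uf(\vx))$ by (H2). Its probability is at most the Gaussian mass of a ball of radius $C_uf(\vx)$ under $\mathcal N(\vx,\bar\sigma^2f(\vx)^2I_r)$, which is at most $\omega_r(C_u/\bar\sigma)^r(2\pi)^{-r/2}$ by Anderson's inequality \citep{Anderson1955}. The second probability is at most $p_{\mathrm{jump}}$ by (H3), uniformly in $\sigma$. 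Choosing $u$ large makes the total success probability at most $(p_{\mathrm{jump}}+\ptgt)/2<\ptgt$, so the log step size drifts down by a positive constant. This is exactly why the strict inequality $p_{\mathrm{jump}}<\ptgt$ is required. With this in hand the three cases combine, via AAGM's choice of the weight $v$ and their constants $\nu$, $A'$ (which absorb $|\ln\bar\sigma_0|$, i.e.\ $|\ln(\sigma_0/f(\vx_0))|$), into the uniform drift $-B$.

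The main obstacle I anticipate is bookkeeping rather than ideas. I must check that no other step of AAGM silently uses A1, in particular their bound on the one-step increase of the penalty term and the choice of $\ell$ via (H4), which needs its infimum over all of $\mathcal X$. I also expect to restate the dependence of the constants on the rate in (H4), since AAGM derive that rate from their own assumptions rather than assuming it.
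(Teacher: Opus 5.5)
Your proposal is correct and takes essentially the paper's route. Like the paper, you reuse the drift machinery of \citet[Sect.~4]{AAGM2022} unchanged: the small-step threshold comes from (H4), the lower bounds on the (sufficient-)success probability from (H1)--(H2), and you repair the only place where A1 enters, the large-step upper bound, by splitting success into the own basin (outer ball via (H2), then Anderson) and the rest (bounded by (H3)), so that the bound tends to $p_{\mathrm{jump}}<\ptgt$. One slip to fix when you write the potential: its penalty terms must use the normalized step, $\ln\frac{\aup\ell f(\vx)}{\sigma}$ and $\ln\frac{\sigma}{\adown u f(\vx)}$ as in the paper's $\mathcal V$, not $\ln\frac{\alpha\ell}{\sigma}$ with the raw $\sigma$.
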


\begin{proof}
For $p_{\mathrm{jump}}=0$ and a single basin, this is Theorem~4.5 and Proposition~4.8 of \citet{AAGM2022} under their A1 with $a=0$, i.e.\ (H1)--(H2), and A2, i.e.\ (H4). A1 enters their Section~4 only through Proposition~4.1 (bounds on $p^{\mathrm{succ}}$) and Lemma~4.2 (step-size thresholds, and a lower bound on the probability $p^*_{r'}$ of $f(\vx+\sigma\vw)\le(1-r')f(\vx)$), which we recheck.

\emph{Lower bounds.} Their lower bounds on $p^{\mathrm{succ}}$ and $p^*_{r'}$ use only an inner ball around some $\vmu$ inside the target sublevel set and $\norm{\vx-\vmu}\le C_uf(\vx)$; (H1) at $y=(1-r')f(\vx)$ and (H2) supply both with $\vmu=\vmu(\vx)$, and the rest of the sublevel set only enlarges the success event.

\emph{Upper bound.} Their bound $p^{\mathrm{upper}}(\bar\sigma):=\sup_{\mathcal X}p^{\mathrm{succ}}(\bar\sigma;\cdot)\le\Pr[\norm{\vw}\le C_u/\bar\sigma]$ comes from the outer ball and Anderson's inequality \citep{Anderson1955} (a centred ball carries at least the Gaussian mass of any translate). Applying it to $\mathcal S(\vx)$ via (H2), and (H3) to the rest, gives $p^{\mathrm{upper}}(\bar\sigma)\le\Pr[\norm{\vw}\le C_u/\bar\sigma]+p_{\mathrm{jump}}\to p_{\mathrm{jump}}<\ptgt$ as $\bar\sigma\to\infty$.

\emph{Thresholds.} Fix $p_{\mathrm{hi}}\in(\max\{p_{\mathrm{jump}},2\ptgt-p^{\mathrm{limit}}\},\ptgt)$ and $p_{\mathrm{lo}}:=2\ptgt-p_{\mathrm{hi}}\in(\ptgt,p^{\mathrm{limit}})$ (their $p_u,p_\ell$). By the upper bound, $p^{\mathrm{succ}}\le p_{\mathrm{hi}}$ above a finite threshold of $\bar\sigma$; by (H4), $p^{\mathrm{succ}}\ge p_{\mathrm{lo}}$ below a positive one. From these, their Lemma~4.2 builds $s_{\mathrm{lo}}<s_{\mathrm{hi}}$ with $s_{\mathrm{hi}}/s_{\mathrm{lo}}\ge\aup/\adown$ (their $\ell,u$), and the lower bounds keep $p^*_{r'}$ away from $0$ on $[s_{\mathrm{lo}},s_{\mathrm{hi}}]$.

\emph{Drift.} Their Proposition~4.3 (Appendix~B.5) bounds the drift of
\[
\mathcal V(\vx,\sigma)=\ln f(\vx)+\nu\max\Bigl\{0,\ \ln\frac{\aup s_{\mathrm{lo}}f(\vx)}{\sigma},\ \ln\frac{\sigma}{\adown s_{\mathrm{hi}}f(\vx)}\Bigr\}
\]
using the objective only through these thresholds and elitism, which keeps $\vx_t\in\mathcal X$; their Lemma~4.4 uses neither A1 nor A2. Their Theorem~4.5 thus gives $\E[\tau_\eta]\le(\mathcal V(\vx_0,\sigma_0)-\ln\eta+A)/B$ with $A>B>0$, and the first identity of their Proposition~4.8 gives the a.s.\ rate (the second, on distances, needs A1 and is not used). Bounding the maximum in $\mathcal V(\vx_0,\sigma_0)$ by $|\ln(\sigma_0/f(\vx_0))|+C_{\mathcal V}$, $C_{\mathcal V}:=\max\{|\ln(\aup s_{\mathrm{lo}})|,|\ln(\adown s_{\mathrm{hi}})|\}$, gives \eqref{s:eq:aagm} with $A':=A+\nu C_{\mathcal V}$.
\end{proof}

\subsection{The wrap lattice, the barrier, and the basins}

\begin{lemma}[Wrap lattice]\label{s:lem:lattice}
$\Lat:=P(2\pi\Z^N)$ is a lattice in $V$ (a discrete subgroup spanning $V$). The function $g$ is $\Lat$-periodic with $g^{-1}(0)=\Lat$, and $\dq(\vmu)\ge\pi/2$, hence $\norm{\vmu}\ge\pi/2$, for every $\vmu\in\Lat\setminus\{0\}$. Moreover, $\operatorname{dist}(\vx,\Zset)=\min_{\vmu\in\Lat}\dq(\vx-\vmu)$ for $\vx\in V$, with $\operatorname{dist}$ as in \eqref{eq:barrier}.
\end{lemma}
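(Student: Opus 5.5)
The proof has four parts, in order: the lattice property of $\Lat$, periodicity and the zero set of $g$, the separation bound $\dq(\vmu)\ge\pi/2$, and the distance formula.

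\emph{Lattice and periodicity.} Since $P$ is linear, $\Lat=P(2\pi\Z^N)$ is a subgroup of $V$. It spans $V$ because $P$ maps $\R^N=\operatorname{span}_\R(2\pi\Z^N)$ onto $V$. Discreteness will follow from the separation bound: a subgroup whose nonzero elements have norm at least $\pi/2$ has no accumulation point, and a discrete spanning subgroup of $V$ is a lattice of rank $r$.

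For periodicity, take $\vmu=P(2\pi\vn)$. Then $2\pi\vn-\vmu\in\Gauge$. Since $G$ is $2\pi\Z^N$-periodic and $\Gauge$-invariant (the $\va_k$ annihilate $\Gauge$),
\[
g(\vx+\vmu)=G(\vx+2\pi\vn)=G(\vx)=g(\vx).
\]
For the zero set, $g^{-1}(0)=V\cap\Zset$ by \eqref{eq:zeroset}. Any $\vx=2\pi\vn+\vg\in V$ satisfies $\vx=P\vx=P(2\pi\vn)\in\Lat$, and conversely $\Lat\subset\Zset\cap V$.

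\emph{Separation, the main step.} Let $\vmu=P(2\pi\vn)\neq0$, so $2\pi\vn\notin\Gauge$. Then $\dq(\vmu)=\dq(2\pi\vn)=\min_{\vg\in\Gauge}\norm{2\pi\vn-\vg}_\infty$, because $\dq$ is $\Gauge$-invariant. I would use second differences. For every $\vg\in\Gauge$ and every interior index $i$ of $W$, the second difference of $\vg$ at $i$ vanishes, so
\[
|2\pi(n_{i-1}-2n_i+n_{i+1})|\le4\norm{2\pi\vn-\vg}_\infty .
\]
If some interior second difference of $\vn$ is nonzero, it is a nonzero integer, and therefore $\dq(\vmu)\ge2\pi/4=\pi/2$.

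If instead every interior second difference of $\vn$ vanishes, then $\vn$ is affine on $W$. Outside $W$ the coordinates lie in $\Gauge$ by the setting's description $\dim\Gauge=N-|W|+2$. Hence $2\pi\vn\in\Gauge$, which contradicts $\vmu\neq0$. This case split is the step to check carefully: one must confirm that $\Gauge$ contains all off-window coordinate directions, so that only the window matters. Finally, $\norm{\vmu}\ge\norm{\vmu}_\infty\ge\dq(\vmu)\ge\pi/2$.

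\emph{Distance formula.} For $\vx\in V$ we have $\operatorname{dist}(\vx,\Zset)=\min_{\vn,\vg}\norm{\vx-2\pi\vn-\vg}_\infty$. Decompose $2\pi\vn=\vmu+\vg'$ with $\vmu=P(2\pi\vn)$ and $\vg'\in\Gauge$, and absorb $\vg'$ into $\vg$. The minimum over $\vg$ is then $\dq(\vx-\vmu)$, which gives
\[
\operatorname{dist}(\vx,\Zset)=\min_{\vmu\in\Lat}\dq(\vx-\vmu).
\]
The minimum over $\vmu$ is attained because $\Lat$ is discrete, $\dq$ is a norm on $V$, and $\dq(\vx-\vmu)\to\infty$ along $\Lat$.
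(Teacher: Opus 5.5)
Your proof is correct. For periodicity, the zero set, the second-difference separation bound and the distance formula, it is essentially the paper's argument.

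The one genuine difference is how you get discreteness of $\Lat$. The paper argues arithmetically: $\Gauge$ has an integer basis (the window restrictions of $\mathbf 1$ and of the ramp, plus $\ve_j$ for $j\notin W$), so $P=I-U(U^{\!\top}U)^{-1}U^{\!\top}$ is rational and $\Lat\subseteq(2\pi/q)\Z^N$. You instead derive discreteness from the separation bound itself: any two distinct points of the subgroup differ by a nonzero element of norm at least $\pi/2$. This is not circular, since the second-difference bound never uses discreteness.

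Your route is more economical. It also delivers directly the uniform packing radius $\pi/2$ that Lemma~\ref{s:lem:images} relies on. The paper's route additionally exhibits $\Lat$ as a rational lattice inside $(2\pi/q)\Z^N$, which nothing later requires. You also justify that the minimum in the distance formula is attained, which the paper leaves implicit.

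On the step you flagged: the dimension count $\dim\Gauge=N-|W|+2$ does not by itself show that $\ve_j\in\Gauge$ for $j\notin W$. The actual reason is that every term with $w_k>0$ has all four indices in $W$, so each such $\va_k$ is supported on $W$ and $\ve_j\in\ker H$. The paper uses this same fact when it lists the integer spanning set of $\Gauge$.
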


\begin{proof}
$\Gauge=\ker H$ is spanned by integer vectors: the restrictions to $W$ of $\mathbf 1$ and of the ramp $(0,1,\dots,N-1)$, and $\ve_j$ for $j\notin W$ (Corollary~\ref{cor:gauge}, Appendix~\ref{s:setting}). For an integer basis matrix $U$ of $\Gauge$, $P=I-U(U^{\!\top}U)^{-1}U^{\!\top}$ is rational; with $q$ a common denominator, $\Lat\subseteq(2\pi/q)\Z^N$ is discrete, and it spans $V$ as the $P\ve_j$ do.
\emph{Periodicity.} $g(\vx+P(2\pi\vn))=G(\vx+2\pi\vn-(I-P)2\pi\vn)=G(\vx+2\pi\vn)=G(\vx)$, by gauge invariance and because $G$ depends on $\vdelta$ only through $e^{\mathrm i\vdelta}$.
\emph{Zeros.} By Corollary~\ref{cor:gauge}, $G^{-1}(0)=\Zset=2\pi\Z^N+\Gauge$, so $g^{-1}(0)=V\cap\Zset=P(V\cap\Zset)\subseteq P\Zset=\Lat\subseteq\Zset$.
\emph{Norm.} Let $\vmu=P(2\pi\vn)\neq0$ and $\vg\in\Gauge$. Then $\vmu-\vg=2\pi\vn-\vg'$ with $\vg'\in\Gauge$. As $\vmu\ne0$, $\vn$ is not affine on $W$, so at some interior index of $W$ the second difference of $\vn$ is a nonzero integer, and that of $\vmu-\vg$ is $2\pi$ times it. Hence $4\norm{\vmu-\vg}_\infty\ge2\pi$, and minimizing over $\vg$ gives $\dq(\vmu)\ge\pi/2$.
\emph{Distance.} $\operatorname{dist}(\vx,\Zset)=\min_{\vn,\vg}\norm{\vx-2\pi\vn-\vg}_\infty=\min_{\vn}\dq(\vx-2\pi\vn)=\min_{\vn}\dq(\vx-P(2\pi\vn))$, as $\dq$ is gauge invariant.
\end{proof}

\noindent
Let $F:=\{\vx\in V:\ \dq(\vx-\vmu)\ge\pi/4\ \ \forall\vmu\in\Lat\}$. By Lemma~\ref{s:lem:lattice} and gauge invariance, $\cstar=\inf_F g$. For $0\le c<\cstar$ define the \emph{basin} $S_c:=\{\vx\in K:\ g(\vx)\le c\}$.

\begin{lemma}[Barrier and basins; refines Lemma~\ref{lem:barrier}]\label{s:lem:basins}
\begin{enumerate}
\renewcommand{\labelenumi}{(\alph{enumi})}
\item $\cstar$ is attained and positive, $\cstar\le C/8$, and $\cstar\le G(\pi\ve_j)$ for every $j\in W$.
\item $\cstar\ge\min\{\lambda_r/8,\ G_{\mathrm{trap}}\}$.
\item For $c<\cstar$: $\{\vx\in V:\ g(\vx)\le c\}=\bigsqcup_{\vmu\in\Lat}(\vmu+S_c)$ (disjoint union); $S_c$ is compact, contained in the interior of $K$, and star-shaped with respect to $0$.
\item For $c<\cstar$: $\mathcal B\bigl(0,\sqrt{2c/\lambda_1}\bigr)\subseteq S_c\subseteq\bar{\mathcal B}\bigl(0,\pi\sqrt{c/(2\lambda_r)}\bigr)$, and $g<c$ on the open inner ball.
\item $S_0=\{0\}$, and $c\mapsto\operatorname{vol}(S_c)$ is continuous and strictly increasing on $[0,\cstar)$.
\end{enumerate}
\end{lemma}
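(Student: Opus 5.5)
I will work entirely in $V$ with $g$, the lattice $\Lat$ of Lemma~\ref{s:lem:lattice}, the compact convex set $K=\Treg{1}\cap V$, and the sandwich of Corollary~\ref{cor:sandwich} restricted to $V$, namely $\tfrac{2\lambda_r}{\pi^2}\norm{\vx}^2\le g(\vx)\le\tfrac{\lambda_1}{2}\norm{\vx}^2$ on $K$, with the upper bound holding everywhere.

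\textbf{Part (a).} Since $g$ is continuous and $\Lat$-periodic, and $F$ is closed and $\Lat$-invariant, I reduce to a compact fundamental domain intersected with $F$; this shows the infimum over $F$ is attained. Positivity follows because $F$ avoids $g^{-1}(0)=\Lat$, since every point of $F$ has $\dq(\vx-\vmu)\ge\pi/4>0$. For the upper bounds I exhibit witnesses in $F$. For the comb, $\tfrac{\pi}{4}P\vu$ has $\dq=\pi/4$ relative to $0$, and, as for any $\vx$, $\dq(\vx-\vmu)\ge\pi/4$ for the other $\vmu$. The latter follows from the second-difference bound: the comb's second differences equal $\pm\pi$, and those of $\vmu$ lie in $2\pi\Z$, so $4\dq\ge\pi$. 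Moreover $G(\tfrac\pi4\vu)=\tfrac{C}{8}\sin^2(\pi/2)\cdot\ldots=C/8$ by the figure's reduction $G(c\vu)=\tfrac{C}{8}\sin^2(2c)$, which I take from the comb computation. For the half-wrap $\pi\ve_j$, with $j$ interior, the second difference at $j$ is $\mp2\pi$ or $\pi$. I check that, modulo $2\pi\Z$ second differences, some second difference is an odd multiple of $\pi$, giving distance $\ge\pi/4$. Edge $j$ needs the second difference at the neighbour of $j$, which equals $\pi$.

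\textbf{Part (b).} Let $\vx^\ast\in F$ attain $\cstar$. If $\vx^\ast$ lies in the interior of $F$, it is a local minimizer with $g>0$, hence a spurious minimizer (local minima of $g$ on $V$ lift to local minima of $G$ via $G=g\circ P$), so $\cstar\ge G_{\mathrm{trap}}$. Otherwise $\vx^\ast\in\partial F$, so $\dq(\vx^\ast-\vmu)=\pi/4$ for some $\vmu$. Then $\vy=\vx^\ast-\vmu\in\partial K$, and by periodicity and the sandwich, $g(\vx^\ast)\ge\tfrac{2\lambda_r}{\pi^2}\norm{\vy}^2$. This requires $\norm{\vy}_2\ge\dq(\vy)=\pi/4$, which gives $\ge\lambda_r/8$.

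\textbf{Parts (c)--(e).} For (c), note that if $g(\vx)\le c<\cstar$ then $\vx\notin F$, so $\dq(\vx-\vmu)<\pi/4$ for some $\vmu$; uniqueness holds because two such $\vmu$ would differ with $\dq<\pi/2$, contradicting Lemma~\ref{s:lem:lattice}. The basin $S_c$ is compact as a closed subset of $K$. It lies in the interior of $K$ because points of $\partial K$ lie in $F$, hence have $g\ge\cstar>c$. Star-shapedness comes from radial monotonicity: $\frac{d}{dt}g(t\vx)=\inner{\nabla g(t\vx)}{\vx}>0$ for $t\vx\in\operatorname{int}K\setminus\{0\}$, by Lemma~\ref{s:lem:star}, which holds for all $\theta<1$. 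Thus $g(t\vx)\le g(\vx)$ along the segment, while the segment stays in $K$ by convexity. For (d), the inner ball lies inside $K$ whenever $\sqrt{2c/\lambda_1}\le\pi/4$; I would rather argue via radial monotonicity, since the open ball is star-shaped, its points have $g<c$ by the global upper bound, and it is connected to $0$, so it sits in the $\vmu=0$ copy. The outer ball comes from the lower sandwich on $K$. For (e), $S_0=\{0\}$ follows from (d). Continuity of the volume follows since level sets $\{g=c\}\cap K$ have measure zero, being radial graphs by strict radial monotonicity. Strict increase holds because, by continuity of $g$ and $0$ lying in the interior, $S_{c'}\setminus S_c$ contains an open set for $c<c'$.

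\textbf{Main obstacle.} I expect the hardest step to be part (b)'s boundary case together with the lattice uniqueness in (c). The crux is controlling $\dq$ by second differences, and in particular the half-wrap witness at edge pixels.
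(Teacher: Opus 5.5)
Your proposal is correct and follows the paper's proof essentially step for step: fundamental-domain attainment, comb and half-wrap witnesses via second differences, the interior/boundary dichotomy in (b), lattice uniqueness and radial monotonicity in (c), and connectedness for the inner ball in (d). Your deviations are harmless and equally valid: you bound $\dq(\vx-\vmu)$ for the comb by second differences for every $\vmu\in\Lat$ at once, and in (e) you get null level sets from strict radial monotonicity instead of analyticity of the trigonometric polynomial.

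One caveat, which you share with the paper: the half-wrap bound needs an interior neighbour of $j$. No such neighbour exists when $|W|=3$ and $j$ is the centre, and there $\pi\ve_j$ is itself a wrapped affine phase with $G(\pi\ve_j)=0$, so that part of (a) tacitly requires $|W|\ge4$.
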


\begin{proof}
(a) The map $\phi(\vx):=\min_{\vmu\in\Lat}\dq(\vx-\vmu)$ is continuous ($\dq$ is $1$-Lipschitz in $\norm{\cdot}_\infty$ and $\Lat$ is discrete), so $F=\{\phi\ge\pi/4\}$ is closed and $\Lat$-invariant. As $g$ is continuous and $\Lat$-periodic and $V/\Lat$ is compact, $\inf_F g$ is attained at some $\vx^\star\in F$ once $F\ne\emptyset$, and $\cstar=g(\vx^\star)>0$ as $F\cap\Lat=\emptyset$.
\emph{Comb.} Let $\vx:=P(\tfrac{\pi}{4}\vu)$. By Proposition~\ref{cx:crit}, $\dq(\vx)=\pi/4$, and for $\vmu\ne0$, $\dq(\vx-\vmu)\ge\dq(\vmu)-\dq(\vx)\ge\pi/4$ by Lemma~\ref{s:lem:lattice}. Hence $\vx\in F$ and $\cstar\le g(\vx)=C/8$.
\emph{Half-wrap.} Let $j\in W$ and pick an interior index $i$ of $W$ with $|i-j|=1$ (possible as $|W|\ge3$). For all $\vn\in\Z^N$ and $\vg\in\Gauge$, the second difference of $\pi\ve_j-2\pi\vn-\vg$ at $i$ is $\pi$ minus a multiple of $2\pi$, of modulus $\ge\pi$, so $\norm{\pi\ve_j-2\pi\vn-\vg}_\infty\ge\pi/4$. Thus $\operatorname{dist}(\pi\ve_j,\Zset)\ge\pi/4$, so $P(\pi\ve_j)\in F$ and $\cstar\le G(\pi\ve_j)$.

(b) If $\vx^\star$ is interior to $F$ (relative to $V$), it is a local minimizer of $g$, hence of $G=g\circ P$, with $G(\vx^\star)>0$, so $\cstar\ge G_{\mathrm{trap}}$. Otherwise $\phi(\vx^\star)=\pi/4$ by continuity of $\phi$, attained at some $\vmu$, so $\vy:=\vx^\star-\vmu\in K$ with $\dq(\vy)=\pi/4$. By Corollary~\ref{cor:sandwich} and $Q\ge\frac{\lambda_r}{2}\norm{\cdot}^2$ on $V$,
$\cstar=g(\vy)\ge\frac{4}{\pi^2}Q(\vy)\ge\frac{2\lambda_r}{\pi^2}\norm{\vy}^2\ge\frac{2\lambda_r}{\pi^2}\dq(\vy)^2=\frac{\lambda_r}{8}$.

(c) Let $c<\cstar$ and $g(\vx)\le c$. Then $\vx\notin F$, so $\vx-\vmu\in\operatorname{int}K$ for some $\vmu\in\Lat$. This $\vmu$ is unique, since two such $\vmu,\vmu'$ give $\dq(\vmu-\vmu')<\pi/2$ and hence $\vmu=\vmu'$ by Lemma~\ref{s:lem:lattice}; this is also the disjointness of the translates. By periodicity $\vx-\vmu\in S_c$, and conversely $\vmu+S_c\subseteq\{g\le c\}$. If $\dq(\vy)=\pi/4$, then $\dq(\vy-\vmu)\ge\pi/2-\pi/4$ for all $\vmu\ne0$, so $\vy\in F$ and $g(\vy)\ge\cstar>c$; hence $S_c\subseteq\operatorname{int}K$, and $S_c$ is compact as a closed subset of $K$. \emph{Star shape:} for $\vx\in K$, every $\ell_k=\inner{\va_k}{\vx}$ satisfies $|\ell_k|\le\pi/2$ (Corollary~\ref{cor:sandwich}), so $\tfrac{d}{d\tau}g(\tau\vx)=\sum_kw_k\,\ell_k\sin(2\tau\ell_k)\ge0$ for $\tau\in[0,1]$; hence $g(\tau\vx)\le g(\vx)$, and $\tau\vx\in K$ by convexity.

(d) \emph{Outer.} For $\vx\in S_c\subseteq K$, $c\ge g(\vx)\ge\frac{2\lambda_r}{\pi^2}\norm{\vx}^2$ as in (b). \emph{Inner.} If $\norm{\vx}<\sqrt{2c/\lambda_1}$, then $g(\vx)\le Q(\vx)\le\frac{\lambda_1}{2}\norm{\vx}^2<c$ by Corollary~\ref{cor:dom}. The open ball is connected, contains $0$, and lies in $\bigsqcup_{\vmu}(\vmu+S_c)$, whose pieces are compact, pairwise disjoint and locally finite, hence open and closed in the union; so the ball lies in $S_c$.

(e) $S_0=K\cap\Lat=\{0\}$ by Lemma~\ref{s:lem:lattice}. The level sets of the nonconstant trigonometric polynomial $g$ are Lebesgue-null, so $\operatorname{vol}(S_c)=\operatorname{vol}(K\cap\{g<c\})$, and continuity follows from continuity of measure along monotone families. For $0\le c<c'<\cstar$, $S_{c'}\setminus S_c$ contains the open set $\{\vx\in\operatorname{int}K:\ c<g(\vx)<c'\}$, which is nonempty: along a segment from $0$ to $\partial K\subseteq F$, $g$ increases continuously from $0$ to at least $\cstar>c'$.
\end{proof}

\noindent
\emph{Proof of Lemma~\ref{lem:barrier}.} Items (a)--(c) of Lemma~\ref{s:lem:basins}, lifted from $V$ to $\R^N$ by gauge invariance. \qed

\medskip
\noindent
\emph{Folded suboptimality.} For $c\in[0,\cstar)$ let $h(c):=\operatorname{vol}(S_c)^{1/r}$, which is continuous and strictly increasing with $h(0)=0$ by Lemma~\ref{s:lem:basins}(e). Extend $h$ strictly increasingly to $[0,\infty)$ and set $f:=h\circ g$. This is the \emph{spatial suboptimality} of \citet{AAGM2022}, computed on one basin instead of the whole periodic, infinite-volume sublevel set. By Lemma~\ref{s:lem:basins}(d),
\begin{equation}
\omega_r^{1/r}\sqrt{2c/\lambda_1}\;\le\;h(c)\;\le\;\omega_r^{1/r}\,\pi\sqrt{c/(2\lambda_r)} .
\label{s:hbounds}
\end{equation}
Let $C_\ell:=2/(\pi\,\omega_r^{1/r}\sqrt{\kappa_r})$ and $C_u:=\frac{\pi}{2}\,\omega_r^{-1/r}\sqrt{\kappa_r}$. Combining \eqref{s:hbounds} with Lemma~\ref{s:lem:basins}(d), periodicity, and $\frac{2\lambda_r}{\pi^2}\norm{\vx}^2\le g(\vx)\le\frac{\lambda_1}{2}\norm{\vx}^2$ on $K$ gives, for all $\vmu\in\Lat$ and $c<\cstar$,
\begin{equation}
\mathcal B\bigl(\vmu,C_\ell\,h(c)\bigr)\subseteq\{f<h(c)\},
\qquad
C_\ell f(\vx)\le\norm{\vx-\vmu}\le C_uf(\vx)\quad(\vx\in\vmu+S_c).
\label{s:fnorm}
\end{equation}

\subsection{Periodic images and uniform improvability}

\begin{lemma}[Periodic images]\label{s:lem:images}
Let $0<c\le\lambda_r/32$ with $c<\cstar$, and let $\vmu'\in\Lat$. For every $\vx\in\vmu'+S_c$ and every $\sigma>0$,
\[
\Pr\Bigl[\vx+\sigma\vw\in\textstyle\bigcup_{\vmu\in\Lat\setminus\{\vmu'\}}(\vmu+S_c)\Bigr]\;\le\;\Bigl(\frac{128\,c}{\lambda_r}\Bigr)^{r/2}.
\]
\end{lemma}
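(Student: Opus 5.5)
The plan is to reduce the statement to a Gaussian–mass comparison between a family of small balls and a family of larger, pairwise disjoint balls, using only the geometry of Lemma~\ref{s:lem:basins}(d) and the lattice separation of Lemma~\ref{s:lem:lattice}.

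\emph{Step 1 (geometry).} By periodicity we may take $\vmu'=0$, so $\vx\in S_c$. By Lemma~\ref{s:lem:basins}(d) each copy $\vmu+S_c$ lies in $\bar{\mathcal B}(\vmu,R)$ with $R:=\pi\sqrt{c/(2\lambda_r)}$. The hypothesis $c\le\lambda_r/32$ gives $R\le\pi/8$. By Lemma~\ref{s:lem:lattice}, distinct lattice points satisfy $\norm{\vmu-\vmu''}\ge\pi/2$, so the balls $\mathcal B(\vmu,\pi/4)$, $\vmu\in\Lat$, are pairwise disjoint. For $\vmu\ne0$ we also get $\norm{\vmu-\vx}\ge\pi/2-R\ge3\pi/8$, so every such ball stays away from the starting point. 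The target bound is $(128c/\lambda_r)^{r/2}=(4\cdot 4R/\pi)^r$, which is $(R/\rho)^r$ for $\rho=\pi/16$. This leaves a factor $4^r$ of slack compared with the naive ratio $(R/(\pi/4))^r$.

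\emph{Step 2 (per-ball comparison).} Let $\phi$ be the density of $\vx+\sigma\vw$. It is radially decreasing about $\vx$, for every $\sigma$. For each $\vmu\ne0$ I will build a region $D_\mu\subseteq\mathcal B(\vmu,\pi/4)$ with two properties:
\[
\textstyle\int_{\mathcal B(\vmu,R)}\phi\le(R/\rho)^r\int_{D_\mu}\phi ,
\]
and the sets $D_\mu$ are pairwise disjoint. They are disjoint as soon as $D_\mu\subseteq\mathcal B(\vmu,\pi/4)$. Given this, summing over $\vmu\ne0$ gives
\[
\Pr\le(R/\rho)^r\,\Pr[\vx+\sigma\vw\in\textstyle\bigcup D_\mu]\le(R/\rho)^r ,
\]
which is the claim, uniformly in $\sigma$ and without any counting of lattice points.

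\emph{Construction of $D_\mu$.} I would first try a homothety centred at the point $\vy_\mu:=$ the point of $\bar{\mathcal B}(\vmu,R)$ closest to $\vx$. Its image $D_\mu$ has radius $\rho$ and is shifted away from $\vx$, along the ray from $\vx$ through $\vmu$. By the change of variables $\vy\mapsto\vy_\mu+(\rho/R)(\vy-\vy_\mu)$, the mass of $\mathcal B(\vmu,R)$ equals $(R/\rho)^r$ times the integral over $D_\mu$ of the pulled-back density. So the inequality needs $\phi$ at a point of the original ball to be at most $\phi$ at the corresponding point of $D_\mu$. That requires the image point to be no farther from $\vx$ than its preimage.

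A homothety with ratio $\rho/R<1$ centred at $\vy_\mu$ contracts toward $\vy_\mu$, which is the nearest point to $\vx$. Distances to $\vx$ then decrease only for points near the ray through $\vx$, and not in general. This is the main obstacle.

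The fallback is to work in polar coordinates about $\vx$. Let $\Omega_\mu$ be the cone of directions hitting $\mathcal B(\vmu,R)$. Along each direction the hit segment has length at most $2R$ and sits at radius $\ge3\pi/8$. Replace it by a radial segment placed at a smaller radius, still inside $\mathcal B(\vmu,\pi/4)$: the annular sector between radii $\norm{\vmu-\vx}-\pi/4$ and $\norm{\vmu-\vx}-R$. Because $\phi$ is radially decreasing, mass can only increase when moving inward, uniformly in $\sigma$. The Jacobian factor $t^{r-1}$ of the polar change changes by at most a bounded power ratio, and I expect the $4^r$ slack to absorb it. The cone of this sector lies in the cone of $\mathcal B(\vmu,\pi/4)$, so the sectors stay disjoint across $\vmu$.

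Verifying that the $4^r$ slack indeed dominates the radial-Jacobian loss for all positions $\norm{\vmu-\vx}\ge3\pi/8$ is the step I expect to require the most care.
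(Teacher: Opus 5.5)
There is a genuine gap. Step~2 never produces a valid $D_\mu$, and both constructions you propose fail exactly where the lemma has content.

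\textbf{Which case matters.} With $R=\pi\sqrt{c/(2\lambda_r)}$ and $\rho=\pi/16$ the bound is $(R/\rho)^r$. This is $\ge1$, hence trivial, whenever $R\ge\rho$, so the only case to prove is $R<\rho$.

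\textbf{The homothety.} For $R<\rho$ your homothety has ratio $\rho/R>1$, so it \emph{expands} about $\vy_\mu$. Put $\ve_\mu:=(\vmu-\vx)/\norm{\vmu-\vx}$. The map sends the far point $\vy_\mu+2R\,\ve_\mu$ to $\vy_\mu+2\rho\,\ve_\mu$, which is farther from $\vx$. So the pointwise density comparison your change of variables needs breaks down. Conversely, the obstacle you describe for ratio $<1$ does not exist: a contraction toward the nearest point maps $\vy$ into the segment $[\vy_\mu,\vy]$, which lies in the convex ball $\bar{\mathcal B}(\vx,\norm{\vy-\vx})$. That case is vacuous anyway.

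\textbf{The polar fallback.} This fails quantitatively, not merely through the Jacobian. It keeps the cone $\Omega_\mu$ fixed and moves mass only radially, so it gains a factor linear in $R$, while you need $(R/\rho)^r$. Letting $\sigma\to\infty$, your per-ball inequality forces $\operatorname{vol}(D_\mu)\ge\omega_r\rho^r$. But a sector with solid angle $O\bigl((R/d)^{r-1}\bigr)$, $d=\norm{\vmu-\vx}$, and radial thickness below $\pi/4$ at radius $\le d$ has volume $O(R^{r-1})$. That is too small once $c$ is small and $r\ge2$, and no ``$4^r$ slack'' repairs it. Two further problems:
\begin{itemize}
\item the sector's inner corners stick out of $\mathcal B(\vmu,\pi/4)$;
\item containment of its cone in the cone of $\mathcal B(\vmu,\pi/4)$ does not make sectors of different $\vmu$ disjoint, since seen from $\vx$ the cones of $\vmu$ and $2\vmu$ overlap.
\end{itemize}

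\textbf{A repair within your framework.} Use a map that both expands and translates toward $\vx$:
\[
S(\vy):=\vmu-2\rho\,\ve_\mu+(\rho/R)(\vy-\vmu).
\]
It has Jacobian $(\rho/R)^r$ and maps $\bar{\mathcal B}(\vmu,R)$ onto $D_\mu:=\bar{\mathcal B}(\vmu-2\rho\,\ve_\mu,\rho)\subseteq\bar{\mathcal B}(\vmu,3\pi/16)$, so the $D_\mu$ are pairwise disjoint. Write $\vy=\vmu+\vz$ with $\norm{\vz}\le R<\rho$. Expand the squares, use $|\inner{\ve_\mu}{\vz}|\le R$, and use $d\ge\pi/2-R\ge7\rho$:
\[
\norm{S(\vy)-\vx}^2-\norm{\vy-\vx}^2\;\le\;4\rho^2-4\rho d+2(\rho-R)d+4\rho^2+\rho^2\;\le\;9\rho^2-2\rho d\;<\;0 .
\]
So the radially decreasing density at $S(\vy)$ dominates that at $\vy$ for every $\sigma$. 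The change of variables then gives $\Pr[\vx+\sigma\vw\in\vmu+S_c]\le(R/\rho)^r\Pr[\vx+\sigma\vw\in D_\mu]$. Summing over $\vmu\ne0$ yields $(R/\rho)^r=(128c/\lambda_r)^{r/2}$.

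\textbf{Comparison with the paper.} The paper bounds the Gaussian density on each far copy by its value at distance $\norm{\vmu}/2$. It then sums $e^{-\norm{\vmu}^2/(8\sigma^2)}$ over $\Lat\setminus\{0\}$ by averaging over the disjoint balls $\mathcal B(\vmu,D/2)$ against $e^{-\norm{\vy}^2/(32\sigma^2)}$, so the $\sigma$-dependence cancels. The repaired per-ball route avoids the lattice sum and uses only radial monotonicity of the mutation law, with the same final constant. As written, however, your proposal does not supply it.
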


\begin{proof}
By periodicity let $\vmu'=0$. Let $R:=\pi\sqrt{c/(2\lambda_r)}\le\pi/8$ and $D:=\min_{\vmu\in\Lat\setminus\{0\}}\norm{\vmu}\ge\pi/2$ (Lemma~\ref{s:lem:lattice}), so $2R\le D/2$. By Lemma~\ref{s:lem:basins}(d), $\vx\in\bar{\mathcal B}(0,R)$ and $\vmu+S_c\subseteq\bar{\mathcal B}(\vmu,R)$, whose points $\vy$ satisfy $\norm{\vy-\vx}\ge\norm{\vmu}-2R\ge\norm{\vmu}/2$. Bounding the density of $\vx+\sigma\vw$ there,
\[
\Pr[\vx+\sigma\vw\in\vmu+S_c]\;\le\;\omega_rR^r\,(2\pi\sigma^2)^{-r/2}\,e^{-\norm{\vmu}^2/(8\sigma^2)} .
\]
The balls $\mathcal B(\vmu,D/2)$, $\vmu\in\Lat$, are pairwise disjoint, and $\norm{\vy}\le2\norm{\vmu}$ on $\mathcal B(\vmu,D/2)$ for $\vmu\neq0$, so $e^{-\norm{\vmu}^2/(8\sigma^2)}\le e^{-\norm{\vy}^2/(32\sigma^2)}$ there. Averaging over each ball and summing,
\[
\sum_{\vmu\ne0}e^{-\norm{\vmu}^2/(8\sigma^2)}
\;\le\;\frac{1}{\omega_r(D/2)^r}\int_{V}e^{-\norm{\vy}^2/(32\sigma^2)}\dd\vy
\;=\;\frac{(32\pi\sigma^2)^{r/2}}{\omega_r(D/2)^r}.
\]
Multiplying, the total is at most $R^r\,16^{r/2}(2/D)^r=(8R/D)^r\le(16R/\pi)^r=(128c/\lambda_r)^{r/2}$.
\end{proof}

\begin{lemma}[Uniform $\tfrac12$-improvability]\label{s:lem:improv}
Let $0<b<\cstar$ and $\theta_b:=\frac{4}{\pi}\max_{\vx\in S_b}\dq(\vx)$, so that $\theta_b<1$ by Lemma~\ref{s:lem:basins}(c). For every $p<\tfrac12$ there is $s_p>0$ such that $\Pr[g(\vx+\sigma\vw)<g(\vx)]\ge p$ whenever $0<g(\vx)\le b$ and $0<\sigma\le s_p\,f(\vx)$.
\end{lemma}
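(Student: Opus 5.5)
The plan is a compactness-plus-Taylor argument in the scale-invariant coordinates, driven by two facts: $g$ is $C^2$ with a globally bounded Hessian (Lemma~\ref{s:lem:smooth}), and on $S_b$ its gradient is bounded below relative to $g$ (Corollary~\ref{s:cor:gradedpl}). Fix $\vx$ with $0<g(\vx)\le b$; by periodicity we may take $\vx\in S_b$, so $\dq(\vx)\le\theta_b\pi/4$ and $\vx\in\Treg{\theta_b}$ with $\theta_b<1$. Corollary~\ref{s:cor:gradedpl} then gives $\norm{\nabla g(\vx)}^2\ge c_b\,\lambda_r\,g(\vx)$ with $c_b:=8\sin^2(\theta_b\pi)/(\theta_b^2\pi^4)>0$. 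Combining this with $g\le\frac{\lambda_1}{2}\norm{\vx}^2$ and the quadratic growth $g\ge\frac{2\lambda_r}{\pi^2}\norm{\vx}^2$ on $K$ (Lemma~\ref{s:lem:basins}(d)), we get $\norm{\nabla g(\vx)}\ge\gamma\norm{\vx}$ for a constant $\gamma>0$ depending only on $b$ and the instance.

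Next we write the step as $\sigma=s\norm{\vx}$ and expand. Lemma~\ref{s:lem:smooth} gives
\[
g(\vx+\sigma\vw)-g(\vx)\le\sigma\inner{\nabla g(\vx)}{\vw}+\tfrac{\lambda_1}{2}\sigma^2\norm{\vw}^2 .
\]
Decrease therefore holds whenever $\inner{\hat\vn}{\vw}<-\frac{\lambda_1 s}{2\gamma}\norm{\vw}^2$, where $\hat\vn:=\nabla g(\vx)/\norm{\nabla g(\vx)}$; this uses $\sigma\norm{\nabla g}\ge s\gamma\norm{\vx}^2$ and $\sigma^2=s^2\norm{\vx}^2$. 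The variable $\inner{\hat\vn}{\vw}$ is standard normal in one dimension for every $\vx$. Hence the success probability is at least $\Pr[Z<-\epsilon s\norm{\vw}^2]$ with $\epsilon=\lambda_1/(2\gamma)$, and this bound is uniform in $\vx$. As $s\downarrow0$ it tends to $\Pr[Z<0]=\tfrac12$, by dominated convergence or by splitting on $\norm{\vw}^2\le M$. So for any $p<\tfrac12$ we can choose $\bar s_p>0$ such that $\sigma\le\bar s_p\norm{\vx}$ yields success probability at least $p$, with the strict inequality $g(\vx+\sigma\vw)<g(\vx)$ holding.

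Finally we convert the threshold from $\norm{\vx}$ to $f(\vx)$. By \eqref{s:fnorm} with $\vmu=0$, $\norm{\vx}\ge C_\ell f(\vx)$, so $\sigma\le s_pf(\vx)$ with $s_p:=\bar s_pC_\ell$ implies $\sigma\le\bar s_p\norm{\vx}$.

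The main obstacle is uniformity near the basin boundary of $S_b$: the gradient lower bound must not degrade there. This is exactly why $\theta_b<1$ is needed, which comes from the compact containment $S_b\subset\operatorname{int}K$ in Lemma~\ref{s:lem:basins}(c). It also requires that the graded PL constant be applied on $\Treg{\theta_b}$ rather than at the boundary $\theta=1$, where the constant vanishes.
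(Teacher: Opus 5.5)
Your proof is correct and follows essentially the paper's argument. Both reduce to $\vx\in S_b\subseteq\Treg{\theta_b}$ by periodicity, bound $\norm{\nabla g(\vx)}\ge\gamma\norm{\vx}$, apply the global-smoothness Taylor bound, get an $\vx$-independent success probability tending to $\tfrac12$, and convert the step-size threshold via $\norm{\vx}\ge C_\ell f(\vx)$. The only difference is how you get the gradient bound: you go through Corollary~\ref{s:cor:gradedpl} plus quadratic growth, while the paper uses Lemma~\ref{s:lem:star}, $Q\ge\frac{\lambda_r}{2}\norm{\vx}^2$ and Cauchy--Schwarz directly, which skips the detour through $g$ and gives a constant larger by $\pi^2/4$ --- immaterial for this qualitative lemma.
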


\begin{proof}
By periodicity let $\vx\in S_b\setminus\{0\}\subseteq\Treg{\theta_b}$. By Lemma~\ref{s:lem:deriv}, $\nabla G(\vx)\in V$, so $\nabla g=\nabla G$ on $V$; Lemma~\ref{s:lem:star} and $Q\ge\frac{\lambda_r}{2}\norm{\cdot}^2$ on $V$ give $\norm{\nabla g(\vx)}\ge\inner{\nabla G(\vx)}{\vx}/\norm{\vx}\ge\frac{\sin(\theta_b\pi)}{\theta_b\pi}\lambda_r\norm{\vx}$. By Taylor's formula and Lemma~\ref{s:lem:smooth}, $g(\vx+\sigma\vw)\le g(\vx)+\sigma\inner{\nabla g(\vx)}{\vw}+\frac{\sigma^2\lambda_1}{2}\norm{\vw}^2$. With $\hat{\ve}:=\nabla g(\vx)/\norm{\nabla g(\vx)}$, $s:=\sigma/f(\vx)$, $\norm{\vx}\ge C_\ell f(\vx)$ from \eqref{s:fnorm}, and $\kappa':=\theta_b\pi\kappa_r/(2C_\ell\sin(\theta_b\pi))$,
\[
\bigl\{g(\vx+\sigma\vw)<g(\vx)\bigr\}\;\supseteq\;\Bigl\{\inner{\hat{\ve}}{\vw}<-\frac{\sigma\lambda_1\norm{\vw}^2}{2\norm{\nabla g(\vx)}}\Bigr\}
\;\supseteq\;\bigl\{\inner{\hat{\ve}}{\vw}<-s\,\kappa'\norm{\vw}^2\bigr\}.
\]
By rotation invariance of $\vw$, the last event has a probability $q(s)$ that does not depend on $\vx$, is nonincreasing in $s$, and tends to $\Pr[\inner{\hat{\ve}}{\vw}<0]=\tfrac12$ as $s\downarrow0$ by monotone convergence. Choose $s_p$ with $q(s_p)\ge p$.
\end{proof}

\subsection{Proof of Theorem~\ref{thm:blackbox}}

The assumption on $b$ gives $b<\bstar\le\min\{\lambda_r/8,G_{\mathrm{trap}}\}\le\cstar$ (Lemma~\ref{s:lem:basins}(b)) and $\varepsilon_b:=(128b/\lambda_r)^{r/2}<\ptgt$, so in particular $b<\lambda_r/128$.

\emph{(i) Confinement.} By Lemma~\ref{s:lem:proj}, $\vx_t:=P\vdelta^t$ is the $(1{+}1)$-ES on $g$ with the same step sizes. By elitism, $g(\vx_t)\le g(\vx_0)\le b<\cstar$, so by Lemma~\ref{s:lem:basins}(c), $\vx_t\in\vmu_t+S_b$ for a unique $\vmu_t=P(2\pi\vn_t)\in\Lat$, and $\dq(\vdelta^t-2\pi\vn_t)=\dq(\vx_t-\vmu_t)<\pi/4$ by gauge invariance of $\dq$. Thus $\vdelta^t\in2\pi\Z^N+\Treg{1}$; the basin may change with~$t$.

\emph{(ii)--(iii) Rates.} The ES is comparison-based and $h$ is strictly increasing, so the ES on $g$ and on $f=h\circ g$ generate identical trajectories. Apply Lemma~\ref{s:lem:aagm} to $f$ with $f_{\max}:=h(b)$, $\vmu(\vx)$ the lattice point of the basin of $\vx$, and $\mathcal S(\vx):=\vmu(\vx)+S_{g(\vx)}$; by Lemma~\ref{s:lem:basins}(c), $\{f\le f(\vx)\}=\bigsqcup_{\vmu\in\Lat}(\vmu+S_{g(\vx)})$. Then (H1) and (H2) follow from \eqref{s:fnorm}, as $h$ maps $(0,g(\vx)]$ onto $(0,f(\vx)]$; (H3) holds with $p_{\mathrm{jump}}=\varepsilon_b$ by Lemma~\ref{s:lem:images} at $c=g(\vx)\le b$; and (H4) holds with $p^{\mathrm{limit}}=\tfrac12>\ptgt$ by Lemma~\ref{s:lem:improv}. The constants of Lemma~\ref{s:lem:aagm} thus depend only on $r,\kappa_r,\theta_b,\varepsilon_b$ and $(\aup,\adown)$.

By \eqref{s:hbounds}, $G(\vdelta^t)=g(\vx_t)\le\lambda_1f(\vx_t)^2/(2\omega_r^{2/r})$, so \eqref{s:eq:aagm} gives $\limsup_t\frac1t\ln G(\vdelta^t)\le-2B$ almost surely, which is (iii).
For (ii), let $\varepsilon<G(\vdelta^0)$ (otherwise $T_\varepsilon=0$) and $\eta:=\omega_r^{1/r}\sqrt{2\varepsilon/\lambda_1}$. Then $T_\varepsilon\le\tau_\eta$ and $\eta<f(\vx_0)$, so \eqref{s:eq:aagm} applies, with $\ln(f(\vx_0)/\eta)\le\frac12\ln(G(\vdelta^0)/\varepsilon)+\ln(\frac{\pi}{2}\sqrt{\kappa_r})$ by \eqref{s:hbounds} and $|\ln(\sigma_0/f(\vx_0))|\le|\ln(\sigma_0/\norm{\vx_0-\vmu_0})|+\max\{|\ln C_\ell|,|\ln C_u|\}$ by \eqref{s:fnorm}. Here $\norm{\vx_0-\vmu_0}=\min_{\vmu\in\Lat}\norm{\vx_0-\vmu}=\operatorname{dist}_2(\vdelta^0,\Zset)$, since $\norm{\vx_0-\vmu_0}<\pi/16$ by Lemma~\ref{s:lem:basins}(d) while distinct lattice points are at least $\pi/2$ apart (Lemma~\ref{s:lem:lattice}), and $\Zset=2\pi\Z^N+\Gauge$. This gives (ii) with
\[
C_1:=\frac{\nu}{B},\qquad C_0:=\frac{1}{B}\Bigl(A'+\ln\bigl(\tfrac{\pi}{2}\sqrt{\kappa_r}\bigr)+\nu\max\{|\ln C_\ell|,|\ln C_u|\}\Bigr).\qquad\qed
\]

\subsection{Proof of Proposition~\ref{prop:local}}

On $V$, since $0\le x^2-\sin^2x\le x^4/3$ and $\ell_k^2\le\norm{\va_k}^2\norm{\vx}^2\le\tfrac32\norm{\vx}^2$ (each $\va_k$ has at most four entries in $\{\pm\tfrac12,\pm1\}$, and $\norm{\va_k}^2\in\{0,1,\tfrac32\}$),
\[
0\;\le\;Q(\vx)-G(\vx)\;=\;\sum_kw_k\bigl(\ell_k^2-\sin^2\ell_k\bigr)\;\le\;\frac{\norm{\vx}^2}{2}\,Q(\vx)\;\le\;\frac{\lambda_1}{4}\norm{\vx}^4 .
\]
Fix $\hat{\vy}$ and $s$, and write $\Delta_Q:=Q(\hat{\vy}+s\vw)-Q(\hat{\vy})$ and $\Delta_G:=\rho^{-2}[G(\rho(\hat{\vy}+s\vw))-G(\rho\hat{\vy})]$; by homogeneity of $Q$, the two acceptance indicators are $\Ind{\Delta_Q\le0}$ and $\Ind{\Delta_G\le0}$. The display gives $|\Delta_G-\Delta_Q|\le\frac{\lambda_1\rho^2}{4}(\norm{\hat{\vy}+s\vw}^4+1)$, so for any $M>0$ the indicators differ with probability at most
\[
\Pr[\norm{\vw}>M]\;+\;\Pr\bigl[|\Delta_Q|\le\eta_\rho\bigr],\qquad \eta_\rho:=\tfrac{\lambda_1\rho^2}{4}\bigl((1+sM)^4+1\bigr).
\]
Now $\Delta_Q=s\inner{H\hat{\vy}}{\vw}+\frac{s^2}{2}\vw^{\!\top}\!H\vw$ is a nonconstant polynomial in $\vw$ ($H\hat{\vy}\neq0$ as $\hat{\vy}\in\operatorname{range}H$), so its law has no atoms and $u_\eta(\hat{\vy},s):=\Pr[|\Delta_Q|\le\eta]\downarrow0$ as $\eta\downarrow0$. Each $u_\eta$ is upper semicontinuous (portmanteau; $\Delta_Q$ is continuous in $(\hat{\vy},s,\vw)$), so by Dini's theorem the convergence is uniform on the compact set $\{\norm{\hat{\vy}}=1\}\times[s_1,s_2]$. Choosing $M$ large and then $\rho$ small proves the first claim. For the second, the next state $(\vy+\sigma\vw\,\Ind{\text{accept}},\,\sigma\aup^{\Ind{\text{accept}}}\adown^{1-\Ind{\text{accept}}})$ is the same function of $\vw$ under $G$ and $Q$ outside the disagreement event, so the total-variation distance between the two transition laws is at most its probability. \qed

\subsection{Sphere reference constants}

\begin{lemma}[Asymptotic progress rate on the Sphere \cite{Beyer2001}]\label{s:lem:rate}
Consider the $(1{+}1)$-ES on an isotropic quadratic in $\R^r$ at distance $\rho$ from the optimum, with normalized step $\sigma^\star:=\sigma r/\rho$. As $r\to\infty$, its normalized progress rate $\psi^\star:=r\,\E[(\rho-\rho')/\rho]$ and its success probability converge to
\begin{equation}
\psi^\star(\sigma^\star)=\sigma^\star\phi\bigl(\tfrac{\sigma^\star}{2}\bigr)-\frac{\sigma^{\star2}}{2}\Bigl(1-\Phi\bigl(\tfrac{\sigma^\star}{2}\bigr)\Bigr),
\qquad
P_{\mathrm{succ}}=1-\Phi\bigl(\tfrac{\sigma^\star}{2}\bigr),
\label{s:phistar}
\end{equation}
where $\phi,\Phi$ are the standard normal density and distribution function. $\psi^\star$ is maximal at $\sigma^\star_{\mathrm{opt}}=1.2240$, with $\psi^\star_{\max}=0.2025$ and $P_{\mathrm{succ}}=0.2703$; at $P_{\mathrm{succ}}=1/5$, $\sigma^\star_{1/5}=2\Phi^{-1}(0.8)=1.6832$ and $\psi^\star_{1/5}=0.1879$. The resulting log-rate of the objective $\propto\rho^2$ is $2\psi^\star/r$.
\end{lemma}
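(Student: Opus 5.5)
The plan is to reduce the one-step dynamics to a scalar statistic that has an explicit Gaussian limit, and then read off both formulas in \eqref{s:phistar}. Place the optimum at the origin and the parent at $\vy$ with $\norm{\vy}=\rho$. Decompose the mutation $\vz\sim\Nrm(0,I_r)$ as $\vz=-z_1\hat{\vy}+\vz_\perp$, so that $z_1\sim\Nrm(0,1)$ and $\norm{\vz}^2=z_1^2+\norm{\vz_\perp}^2$. With $\sigma=\sigma^\star\rho/r$ the candidate's squared distance is exactly
\[
\frac{\norm{\vy+\sigma\vz}^2}{\rho^2}\;=\;1-\frac{1}{r}\Bigl(2\sigma^\star z_1-\sigma^{\star2}\frac{\norm{\vz}^2}{r}\Bigr)\;=:\;1-\frac{X_r}{r}.
\]
Elitism gives $\rho'^2/\rho^2=1-X_r^+/r$.

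\emph{Step 1 (distributional limit).} By the law of large numbers $\norm{\vz}^2/r\to1$ in probability, so by Slutsky $X_r\to X:=2\sigma^\star z_1-\sigma^{\star2}$ in distribution. Acceptance is the event $X_r>0$. Because $X$ is atomless, $P_{\mathrm{succ}}\to\Pr[X>0]=\Pr[z_1>\sigma^\star/2]=1-\Phi(\sigma^\star/2)$.

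\emph{Step 2 (progress rate).} I will write
\[
r\,\frac{\rho-\rho'}{\rho}=r\Bigl(1-\sqrt{1-X_r^+/r}\Bigr)=\tfrac12X_r^+ + O\bigl((X_r^+)^2/r\bigr).
\]
This quantity is bounded above by $X_r^+$, because $1-\sqrt{1-u}\le u$ on $[0,1]$. It is also nonnegative, which holds since $X_r^+\le r$ automatically. The estimate $\E[(X_r^+)^2]\le 8\sigma^{\star2}+O(1)$ holds uniformly in $r$, which gives uniform integrability. Hence $\psi^\star\to\tfrac12\E[X^+]=\E[(\sigma^\star z_1-\sigma^{\star2}/2)^+]$. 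The standard Gaussian identity $\E[(z-a)^+]=\phi(a)-a(1-\Phi(a))$ with $a=\sigma^\star/2$, multiplied by $\sigma^\star$, then yields the stated closed form for $\psi^\star(\sigma^\star)$.

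\emph{Step 3 (constants).} Differentiating $\psi^\star$ in $\sigma^\star$ gives the stationarity condition. Solving it numerically gives $\sigma^\star_{\mathrm{opt}}\approx1.2240$, and evaluating at that point gives $\psi^\star_{\max}\approx0.2025$ and $1-\Phi(0.612)\approx0.2703$. Setting $1-\Phi(\sigma^\star/2)=1/5$ gives $\sigma^\star_{1/5}=2\Phi^{-1}(0.8)=1.6832$, and substituting this value gives $\psi^\star_{1/5}\approx0.1879$. For the objective $\propto\rho^2$, I will use $\ln(\rho'^2/\rho^2)=2\ln(1-(\rho-\rho')/\rho)$. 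This equals $-2(\rho-\rho')/\rho+O(r^{-2})$ in expectation, by the same uniform integrability bound. The expected log-decrease per step is therefore $2\psi^\star/r$ to leading order.

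The main obstacle is Step 2: passing from convergence in distribution to convergence of expectations. This needs a moment bound on $X_r^+$ uniform in $r$, together with control of the square-root linearization error. I would handle both with the elementary inequalities above and Cauchy--Schwarz on $\norm{\vz}^2/r$. Everything else is the classical computation of \cite{Beyer2001}.
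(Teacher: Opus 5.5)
Your derivation is correct. It takes a different route from the paper only because the paper gives no proof of this lemma at all: it imports the $r\to\infty$ formulas and the numerical constants from \cite{Beyer2001} and uses them solely as reference values in Section~\ref{sec:numerical}. Your argument is self-contained. It rests on the exact identity $\rho'^2/\rho^2=1-X_r^+/r$, on Slutsky's lemma for $X_r\to X=2\sigma^\star z_1-\sigma^{\star2}$, on a passage to the limit of expectations, and on $\E[(z-a)^+]=\phi(a)-a(1-\Phi(a))$. This makes explicit in what sense $\psi^\star$ and $P_{\mathrm{succ}}$ ``converge'', which the bare citation leaves implicit; the citation buys only brevity.

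Two small tightenings. First, domination is immediate from $0\le X_r^+\le 2\sigma^\star z_1^+$, because the term $\sigma^{\star2}\norm{\vz}^2/r$ is nonnegative; dominated convergence therefore replaces your second-moment estimate. Second, ``maximal'' needs more than a stationary point. Write $a=\sigma^\star/2$ and let $m=\phi/(1-\Phi)$ be the inverse Mills ratio. Then $\frac{\mathrm{d}\psi^\star}{\mathrm{d}\sigma^\star}=\phi(a)-2a(1-\Phi(a))=(1-\Phi(a))\bigl(m(a)-2a\bigr)$. Since $0<m'<1$, the function $m(a)-2a$ decreases strictly from $m(0)=\sqrt{2/\pi}>0$, so the zero is unique and is the global maximum. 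Your constants check out: $1.2240$, $0.2025$, $0.2703$, $1.6832$, $0.1879$.

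The one step that does not go through as written is the log-rate. The bound that controls the square-root linearization does not control $\ln(1-u)$, with $u:=X_r^+/r$, because the logarithm is unbounded as $u\to1$. The fix is to split the expectation. On $\{u\le\tfrac12\}$ you have $|\ln(1-u)+u|\le u^2$, which contributes $O(r^{-2})$. The complement satisfies $\{u>\tfrac12\}\subseteq\{z_1>r/(4\sigma^\star)\}$, which has probability $e^{-\Omega(r^2)}$. On that event, $|\ln(1-u)|=\bigl|\ln\norm{\hat{\vy}+(\sigma^\star/r)\vz}^2\bigr|$. This variable has $O(1)$ second moment: the logarithmic singularity is integrable and sits at distance $1$ from the mean $\hat{\vy}$. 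Hence, by Cauchy--Schwarz, this piece is $e^{-\Omega(r^2)}$.

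What you then obtain is the one-step expected log-decrease at fixed $\sigma^\star$. That is exactly how the paper uses $2\psi^\star_{1/5}/r$: as a reference value, not as the stationary rate of the adaptive ES.
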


\noindent These are the reference constants of Section~\ref{sec:numerical}. The progress rate $\psi^\star$ is written $\varphi^\star$ in the ES literature; we rename it because $\varphi$ is the spectral phase.

\section{Sharpness of the trust radii: the comb witness}\label{s:comb}

\begin{proposition}[Exact convexity region]\label{cx:prop}
$G$ is convex on the convex set $\Treg{1/2}$, and
$\cos(\theta\pi)\lambda_r$-strongly convex modulo gauge on
$\Treg{\theta}$ for $\theta<\tfrac12$; it is nowhere strictly convex
($G$ is constant along the two-dimensional gauge $\Gauge$). The radius
is exact: along the alternating comb $\vu=((-1)^i)_{i}$,
\begin{equation}
\frac{d^2}{dc^2}\,G(c\,\vu)\;=\;C\cos(4c),
\qquad
C=\vu^{\!\top}H\vu>0,
\label{cx:comb}
\end{equation}
which is strictly negative for $c\in(\pi/8,\,3\pi/8)$. Hence
$\Treg{1/2}$ is the largest ball of the family
$\{\Treg{\theta}\}$ on which $G$ is convex, for the SHG instance
itself and not merely for the class.
\end{proposition}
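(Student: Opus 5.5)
The proof has three parts: the convexity claims, the absence of strict convexity, and the comb computation that shows the radius cannot be enlarged.

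\textbf{Convexity and strong convexity.} These are exactly the first part of Theorem~\ref{thm:whitebox-structure}. On $\Treg{\theta}$ with $\theta\le\tfrac12$ we have $|2\ell_k|\le\theta\pi\le\pi/2$. Lemma~\ref{s:lem:deriv} then gives $\nabla^2G\succeq\cos(\theta\pi)H$. I will also note that $\Treg{\theta}$ is convex, since it is a sublevel set of the seminorm $\dq$; this is what makes the Hessian criterion apply.

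\textbf{No strict convexity.} This is immediate from $G(\vdelta+\vg)=G(\vdelta)$ for all $\vg\in\Gauge$: $G$ is constant on every line parallel to $\Gauge$, and such lines lie inside $\Treg{\theta}$ because $\dq$ is gauge invariant.

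\textbf{The comb identity.} First, for $\va_k=\tfrac12(\ve_i+\ve_{s-i}-\ve_p-\ve_{s-p})$, each entry of $\vu$ is $(-1)^j$, and $(-1)^{s-i}=(-1)^s(-1)^i$. Hence
\[
\inner{\va_k}{\vu}=\tfrac12\bigl(1+(-1)^s\bigr)\bigl((-1)^i-(-1)^p\bigr).
\]
This value is $\pm2$ when $s$ is even and $i,p$ have opposite parity, and $0$ otherwise. So every active form satisfies $\ell_k^2=4c^2$ along $c\vu$. By Lemma~\ref{s:lem:deriv},
\[
\tfrac{d^2}{dc^2}G(c\vu)=\sum_k 2w_k\cos(2c\,\inner{\va_k}{\vu})\inner{\va_k}{\vu}^2 .
\]
On the active terms the cosine equals $\cos(4c)$, and the inactive terms have $\inner{\va_k}{\vu}=0$, so they vanish. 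Factoring out $\cos(4c)$ leaves $\sum_k2w_k\inner{\va_k}{\vu}^2=\vu^{\!\top}H\vu=C$, which gives~\eqref{cx:comb}.

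\textbf{Positivity of $C$.} Since $H\succeq0$, it suffices to show $\vu\notin\ker H=\Gauge$. On the window $W$ with $|W|\ge3$, the second difference of $\vu$ at an interior index is $\pm4\neq0$, while affine phases have zero second difference. Equivalently, one can exhibit a positive-weight term with $s=2i$ and pair $(i,i)$ versus $(i-1,i+1)$. As a result, $C\cos(4c)<0$ exactly for $4c\in(\pi/2,3\pi/2)$, that is, for $c\in(\pi/8,3\pi/8)$.

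\textbf{Sharpness of the radius.} This step needs care. For each $\theta>\tfrac12$ I must find a point of $\Treg{\theta}$ where the Hessian is indefinite, and I will take $c\vu$ with $c$ slightly above $\pi/8$, so that $c\le\theta\pi/4$ and $c\vu\in\Treg{\theta}$. This requires $\dq(c\vu)\le c$, which holds trivially because $\dq(c\vu)\le\norm{c\vu}_\infty=c$. No exact evaluation of $\dq$ is needed here; that is only required at $c=\pi/4$ in Proposition~\ref{cx:crit}.

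Negative curvature in the direction $\vu$ shows $\nabla^2G(c\vu)$ is not PSD. Since $\Treg{\theta}$ is convex with nonempty interior, a $C^2$ convex function on it would need a PSD Hessian at every point. To avoid boundary subtleties I will take $c<\theta\pi/4$ strictly, so that $c\vu$ is interior. Hence $G$ is not convex on any $\Treg{\theta}$ with $\theta>\tfrac12$, and since this uses only the SHG weights, the sharpness holds for the instance itself.
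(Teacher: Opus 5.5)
Your proposal is correct and follows the paper's proof essentially step for step: you use the termwise bound $\nabla^2G\succeq\cos(\theta\pi)H$ from Lemma~\ref{s:lem:deriv}, the parity computation $\inner{\va_k}{\vu}\in\{0,\pm2\}$ that gives $C\cos(4c)$, positivity of $C$ from $|W|\ge3$, and negative curvature of the comb just past $\pi/8$. You also make explicit two points the paper leaves implicit: that $c\vu\in\operatorname{int}\Treg{\theta}$ for $\pi/8<c<\theta\pi/4$ via the trivial bound $\dq(c\vu)\le c$, and the gauge-invariance argument for the absence of strict convexity.
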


\begin{proof}
Convexity and strong convexity are the termwise Hessian bounds
\[\nabla^2G(\vdelta)=\sum_k2w_k\cos(2\inner{\va_k}{\vdelta})
\va_k^{\vphantom{\top}}\va_k^{\!\top}
\succeq\cos(\theta\pi)H
\]
on $\Treg{\theta}$, $\theta\le\tfrac12$ (Theorem~\ref{thm:whitebox-structure}).
For the comb: $\inner{\va_k}{\vu}
=\tfrac12[(-1)^i+(-1)^{s-i}-(-1)^p-(-1)^{s-p}]$ vanishes on odd
shells and on same-parity pairs, and equals $\pm2$ on parity-mixed
pairs of even shells; there $\inner{\va_k}{c\vu}=\pm2c$. Hence
\[
\frac{d^2}{dc^2}G(c\vu)
=\sum_k2w_k\cos\!\bigl(2\inner{\va_k}{c\vu}\bigr)\inner{\va_k}{\vu}^2
=\cos(4c)\!\!\sum_{k\ \mathrm{mixed}}\!\!8w_k
=C\cos(4c),
\]
with $C=\vu^{\!\top}H\vu=\sum_k2w_k\inner{\va_k}{\vu}^2>0$: the
hypothesis $|W|\ge3$ supplies an even shell carrying a parity-mixed
pair of positive weight, hence at least one active term. For
$c\in(\pi/8,3\pi/8)$ the directional curvature is negative, so
$\nabla^2G(c\vu)\not\succeq0$: no $\Treg{\theta}$ with
$\theta>\tfrac12$ is a convexity region.
\end{proof}

\begin{proposition}[Critical point on the quarter-wave boundary]
\label{cx:crit}
Let $\vdelta=\tfrac{\pi}{4}\vu$ with $\vu=((-1)^i)_i$ and $|W|\ge3$. Then
$\vdelta\in\partial\Treg{1}$, $\nabla G(\vdelta)=\mathbf 0$, and
$G(\vdelta)=C/8>0$ with $C=\vu^{\!\top}H\vu$ as in \eqref{cx:comb}. Hence no
star-convexity or PL constant survives at $\theta=1$ --- for the Quantum
Sphere itself, not merely for the class. 
\end{proposition}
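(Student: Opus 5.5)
The plan is to reduce everything to the comb computation already carried out in the proof of Proposition~\ref{cx:prop}. That computation shows $\ell_k=\inner{\va_k}{c\vu}$ takes only two kinds of values. It vanishes on odd shells and on same-parity pairs. It equals $\pm2c$ on the parity-mixed pairs of even shells, and we write $K_{\mathrm{mix}}$ for this set of pairs. Each claim of the statement is then a one-line consequence at $c=\pi/4$.

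\emph{Gap value.} From the class form, $G(c\vu)=\sum_{k\in K_{\mathrm{mix}}}w_k\sin^2(2c)$. On the other hand, $C=\vu^{\!\top}H\vu=\sum_k 2w_k\inner{\va_k}{\vu}^2=8\sum_{k\in K_{\mathrm{mix}}}w_k$. Hence $G(c\vu)=\tfrac{C}{8}\sin^2(2c)$, which gives $G(\tfrac{\pi}{4}\vu)=C/8$. Positivity of $C$ is already established in Proposition~\ref{cx:prop}: $|W|\ge3$ supplies a parity-mixed pair of positive weight on an even shell.

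\emph{Criticality.} By Lemma~\ref{s:lem:deriv}, $\nabla G(\vdelta)=\sum_k w_k\sin(2\ell_k)\va_k$. Every term with $\ell_k=0$ drops out. Every mixed term has $\sin(2\ell_k)=\sin(\pm\pi)=0$, so $\nabla G(\tfrac{\pi}{4}\vu)=\mathbf 0$ exactly.

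\emph{Boundary membership.} This is the one step needing care, namely $\dq(\tfrac{\pi}{4}\vu)=\pi/4$. The upper bound is $\dq(\tfrac{\pi}{4}\vu)\le\norm{\tfrac{\pi}{4}\vu}_\infty=\pi/4$, using $\vg=0$. For the lower bound:
\begin{itemize}
\item Take an interior index $i$ of $W$. The second difference of $\vu$ there is $\pm4$.
\item Second differences annihilate $\Gauge$, since they vanish on affine vectors on $W$. Coordinates outside $W$ are irrelevant, because the three indices $i-1,i,i+1$ all lie in $W$.
\item For any $\vg\in\Gauge$ this gives $4\norm{\tfrac{\pi}{4}\vu-\vg}_\infty\ge|\text{second difference}|=\pi$.
\end{itemize}
Minimizing over $\vg$ gives $\dq\ge\pi/4$, so $\vdelta\in\partial\Treg{1}$.

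\emph{Consequences.} A critical point with $G>0$ violates any inequality of the form $\norm{\nabla G}^2\ge m\,G$ with $m>0$. It also violates any inequality $\inner{\nabla G(\vdelta)}{\vdelta}\ge m\,G(\vdelta)$, since the left side is $0$. So no PL or star-convexity constant can hold on $\Treg{1}$. This matches the degradation of the constant to $0$ as $\theta\to1$ in Corollary~\ref{s:cor:gradedpl} and Lemma~\ref{s:lem:star}.
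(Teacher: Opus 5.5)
Your proposal is correct and follows essentially the paper's proof: the same parity computation on $\inner{\va_k}{\vu}$ gives $\nabla G(\tfrac{\pi}{4}\vu)=\mathbf 0$ and $G=C/8$, and your second-difference bound at an interior index of $W$ is the paper's sum of two consecutive first differences, rewritten via the triangle inequality. Your version is slightly more general. It also covers the enlarged gauge $\Gauge=\ker H$ when $W$ is a proper window, because the three indices $i-1,i,i+1$ stay inside $W$. The paper's proof only tests against $g=\alpha\mathbf 1+\beta\,\vr$, so it does not address that case explicitly.
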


\begin{proof}
By the parity computation of Proposition~\ref{cx:prop},
$\inner{\va_k}{\vu}$ vanishes on odd shells and same-parity pairs and equals
$\pm2$ on parity-mixed pairs of even shells; hence
$\inner{\va_k}{\vdelta}\in\{0,\pm\pi/2\}$ for every $k$, so
$\sin\bigl(2\inner{\va_k}{\vdelta}\bigr)=0$ throughout and
$\nabla G(\vdelta)=\mathbf 0$ by Lemma~\ref{s:lem:deriv}, while
$G(\vdelta)=\sum_{k\ \mathrm{mixed}}w_k=C/8>0$.
For the boundary claim, let $g=\alpha\mathbf 1+\beta\,(0,\dots,N-1)$ be any
gauge element and $M=\norm{\vdelta-g}_\infty$. Consecutive differences of
$\vdelta$ alternate between $+\pi/2$ and $-\pi/2$, so those of $\vdelta-g$
alternate between $\pi/2+\beta$ and $-\pi/2+\beta$; each is at most $2M$ in
modulus, and adding the two bounds gives
$\pi\le|\pi/2+\beta|+|\pi/2-\beta|\le4M$, i.e.\ $M\ge\pi/4$, attained at
$g=0$. Thus $\min_{g\in\Gauge}\norm{\vdelta-g}_\infty=\pi/4$ exactly.
\end{proof}

\begin{remark}[The comb as universal witness]\label{cx:rem:comb}
The same configuration family $\vdelta=c\,((-1)^i)$ certifies three
sharpness statements: at $c=\pi/4$ it attains the sandwich constant $G/Q=4/\pi^2$ exactly and is a critical point with positive gap (Proposition~\ref{cx:crit}); 
at $c=\pi/2$ it is a gauge configuration (slope-$\pi$ affine phase mod $2\pi$); and by \eqref{cx:comb} it pins the convexity radius at $c=\pi/8$ --- the three radii in ratio
$1:2:4$, all read off one direction. Numerically the comb is extremal:
sampled boundary points of larger balls remain positive semidefinite
well past $\pi/8$, so the full set $\{\nabla^2G\succeq0\}$ properly
contains $\Treg{1/2}$ but contains no larger ball.
\end{remark}

\newpage

\section{Explicit gradient and Hessian at arbitrary phase}
\label{app:derivatives}

Throughout, $m,n\in\{0,\dots,N-1\}$ index the pulse-shaper pixels, and we
retain the per-shell notation $\alpha^{(s)}_i=A_iA_{s-i}$,
$v^{(s)}_i=\delta_i+\delta_{s-i}$ of Corollary~\ref{cor:sin2-disc}.
Define the \emph{shell amplitude}
\begin{equation}
F_s(\vdelta) \;:=\; \sum_{i\in S_s} \alpha^{(s)}_i\,
e^{\,\mathrm{i}v^{(s)}_i},
\qquad Y(\vdelta)=\sum_{s=0}^{2N-2}\bigl|F_s(\vdelta)\bigr|^{2},
\label{eq:shell-amp}
\end{equation}
and the symmetric complex array
\begin{equation}
Z_{mn}(\vdelta) \;:=\; A_mA_n\,e^{\,\mathrm{i}(\delta_m+\delta_n)}\,
\cj{F_{m+n}(\vdelta)} ,
\label{eq:Zmn}
\end{equation}
which is well defined since $m+n\le 2N-2$. The conjugation in
\eqref{eq:Zmn} is essential: replacing $\cj{F_s}$ by $F_s$ in
\eqref{eq:grad} does not reproduce the real form \eqref{eq:grad-real}.
Since $G=Y(0)-Y$, all derivatives of $G$ are those of $Y$ with the
opposite sign.

\subsection{Gradient}
\begin{proposition}[Gradient]\label{prop:grad}
For every $\vdelta\in\R^{N}$ and every $m$,
\begin{equation}
\frac{\partial G}{\partial \delta_m}
 \;=\; 4\sum_{s:\,m\in S_s}\alpha^{(s)}_m\,
 \Im\!\left(e^{\,\mathrm{i}v^{(s)}_m}\,\cj{F_s}\right)
 \;=\; 4\sum_{n=0}^{N-1}\Im\, Z_{mn} .
\label{eq:grad}
\end{equation}
Equivalently, in purely real form,
\begin{equation}
\frac{\partial G}{\partial \delta_m}
 = 4\!\!\sum_{s:\,m\in S_s}\!\!\alpha^{(s)}_m
 \sum_{p\in S_s}\alpha^{(s)}_p\,
 \sin\!\bigl(v^{(s)}_m-v^{(s)}_p\bigr).
\label{eq:grad-real}
\end{equation}
\end{proposition}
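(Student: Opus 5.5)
We differentiate the shell form $Y(\vdelta)=\sum_s|F_s(\vdelta)|^2$ of \eqref{eq:shell-amp} directly, then convert the result to the array $Z_{mn}$ of \eqref{eq:Zmn} and to real form. Since $G=Y(0)-Y$, we have $\partial_m G=-\partial_m Y$.

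\emph{Step 1: derivative of one shell.} We compute $\partial_m|F_s|^2=2\Re\bigl(\cj{F_s}\,\partial_m F_s\bigr)$. The phase $v^{(s)}_i=\delta_i+\delta_{s-i}$ depends on $\delta_m$ with coefficient $\Ind{i=m}+\Ind{s-i=m}$. Hence $\partial_m F_s=\mathrm{i}\sum_{i\in S_s}\alpha^{(s)}_i e^{\mathrm{i}v^{(s)}_i}\bigl(\Ind{i=m}+\Ind{i=s-m}\bigr)$. By the reversal symmetry $\alpha^{(s)}_{s-m}=\alpha^{(s)}_m$ and $v^{(s)}_{s-m}=v^{(s)}_m$ used in Corollary~\ref{cor:dom}, the two indicator contributions coincide. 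This gives $\partial_m F_s=2\mathrm{i}\,\alpha^{(s)}_m e^{\mathrm{i}v^{(s)}_m}$ whenever $m\in S_s$, and $0$ otherwise. The diagonal case $m=s-m$ needs care: there $v^{(s)}_m=2\delta_m$, and differentiating gives the factor $2$ from a single term, so the formula still holds. I expect this bookkeeping, together with the sign/conjugation convention below, to be the only real obstacle.

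\emph{Step 2: assemble.} Using $\Re(\mathrm{i}z)=-\Im z$, we get $\partial_m Y=\sum_{s\ni m}2\Re\bigl(2\mathrm{i}\,\alpha^{(s)}_m e^{\mathrm{i}v^{(s)}_m}\cj{F_s}\bigr)=-4\sum_{s\ni m}\alpha^{(s)}_m\Im\bigl(e^{\mathrm{i}v^{(s)}_m}\cj{F_s}\bigr)$. Negating yields the first equality in \eqref{eq:grad}. For the second equality, reindex shells containing $m$ by $n=s-m\in\{0,\dots,N-1\}$. Then $\alpha^{(s)}_m e^{\mathrm{i}v^{(s)}_m}\cj{F_s}=A_mA_n e^{\mathrm{i}(\delta_m+\delta_n)}\cj{F_{m+n}}=Z_{mn}$. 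The correspondence $s\leftrightarrow n$ is a bijection, because $m\in S_s$ exactly when $0\le s-m\le N-1$. This also shows why the conjugate is needed: only $e^{\mathrm{i}v}\cj{F}$ produces differences of phases.

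\emph{Step 3: real form.} Expand $\cj{F_s}=\sum_{p\in S_s}\alpha^{(s)}_p e^{-\mathrm{i}v^{(s)}_p}$. Then $\Im\bigl(e^{\mathrm{i}v^{(s)}_m}\cj{F_s}\bigr)=\sum_p\alpha^{(s)}_p\sin\bigl(v^{(s)}_m-v^{(s)}_p\bigr)$, which gives \eqref{eq:grad-real}.

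As a consistency check, we compare with Lemma~\ref{s:lem:deriv}. For $\nabla G=\sum_k w_k\sin(2\ell_k)\va_k$, the $m$-th component over $k=(s,i,p)$ with $w_k=2\alpha_i\alpha_p$ and $2\ell_k=v_i-v_p$ collects coefficient $\tfrac12$ from each occurrence of $m$ in $\{i,s-i\}$ (with sign $+$) and in $\{p,s-p\}$ (with sign $-$). By the antisymmetry of the sine, the $p$-occurrences double the $i$-occurrences, and together these reproduce the factor $4$.
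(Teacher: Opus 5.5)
Your proposal is correct and follows the paper's proof step for step. You obtain the shell derivative $\partial_m F_s=2\mathrm{i}\,\alpha^{(s)}_m e^{\mathrm{i}v^{(s)}_m}$ from the reversal symmetry, handling the diagonal shell $s=2m$ separately, then use $\partial_m|F_s|^2=2\Re(\cj{F_s}\,\partial_mF_s)$, reindex by $s=m+n$ for the $Z$-form, and expand $\cj{F_s}$ for the real form. Your closing cross-check against the class-form gradient of Lemma~\ref{s:lem:deriv} is also correct, and the paper does not include it.
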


\begin{proof}
Differentiating \eqref{eq:shell-amp} gives
$\partial_m F_s = 2\mathrm{i}\,\alpha^{(s)}_m e^{\mathrm{i}v^{(s)}_m}$ for
$m\in S_s$ (the two incidences $i=m$ and $i=s-m$ contribute equally, by
$\alpha^{(s)}_{s-m}=\alpha^{(s)}_m$ and $v^{(s)}_{s-m}=v^{(s)}_m$; for
$s=2m$ they coincide and the factor $2$ persists). Hence
$\partial_m|F_s|^2 = 2\Re(\cj{F_s}\,\partial_mF_s)
= -4\alpha^{(s)}_m\Im(e^{\mathrm{i}v^{(s)}_m}\cj{F_s})$, and
$\partial_m G=-\partial_m Y$ yields \eqref{eq:grad}. Reindexing $s=m+n$
gives the $Z$-form; expanding $\Im(e^{\mathrm{i}v_m}\cj{F_s})$
gives \eqref{eq:grad-real}.
\end{proof}

\subsection{Hessian}
\begin{proposition}[Hessian]\label{prop:hess}
Let
\begin{equation}
C_{mn}(\vdelta) := \!\!\sum_{s:\,m,n\in S_s}\!\!
 \alpha^{(s)}_m\alpha^{(s)}_n
 \cos\!\bigl(v^{(s)}_m-v^{(s)}_n\bigr),
\label{eq:Cmn}
\end{equation}
the sum running over $\max(m,n)\le s\le\min(m,n)+N-1$. Then for every
$\vdelta\in\R^{N}$,
\begin{equation}
H_{mn}(\vdelta)=\frac{\partial^{2}G}{\partial\delta_m\partial\delta_n}
 = -8\,C_{mn} + 4\,\Re Z_{mn}
 + 4\,\Ind{m=n}\sum_{k=0}^{N-1}\Re Z_{mk}.
\label{eq:hess-general}
\end{equation}
Both $C$ and $Z$ are symmetric, so $H$ is symmetric.
\end{proposition}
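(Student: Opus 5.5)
The plan is to differentiate the gradient formula of Proposition~\ref{prop:grad} once more, in its shell form
\[
\partial_m G=4\sum_{s:\,m\in S_s}\alpha^{(s)}_m\,\Im\bigl(e^{\mathrm{i}v^{(s)}_m}\,\cj{F_s}\bigr).
\]
The weights $\alpha^{(s)}_m$ do not depend on $\vdelta$. Hence $\partial_n$ acts only on the two phase-dependent factors: the phasor $e^{\mathrm{i}v^{(s)}_m}$ and the conjugated shell amplitude $\cj{F_s}$. By the product rule, $\partial_n\partial_m G$ splits into two sums, and I will identify the first with the $Z$-terms and the second with the $C$-term.

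\emph{Phasor term.} Since $v^{(s)}_m=\delta_m+\delta_{s-m}$, we have
\[
\partial_n e^{\mathrm{i}v^{(s)}_m}=\mathrm{i}\bigl(\Ind{n=m}+\Ind{n=s-m}\bigr)e^{\mathrm{i}v^{(s)}_m}.
\]
Using $\Im(\mathrm{i}w)=\Re w$, this part contributes
\[
4\,\Ind{n=m}\sum_{s:\,m\in S_s}\alpha^{(s)}_m\Re\bigl(e^{\mathrm{i}v^{(s)}_m}\cj{F_s}\bigr)
\;+\;4\,\alpha^{(m+n)}_m\Re\bigl(e^{\mathrm{i}v^{(m+n)}_m}\cj{F_{m+n}}\bigr).
\]
Reindexing $s=m+k$ and using \eqref{eq:Zmn}, the first piece is $4\Ind{m=n}\sum_k\Re Z_{mk}$. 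The second piece, the single shell $s=m+n$ (which always contains $m$), is exactly $4\Re Z_{mn}$.

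\emph{Amplitude term.} From the proof of Proposition~\ref{prop:grad}, $\partial_n\cj{F_s}=-2\mathrm{i}\,\alpha^{(s)}_n e^{-\mathrm{i}v^{(s)}_n}$ for $n\in S_s$, and the derivative is $0$ otherwise. Substituting, this part contributes
\[
4\sum_{s:\,m,n\in S_s}\alpha^{(s)}_m\alpha^{(s)}_n\,\Im\bigl(-2\mathrm{i}\,e^{\mathrm{i}(v^{(s)}_m-v^{(s)}_n)}\bigr)
=-8\,C_{mn}.
\]
The shells containing both indices are exactly $\max(m,n)\le s\le\min(m,n)+N-1$, matching \eqref{eq:Cmn}. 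Adding the two parts gives \eqref{eq:hess-general}. Symmetry then follows directly: $C_{mn}$ is symmetric because cosine is even, and $Z_{mn}$ is symmetric by its definition.

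The main obstacle is the incidence bookkeeping on the diagonal. When $n=m$ and $s=2m$, both indicators $\Ind{n=m}$ and $\Ind{n=s-m}$ fire at once. I must check that they are counted separately: one incidence enters the diagonal sum $\sum_k\Re Z_{mk}$ (through the term $k=m$), and the other gives the separate $\Re Z_{mm}$. Neither should be merged or dropped.

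The factor-$2$ convention in $\partial_n F_s$ must also be used consistently. It covers the two incidences $i=n$ and $i=s-n$, including the coincident case $s=2n$, exactly as in the proof of Proposition~\ref{prop:grad}.

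As a final consistency check, I would set $\vdelta=0$ and confirm that the result reproduces $H=\sum_k 2w_k\va_k\va_k^{\!\top}$ from Definition~\ref{def:class}, so that $H(\vdelta)$ annihilates the gauge $\Gauge$ there.
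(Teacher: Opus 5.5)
Your proposal is correct and is essentially the paper's proof. The paper expands $\partial_n\partial_m|F_s|^2=2\Re\bigl(\cj{\partial_nF_s}\,\partial_mF_s+\cj{F_s}\,\partial_n\partial_mF_s\bigr)$, and your ``amplitude'' and ``phasor'' terms are exactly these two pieces (giving $-8C_{mn}$ and the two $Z$-terms, with the same $\Ind{n=m}+\Ind{n=s-m}$ bookkeeping at $s=2m$); you simply obtain them by differentiating the already-computed gradient rather than $|F_s|^2$ directly.
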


\begin{proof}
From $\partial_m F_s=2\mathrm{i}\alpha^{(s)}_m e^{\mathrm{i}v^{(s)}_m}$,
\[
\partial_n\partial_m F_s
 = -2\alpha^{(s)}_m e^{\mathrm{i}v^{(s)}_m}
 \bigl(\Ind{n=m}+\Ind{n=s-m}\bigr).
\]
Then $\partial_n\partial_m|F_s|^2
 = 2\Re\bigl(\cj{\partial_nF_s}\,\partial_mF_s
 +\cj{F_s}\,\partial_n\partial_mF_s\bigr)$. The first term gives
$8\alpha^{(s)}_m\alpha^{(s)}_n\cos(v^{(s)}_m-v^{(s)}_n)$; the second gives
$-4\alpha^{(s)}_m\Re(e^{\mathrm{i}v^{(s)}_m}\cj{F_s})
 (\Ind{n=m}+\Ind{n=s-m})$, where the indicator $\Ind{n=s-m}$
selects the single shell $s=m+n$ and contributes $-4\Re Z_{mn}$, while
$\Ind{n=m}$ contributes the diagonal sum. Negating
($H=-\nabla^2Y$) yields \eqref{eq:hess-general}.
\end{proof}

\begin{remark}[Structural rank-one form]\label{rem:rank1}
Writing $G=\sum_s\sum_{i,p\in S_s}\alpha^{(s)}_i\alpha^{(s)}_p
(1-\cos\theta^{(s)}_{ip})$ with
$\theta^{(s)}_{ip}=v^{(s)}_i-v^{(s)}_p$ and incidence vectors
$\bigl(\vc^{(s)}_{ip}\bigr)_m
 =\Ind{m=i}+\Ind{m=s-i}-\Ind{m=p}-\Ind{m=s-p}$, one obtains the
equivalent coordinate-free form
\begin{equation}
H(\vdelta)=\sum_{s}\sum_{i,p\in S_s}
 \alpha^{(s)}_i\alpha^{(s)}_p\,\cos\theta^{(s)}_{ip}\;
 \vc^{(s)}_{ip}\,\vc^{(s)\top}_{ip},
\label{eq:hess-rank1}
\end{equation}
which exhibits $H$ as a weighted sum of rank-one forms. At $\vdelta=0$
every $\cos\theta^{(s)}_{ip}=1$, so $H(0)\succeq0$ and
$\vu\in\ker H(0)$ iff $\vc^{(s)\top}_{ip}\vu=0$ for all positively
weighted $(s,i,p)$ --- the affine phase space of Corollary~\ref{cor:gauge}.
\end{remark}

\begin{remark}[Consistency at the TL]\label{rem:focal-check}
At $\vdelta=0$ all $F_s$ are real and positive, and
\eqref{eq:hess-general} reduces to
\begin{equation}
H_{mn}(0) = -8A_mA_n\!\!\sum_{s}\!A_{s-m}A_{s-n}
 + 4A_mA_nF_{m+n}
 + 4\,\Ind{m=n}\,T_m,
\end{equation}
with $T_m=A_m\sum_{s}A_{s-m}F_s$. Negating gives $\nabla^2Y(0)$, whose
three terms match, one for one and with matching signs $(-4,-4,+8)$, the
diagonal, sum-argument, and difference-argument terms of the analytic SHG Hessian, confirming $H=-H_{\mathrm{SHG}}$.
\end{remark}

\begin{remark}[Continuum analogue and computational cost]\label{rem:cont-cost}
Replacing each per-shell sum by $\int\dd\omega'$ against
$\mu_\omega(\omega')$ and $\sum_s$ by $\int\dd\omega$ turns
\eqref{eq:grad}--\eqref{eq:hess-general} into their continuum
counterparts, with $F_s\mapsto E_2(\omega)$.
Computationally, $F_s=(g*g)_s$ is the self-convolution of
$g_i=A_ie^{\mathrm{i}\delta_i}$, so $Y$, $\nabla G$, and any
Hessian-vector product are obtainable in $\mathcal{O}(N\log N)$ by FFT.
The dense Hessian costs $\mathcal{O}(N^2)$ \emph{in total} (rather than $\mathcal{O}(N^3)$): since $v^{(s)}_m-v^{(s)}_n=(\delta_m-\delta_n)+(\delta_{s-m}-\delta_{s-n})$, the
shell sum in \eqref{eq:Cmn} is a lag-$(m-n)$ autocorrelation,
\begin{equation}
C_{mn}(\vdelta)\;=\;\Re\Bigl[\,g_m\,\cj{g_n}\;R(m-n)\Bigr],
\qquad
R(\tau)\;:=\;\sum_{u} g_{u+\tau}\,\cj{g_u},
\label{eq:C-toeplitz}
\end{equation}
with $R$ obtained by a single FFT and each entry then costing
$\mathcal{O}(1)$.\\ 
Equation \eqref{eq:C-toeplitz} also exhibits the structure: the shell core of $H$ is Toeplitz, conjugated by $\operatorname{diag}(\vg)$.
\end{remark}

\end{document}